\newtheorem{definition}{Definition}[section]
\newtheorem{theorem}{Theorem}[section]
\newtheorem{corollary}{Corollary}[theorem]
\newtheorem{lemma}[theorem]{Lemma}
\newtheorem{remark}{Remark}
\title{Short time quaternion quadratic phase Fourier transform and its uncertainty principles}
\author{Bivek Gupta$^a$\thanks{$^a$bivekgupta040792@gmail.com}, Amit K. Verma$^b$\thanks{$^b$akverma@iitp.ac.in}, \\\small{\it{$^{a,b}$ Department of Mathematics,}} \\\small{\it{Indian Institute of Technology Patna,}}\\\small{\it{ Bihta, Patna 801103, (BR) India.}}}
\date{\today}
\begin{document}
\maketitle
\begin{abstract}
In this paper, we extend the quadratic phase Fourier transform of a complex valued functions to that of the quaternion valued functions of two variables. We call it the quaternion quadratic phase Fourier transform (QQPFT). Based on the relation between the QQPFT and the quaternion Fourier transform (QFT) we obtain the sharp Hausdorff-Young inequality for QQPFT. We define the short time quaternion quadratic phase Fourier transform (STQQPFT) and explore some of its properties including inner product relation and inversion formula. We find its relation with that of the 2D quaternion ambiguity function and the quaternion Wigner-Ville distribution associated with QQPFT and obtain the Lieb's uncertainty and entropy uncertainty principles for these three transforms.
\end{abstract}
{\textit{Keywords}:} Quaternion Quadratic Phase Fourier Transform; Short Time Quaternion Quadratic Phase Fourier Transform; Lieb's Uncertainty Principle\\
{\textit{AMS Subject Classification 2020}:} 
42A05, 42B10, 11R52

\section{Introduction}
In \cite{castro2014quadratic},\cite{castro2018new} authors have studied the quadratic phase Fourier transform (QPFT) defined as 
\begin{equation}\label{P5eqn1}
(\mathcal{Q}^{\wedge}f)(\xi)=\int_{\mathbb{R}}\frac{1}{\sqrt{2\pi}}e^{i\left(At^2+Bt\xi+C\xi^2+Dt+E\xi\right)}f(t)dt,~\xi\in\mathbb{R},
\end{equation}
where $f\in L^2(\mathbb{R},\mathbb{C}),$ $\wedge=(A,B,C,D,E),~B\neq 0$
which generalizes the classical Fourier transform (FT). Several other important integral transforms like fractional Fourier transform (FrFT) \cite{almeida1994fractional},\cite{namias1980fractional}, linear canonical transform (LCT), Fresnel transform and Lorentz transform  can be obtained by choosing $\wedge$ appropriately and amplifying \eqref{P5eqn1} with suitable constants. Along with several important properties like Riemann-Lebesgue lemma, Plancherel theorem, authors in \cite{castro2018new} have given several convolution and obtained the convolution theorem associated with the QPFT. Recently, Shah et al. \cite{shah2021uncertainty} generalized several uncertainty principles for the FT, FrFT (\cite{verma2021note}) and LCT for the QPFT defined in \eqref{P5eqn1}.  Even though QPFT generalizes several integral transforms as mentioned above, but due to the presence of global kernel it fails in giving the local quadratic phase spectrum content of non-transient signals. To overcome this, Shah et al. (\cite{shah2021short}) formulated a short time quadratic phase Fourier transform (STQPFT) and studied its important properties. They have generalized the  Heisenberg's, logarithmic and local uncertainty principles (UPs)  for FT and fractional FT (\cite{verma2021note},\cite{wilczok2000new}) and Lieb's UP for short time FT (\cite{grochenig2001foundations}) in the context of STQPFT. Apart from STQPFT, wavelet transform and Wigner-Ville distribution associated with the QPFT has also been studied. Shah et al.\cite{shah2022quadratic} proposed a novel quadratic phase Wigner distribution by combining the advantages of Wigner distribution and the QPFT. They obtained several fundamental properties including Moyel's formula and inversion formula. Prasad et al. \cite{prasad2020quadratic} defined the wavelet transform associated with the QPFT, and studied its properties like inversion formula, Parseval's formula and also its continuity on some function spaces.

In 1843 W.R. Hamilton first introduced the quaternion algebra. It is denoted by $\mathbb{H}$ in his honor. In Harmonic analysis and applied mathematics, the FT is an essential tool so its extension to the quaternion valued functions has become an interesting problem. The quaternion Fourier transform (QFT) was introduced by Ell \cite{ell1993quaternion}  for the analysis of $2D$ linear time-invariant partial differential system and later applied it in color image processing\cite{ell2006hypercomplex}. In the analysis of quaternion valued functions quaternion Fourier transform plays a significant role. Because of the non-commutativity of the quaternion multiplication, the Fourier transform of the quaternion valued function on $\mathbb{R}^2$ can be classified into various types, viz., right-sided, left-sided and two-sided Fourier transform \cite{bahri2014continuous},\cite{bahri2008uncertainty},\cite{ell1993quaternion}. Cheng et. al \cite{cheng2019plancherel} gave the inversion theorem and the Plancherel theorem for the right sided QFT, and also obtained its relation  with the left sided and the two sided QFT for the quaternion valued square integrable functions. It transforms a quaternion valued $2D$ signal into a quaternion valued frequency domain signal. 

Lian \cite{lian2018uncertainty}, proved various inequalities like Pitt's inequality, logarithmic UP using the method adopted by Beckner \cite{beckner1995pitt} in the case of  complex variables, entropy UP without using the sharp Hausdorff-Young inequality, for the two-sided QFT with optimal constants, which are same to those obtained in the complex case. The logarithmic UP obtained in \cite{lian2018uncertainty} is different from that given in \cite{chen2015pitt}. In \cite{lian2020sharp}, author obtained the sharp Hausdorff-Young inequality, using the orthogonal plan split of the quaternion \cite{hitzer2013orthogonal}, for the two sided QFT followed by the Hirschman's entropy UP using the standard differential approach. In \cite{lian2020sharpAF}, author has extended the QFT to the Clifford valued function defined on $\mathbb{R}^n,$ namely geometric FT, and derived several sharp inequalities including sharp Hausdorff-Young inequality and sharp Pitt's inequality, followed by the sharp entropy inequality for the Clifford ambiguity functions. Recently, QFT has been extended to the quaternion fractional Fourier transform (QFrFT) and quaternion linear canonical transform (QLCT).

Replacing the kernels $\mathcal{K}^i(t_1,\xi_1)=\frac{1}{\sqrt{2\pi}} e^{-it_1,\xi_1}$ and $\mathcal{K}^j(t_2,\xi_2)=\frac{1}{\sqrt{2\pi}}e^{-jt_2,\xi_2},$ in the definition 
\begin{align}\label{P5eqn2}
(\mathcal{F}_{\mathbb{H}}f)(\boldsymbol \xi)=\int_{\mathbb{R}^2}\mathcal{K}^i(t_1,\xi_1)f(\boldsymbol{t})\mathcal{K}^j(t_2,\xi_2)d\boldsymbol t,~\boldsymbol\xi=(\xi_1,\xi_2)\in\mathbb{R}^2,
\end{align}
of the two sided QFT (\cite{lian2021quaternion}), with that of the kernels  of the FrFT (\cite{namias1980fractional},\cite{almeida1994fractional},\cite{verma2021note}) and LCT, respectively, results in the two-sided quaternion fractional Fourier transform (QFrFT) and the two sided quaternion linear canonical transform (QLCT)\cite{kou2016uncertainty}. Analogously, the right side and the left sides QFrFT and QLCT have been defined in the literature (see \cite{wei2013different}, \cite{kou2016uncertainty} ).
Kou et al. \cite{kou2016uncertainty} adopted the approach by Chen et al. \cite{chen2015pitt} to obtain the energy theorem and proved the Heisenberg's UP for the QLCT. Using the orthogonal plan split method, authors in \cite{kundu2022uncertainty} have obtained the relation of the two-sided QLCT with that of the LCT and obtained some important inequalities and uncertainty principles of two-sided QLCT.

Bahri et al.\cite{bahri2010windowed} generalized the classical windowed Fourier transform to quaternion valued functions of two variables. Using the machinery of the right sided QFT \cite{bahri2008uncertainty}, authors proved several important properties including reconstruction formula, reproducing kernel and orthogonality relation. Following the methods adopted by Wilczok \cite{wilczok2000new}, they also obtained the Heisenberg UP for the QWFT. In \cite{bahri2020uncertainty}, authors gave the alternate proofs of the properties studied in \cite{bahri2010windowed}. They also studied the Pitt's inequality, Lieb's inequality and the logarithmic UP for the two sided QWFT studied in \cite{bahri2010windowed}. Including, the orthogonality property, authors in \cite{kamel2019uncertainty},\cite{brahim2020uncertainty} studied the local UP, logarithmic UP, Beckner's UP in terms of entropy, Lieb's UP, Amrein-Berthier UP for the two sided QWFT. Replacing the Fourier kernel in the left sided, right sided or two sided QWFT by the kernels of the FrFT (or LCT), results in the left sided, right sided and two sided QWFrFT (or QWLCT) respectively. In \cite{fu2012balian} authors have studied the two sided QWFT with the real valued window function and studied its important properties and the associated Balian-Low theorem. In \cite{gao2020quaternion}, authors studied the orthogonality relation along with the Heisenberg's UP for the two sided QWLCT, with quaternion valued window function. Bahri, in \cite{bahri2014two}, has extended the classical ambiguity function (AF) and the Wigner-Ville distribution (WVD) to the quaternion algebra setting , namely, quaternion ambiguity function (QAF) and quaternion Wigner-Ville distribution (QWVD). They studied several important properties including Moyel's principle and reconstruction formula for these two sided QAF and QWVD. Authors in \cite{fan2017quaternion} have extended these two sided QAF and QWVD in the linear canonical domain and obtained the relation among them. They have also studied their important properties like shifting, dilation, reconstruction formula, Moyal's theorem, etc.

Several important properties along with the UPs of the QPFT along with the STQPFT have been studied for the function of complex variables as mentioned above. The QPFT has more degree of freedom and is more flexible with the parameters involved than the FT, FrFT and the LCT, with the same computational cost as the FT, it is natural to extend QPFT to quaternion setting. To the best of our knowledge none of the QPFT and the STQPFT have been explored for the quaternion valued functions. Due to non-commutativity of the quaternion multiplication we can define at least three different types of quaternion quadratic phase Fourier transform (QQPFT), viz., right-sided, left-sided and two-sided. In this article, we concentrate on the two-sided QQPFT and based on its relation with the quaternion Fourier transform (QFT) we obtain the sharp Hausdorff-Young inequality using which we give the R\`enyi and Shannon entropy UP for QQPFT.  We also define the STQQPFT and explore its important properties like, boundedness, linearity, translation, scaling, inner product relation and inversion formula. Based on the sharp Hausdorff-Young inequality we obtain the Lieb's uncertainty and entropy uncertianty principles of the STQQPFT followed by the same for the newly defined $2D$ quaternion quadratic phase ambiguity function (QQPAF) and $2D$ quaternion quadratic phase Wigner-Ville distribution (QQPWVD), using the relation of the later transforms with that of the STQQPFT.

The organization of the paper is as follows:
In section 2, we recall some basic definitions and properties of quaternion algebra. In section 3, we give the definition of two sided QQPFT and study its important properties, like Parseval's identity, sharp Hausdorff-Young inequality, R\`enyi and Shannon entropy UPs. In section 4, we have defined the two sided STQQPFT and studied its properties and its relations with that of the proposed two sided QQPAF and the QQPWVD, based on which we obtain the Lieb's and entropy UPs for these three transforms. Finally, in section 5, we conclude our paper.

\section{Preliminaries}
The field of real and complex numbers are respectively denoted by $\mathbb{R}$ and $\mathbb{C}.$ Let 
$$\mathbb{H}=\{r=r_0+ir_1+jr_2+kr_3 :r_0,r_1,r_2,r_3\in\mathbb{R}\},$$ where $i,j$ and $k$ are the imaginary units such that they satisfy the following Hamilton's multiplication rule
$$ij=k=-ji,~jk=i=-kj,~ki=j=-ik,~i^2=j^2=k^2=1.$$

For a quaternion $r=r_0+ir_1+jr_2+kr_3,$ we call $r_0$ the real scalar part of $r,$ and denote it by $Sc(r).$ The scalar part satisfies the following cyclic multiplication symmetry (\cite{hitzer2007quaternion})
\begin{align}\label{P5eqn3}
Sc(pqr)=S(qrp)=Sc(rpq),~\forall~p,q,r\in\mathbb{H}.
\end{align}

We denote the quaternion conjugate of $r$ as $\bar{r}$ and is defined as
$$\bar{r}=r_0-ir_1-jr_2-kr_3.$$
The quaternion conjugate satisfy the following
\begin{align}\label{P5eqn4}
\overline{qr}=\bar{r}\bar{q},~\overline{q+r}=\bar{q}+\bar{r},~\bar{\bar{q}}=q,~\forall~q,r\in\mathbb{H}.
\end{align}
The modulus of $r\in\mathbb{H}$ is defined as 
\begin{align}\label{P5eqn5}
|r|=\sqrt{r\bar{r}}=\left(\sum_{l=0}^3r_l^2\right)^{\frac{1}{2}},
\end{align}
and it satisfies $|qr|=|q||r|,~\forall~q,r\in\mathbb{H}.$

A quaternion valued function $h$ defined on $\mathbb{R}^n$  can be written as
\begin{align*}
h(\boldsymbol x)=h_0(\boldsymbol x)+ih_1(\boldsymbol x)+jh_2(\boldsymbol x)+kh_3(\boldsymbol x),~\boldsymbol x\in\mathbb{R}^n,
\end{align*}
where $h_0,h_1,h_2$ and $h_3$ are real valued function on $\mathbb{R}^n.$

If $1\leq q<\infty,$ then the $L^q-$norm of $h$ is defined by
\begin{align}\label{P5eqn6}
\|h\|_{L^q_\mathbb{H}(\mathbb{R}^n)}
&=\left(\int_{\mathbb{R}^n}|h(\boldsymbol x)|^qd\boldsymbol x\right)^\frac{1}{q}\notag\\
&=\left\{\int_{\mathbb{R}^n}\left(\sum_{l=0}^3|h_l(\boldsymbol x)|^2\right)^\frac{q}{2}d\boldsymbol x\right\}^\frac{1}{q}
\end{align}
and $L^q_\mathbb{H}(\mathbb{R}^n)$ is a Banach space of all measurable quaternion valued functions $f$ having finite $L^q-$norm.
$L^\infty_\mathbb{H}(\mathbb{R}^n)$ is the set of all essentially bounded quaternion valued measurable functions with norm 
\begin{align}\label{P5eqn7}
\|f\|_{L^\infty_\mathbb{H}(\mathbb{R}^n)}=\mbox{ess~sup}_{\boldsymbol x\in\mathbb{R}^n}|f(\boldsymbol x)|.
\end{align}
Moreover, the quaternion valued inner product 
\begin{align}\label{P5eqn8}
(f,g)=\int_{\mathbb{R}^n}f(\boldsymbol x)\overline{g(\boldsymbol x)}d\boldsymbol x,
\end{align}
with symmetric real scalar part
\begin{align}\label{P5eqn9}
\langle f,g\rangle
&=\frac{1}{2}[(f,g)+(g,f)]\notag\\
&=\int_{\mathbb{R}^n}Sc\left[f(\boldsymbol x)\overline{g(\boldsymbol x)}\right]d\boldsymbol x\notag\\
&=Sc\left(\int_{\mathbb{R}^n}f(\boldsymbol x)\overline{g(\boldsymbol x)}d\boldsymbol x\right)
\end{align}
turns $L^2_\mathbb{H}(\mathbb{R}^n)$ to a Hilbert space, where the norm in equation \eqref{P5eqn6} can be expressed as 
\begin{align}
\|f\|_{L^2_\mathbb{H}(\mathbb{R}^n)}=\sqrt{\langle f,f \rangle}=\sqrt{(f,f)}=\left(\int_{\mathbb{R}^n}|f(\boldsymbol x)|^2d\boldsymbol x\right)^\frac{1}{2}.
\end{align}

\section{Quaternion quadratic phase Fourier transform (QQPFT)}
In this section we give a definition of quaternion quadratic phase Fourier transform (QQPFT) and study its important properties.
\begin{definition}\label{P5Defn3.1}
Let $\wedge_l=(A_l,B_l,C_l,D_l,E_l),A_l,B_l,C_l,D_l,E_l\in\mathbb{R} ~\mbox{and}~B_l\neq 0~\mbox{for}~l=1,2$. The quaternion quadratic phase Fourier transform (QQPFT) of $f(\boldsymbol t)\in L^2_{\mathbb{H}}(\mathbb{R}^2),~\boldsymbol t=(t_1,t_2),$ is defined by 
\begin{align}\label{P5eqn11}
(\mathcal{Q}^{\wedge_1,\wedge_2}_{\mathbb{H}} f)(\boldsymbol\xi)=\int_{\mathbb{R}^2}\mathcal{K}^i_{\wedge_1}(t_1,\xi_1)f(\boldsymbol t)\mathcal{K}^j_{\wedge_2}(t_2,\xi_2)d\boldsymbol t,~\boldsymbol{\xi}=(\xi_1,\xi_2)\in\mathbb{R}^2
\end{align}
where 
\begin{align}\label{P5eqn12}
\mathcal{K}^i_{\wedge_1}(t_1,\xi_1)=\frac{1}{\sqrt{2\pi}}e^{-i\left(A_1t_1^2+B_1t_1\xi_1+C_1\xi_1^2+D_1t_1+E_1\xi_1\right)}
\end{align}
and 
\begin{align}\label{P5eqn13}
\mathcal{K}^i_{\wedge_2}(t_2,\xi_2)=\frac{1}{\sqrt{2\pi}}e^{-j\left(A_2t_2^2+B_2t_2\xi_2+C_2\xi_2^2+D_2t_2+E_2\xi_2\right)}.
\end{align}
\end{definition}
The corresponding inversion formula is given by
\begin{align}\label{P5eqn14}
f(\boldsymbol t)=|B_1B_2|\int_{\mathbb{R}^2}\overline{\mathcal{K}^i_{\wedge_1}(t_1,\xi_1)}(\mathcal{Q}^{\wedge_1,\wedge_2}_{\mathbb{H}} f)(\boldsymbol\xi)\overline{\mathcal{K}^i_{\wedge_2}(t_2,\xi_2)}d\boldsymbol\xi
\end{align}

\subsection{Relation between QQPFT and QFT}
We now see an important relation between the QQPFT and the QFT, which plays a vital role in obtaining the sharp Hausdorff-Young inequality for the QQPFT.
\begin{align*}
(\mathcal{Q}^{\wedge_1,\wedge_2}_{\mathbb{H}} f)(\boldsymbol\xi)&=\frac{1}{2\pi}\int_{\mathbb{R}^2}e^{-i\left(A_1t_1^2+B_1t_1\xi_1+C_1\xi_1^2+D_1t_1+E_1\xi_1\right)}f(\boldsymbol t)e^{-j\left(A_2t_2^2+B_2t_2\xi_2+C_2\xi_2^2+D_2t_2+E_2\xi_2\right)}d\boldsymbol t\\
&=e^{-i\left(C_1\xi^2+E_1\xi_1\right)}\left\{\frac{1}{2\pi}\int_{\mathbb{R}^2} e^{-iB_1t_1\xi_1}\tilde{f}(\boldsymbol t)e^{-jB_2t_2\xi_2}d\boldsymbol t\right\}e^{-j\left(C_2\xi^2+E_2\xi_2\right)},
\end{align*}
where
\begin{align}\label{P5eqn15}
\tilde{f}(\boldsymbol t)=e^{-i\left(A_1t_i^2+D_1t_1\right)}f(\boldsymbol t)e^{-j\left(A_2t_i^2+D_2t_2\right)}.
\end{align}
Thus,
\begin{align}\label{P5eqn16}
(\mathcal{Q}^{\wedge_1,\wedge_2}_{\mathbb{H}} f)(\boldsymbol\xi)&=e^{-i\left(C_1\xi^2+E_1\xi_1\right)}\left(\mathcal{F}_{\mathbb{H}}\tilde{f}\right)(B_1\xi_1,B_2\xi_2)e^{-j\left(C_2\xi^2+E_2\xi_2\right)}
\end{align}
where
\begin{align}\label{P5eqn17}
\left(\mathcal{F}_{\mathbb{H}}\tilde{f}\right)(\boldsymbol \xi)=\int_{\mathbb{R}^2}\frac{1}{\sqrt{2\pi}} e^{-it_1\xi_1}\tilde{f}(\boldsymbol t)\frac{1}{\sqrt{2\pi}}e^{-jt_2\xi_2}d\boldsymbol t.
\end{align}
Based on this relation between QQPFT and the QFT, we obtain the following important inequality.
\begin{theorem}\label{P5Theo3.1}
(Sharp Hausdorff-Young Inequality): Let $1\leq p\leq 2,$ $\frac{1}{p}+\frac{1}{q}=1$ and $f\in L^2_\mathbb{H}(\mathbb{R}^2),$ then
\begin{align}\label{P5eqn18}
\|\mathcal{Q}^{\wedge_1,\wedge_2}_{\mathbb{H}} f\|_{L^q_\mathbb{H}(\mathbb{R}^2)}\leq \frac{(2\pi)^{\frac{1}{q}-\frac{1}{p}}A_p^2}{|B_1B_2|^\frac{1}{q}}\|f\|_{L^p_\mathbb{H}(\mathbb{R}^2)},
\end{align}
\end{theorem}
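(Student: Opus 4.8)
The plan is to deduce \eqref{P5eqn18} from the pointwise relation \eqref{P5eqn16} together with the (known) sharp Hausdorff--Young inequality for the two-sided QFT. The starting observation is that all the exponential factors occurring in \eqref{P5eqn16} and in the definition \eqref{P5eqn15} of $\tilde f$ are unit quaternions; hence, using $|qr|=|q||r|$, one obtains the pointwise identity $|(\mathcal{Q}^{\wedge_1,\wedge_2}_{\mathbb{H}} f)(\boldsymbol\xi)| = |(\mathcal{F}_{\mathbb{H}}\tilde f)(B_1\xi_1,B_2\xi_2)|$ and, for a.e. $\boldsymbol t$, $|\tilde f(\boldsymbol t)| = |f(\boldsymbol t)|$, so that $\|\tilde f\|_{L^r_\mathbb{H}(\mathbb{R}^2)} = \|f\|_{L^r_\mathbb{H}(\mathbb{R}^2)}$ for every $r\in[1,\infty]$.

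First I would dispose of the trivial case $\|f\|_{L^p_\mathbb{H}(\mathbb{R}^2)}=\infty$, after which $f$, and hence $\tilde f$, may be assumed to lie in $L^p_\mathbb{H}(\mathbb{R}^2)$, with $\mathcal{F}_\mathbb{H}\tilde f$ understood via the usual $L^1$--$L^2$ extension. Next, raising the first modulus identity to the $q$-th power, integrating over $\mathbb{R}^2$, and applying the linear substitution $u_1 = B_1\xi_1$, $u_2 = B_2\xi_2$ (Jacobian $|B_1B_2|^{-1}$) gives $\|\mathcal{Q}^{\wedge_1,\wedge_2}_{\mathbb{H}} f\|_{L^q_\mathbb{H}(\mathbb{R}^2)} = |B_1B_2|^{-1/q}\,\|\mathcal{F}_\mathbb{H}\tilde f\|_{L^q_\mathbb{H}(\mathbb{R}^2)}$. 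It then remains to bound $\|\mathcal{F}_\mathbb{H}\tilde f\|_{L^q_\mathbb{H}(\mathbb{R}^2)}$ by the sharp Hausdorff--Young inequality for the QFT \eqref{P5eqn17}, i.e. $\|\mathcal{F}_\mathbb{H} g\|_{L^q_\mathbb{H}(\mathbb{R}^2)} \le (2\pi)^{1/q-1/p}A_p^2\,\|g\|_{L^p_\mathbb{H}(\mathbb{R}^2)}$ with the Babenko--Beckner constant $A_p=(p^{1/p}/q^{1/q})^{1/2}$; taking $g=\tilde f$ and combining with the two displayed identities yields exactly \eqref{P5eqn18}, since $\|\tilde f\|_{L^p_\mathbb{H}(\mathbb{R}^2)}=\|f\|_{L^p_\mathbb{H}(\mathbb{R}^2)}$.

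\textbf{Expected main obstacle.} The only genuinely substantial ingredient is the sharp Hausdorff--Young inequality for the two-sided QFT with constant $A_p^2$; everything else is bookkeeping with unit-modulus chirps and a change of variables. If one does not simply cite it (as in \cite{lian2020sharp}), the argument proceeds through the orthogonal plane split $g=g_++g_-$ into components lying in two orthogonal $2$D planes of $\mathbb{H}$: on each component the two-sided QFT reduces to an ordinary $2$D complex Fourier transform (up to the $(2\pi)^{-1}$ normalization), to which the classical Babenko--Beckner theorem applies in tensor-product form, producing the constant $A_p^2$ and its sharpness. A secondary technical point worth noting is the Riesz--Thorin/density step needed to make $\mathcal{Q}^{\wedge_1,\wedge_2}_{\mathbb{H}} f$ and $\mathcal{F}_\mathbb{H}\tilde f$ well defined on $L^p_\mathbb{H}(\mathbb{R}^2)$ for $1<p<2$.
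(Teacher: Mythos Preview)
Your proposal is correct and follows essentially the same route as the paper: use the relation \eqref{P5eqn16} to reduce to $\|\mathcal{F}_\mathbb{H}\tilde f\|_{L^q_\mathbb{H}(\mathbb{R}^2)}$ via the change of variables $u_l=B_l\xi_l$, invoke the sharp Hausdorff--Young inequality for the two-sided QFT from \cite{lian2020sharp}, and then use $|\tilde f|=|f|$ to replace $\|\tilde f\|_{L^p_\mathbb{H}(\mathbb{R}^2)}$ by $\|f\|_{L^p_\mathbb{H}(\mathbb{R}^2)}$. Your write-up is in fact more careful about the unit-modulus bookkeeping and the $L^p$-definability issues than the paper's version, but the argument is the same.
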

where $A_p=\left(\frac{p^{\frac{1}{p}}}{q^{\frac{1}{q}}}\right)^\frac{1}{2}.$
\begin{proof}
Using the relation between the QQPFT and the QFT, we get
\begin{align*}
\|\mathcal{Q}^{\wedge_1,\wedge_2}_{\mathbb{H}} f\|_{L^q_\mathbb{H}(\mathbb{R}^2)}
&=\left(\int_{\mathbb{R}^2}\left|\left(\mathcal{F}_\mathbb{H}\tilde{f}\right)(B_1\xi_1,B_2\xi_2)\right|^qd\boldsymbol\xi\right)\frac{1}{q}\\
&=\frac{1}{|B_1B_2|^{\frac{1}{q}}}\|\mathcal{F}_\mathbb{H}\tilde{f}\|_{L^q_\mathbb{H}(\mathbb{R}^2)}.
\end{align*}
Using the sharp Hausdorff-Young inequality (\cite{lian2020sharp}) for the QFT, we get
\begin{align*}
\|\mathcal{Q}^{\wedge_1,\wedge_2}_{\mathbb{H}} f\|_{L^q_\mathbb{H}(\mathbb{R}^2)}\leq \frac{(2\pi)^{\frac{1}{q}-\frac{1}{p}}A_p^2}{|B_1B_2|^\frac{1}{q}}\|\tilde{f}\|_{L^p_\mathbb{H}(\mathbb{R}^2)}.
\end{align*}
Substituting $\tilde{f},$ from \eqref{P5eqn15}, we get \eqref{P5eqn18}. This completes the proof.
\end{proof}
\begin{theorem}\label{P5Theo3.2}
(Parseval's formula):
Let $f,g\in L^2_\mathbb{H}(\mathbb{R}^2),$ then 
\begin{align}\label{P5eqn19}
\langle f,g\rangle=|B_1B_2|\langle \mathcal{Q}^{\wedge_1,\wedge_2}_{\mathbb{H}} f, \mathcal{Q}^{\wedge_1,\wedge_2}_{\mathbb{H}} g \rangle.
\end{align}
In particular,
\begin{align}\label{P5eqn20}
\|f\|^2_{L^2_\mathbb{H}(\mathbb{R}^2)}=|B_1B_2|\|\mathcal{Q}^{\wedge_1,\wedge_2}_{\mathbb{H}} f\|^2_{L^2_\mathbb{H}(\mathbb{R}^2\times\mathbb{R}^2)}.
\end{align}
\end{theorem}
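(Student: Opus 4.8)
The plan is to deduce the identity from the Plancherel theorem for the two-sided QFT by means of the factorization \eqref{P5eqn16}. Recall from \eqref{P5eqn9} that $\langle F,G\rangle=Sc\!\left(\int_{\mathbb{R}^2}F(\boldsymbol\xi)\overline{G(\boldsymbol\xi)}\,d\boldsymbol\xi\right)$. First I would write, for $f$ and for $g$ separately, $(\mathcal{Q}^{\wedge_1,\wedge_2}_{\mathbb{H}}h)(\boldsymbol\xi)=e^{-i(C_1\xi_1^2+E_1\xi_1)}(\mathcal{F}_{\mathbb{H}}\tilde h)(B_1\xi_1,B_2\xi_2)e^{-j(C_2\xi_2^2+E_2\xi_2)}$ with $\tilde h$ as in \eqref{P5eqn15}, and then form the pointwise product $(\mathcal{Q}^{\wedge_1,\wedge_2}_{\mathbb{H}}f)(\boldsymbol\xi)\overline{(\mathcal{Q}^{\wedge_1,\wedge_2}_{\mathbb{H}}g)(\boldsymbol\xi)}$. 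Using $\overline{qr}=\bar r\bar q$ from \eqref{P5eqn4}, the two $j$-chirps meet in the middle, and since each chirp is unimodular with $e^{-j(C_2\xi_2^2+E_2\xi_2)}\,\overline{e^{-j(C_2\xi_2^2+E_2\xi_2)}}=1$ they cancel, leaving $e^{-i(C_1\xi_1^2+E_1\xi_1)}\,(\mathcal{F}_{\mathbb{H}}\tilde f)(B_1\xi_1,B_2\xi_2)\,\overline{(\mathcal{F}_{\mathbb{H}}\tilde g)(B_1\xi_1,B_2\xi_2)}\,\overline{e^{-i(C_1\xi_1^2+E_1\xi_1)}}$.

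Next I would take real scalar parts and invoke the cyclic symmetry \eqref{P5eqn3}: because $\overline{e^{-i(C_1\xi_1^2+E_1\xi_1)}}\,e^{-i(C_1\xi_1^2+E_1\xi_1)}=1$, rotating the outer $i$-factor past the rest shows that $Sc$ of the displayed quantity equals $Sc\big[(\mathcal{F}_{\mathbb{H}}\tilde f)(B_1\xi_1,B_2\xi_2)\,\overline{(\mathcal{F}_{\mathbb{H}}\tilde g)(B_1\xi_1,B_2\xi_2)}\big]$. Interchanging $Sc$ with the integral (legitimate since $Sc$ is real-linear and continuous) and substituting $u_1=B_1\xi_1,\ u_2=B_2\xi_2$, with Jacobian $|B_1B_2|$, gives $\langle \mathcal{Q}^{\wedge_1,\wedge_2}_{\mathbb{H}}f,\mathcal{Q}^{\wedge_1,\wedge_2}_{\mathbb{H}}g\rangle=\frac{1}{|B_1B_2|}\langle \mathcal{F}_{\mathbb{H}}\tilde f,\mathcal{F}_{\mathbb{H}}\tilde g\rangle$. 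Since the chirps in \eqref{P5eqn15} are unimodular, $|\tilde f|=|f|$ and $|\tilde g|=|g|$ pointwise, so $\tilde f,\tilde g\in L^2_{\mathbb{H}}(\mathbb{R}^2)$ and the Plancherel theorem for the unitary two-sided QFT of \eqref{P5eqn17} (\cite{lian2020sharp}, the case $p=q=2$ of the sharp Hausdorff--Young inequality) applies, yielding $\langle \mathcal{F}_{\mathbb{H}}\tilde f,\mathcal{F}_{\mathbb{H}}\tilde g\rangle=\langle \tilde f,\tilde g\rangle$.

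Finally I would strip off the remaining chirps: from \eqref{P5eqn15}, $\tilde f(\boldsymbol t)\overline{\tilde g(\boldsymbol t)}=e^{-i(A_1t_1^2+D_1t_1)}f(\boldsymbol t)\,e^{-j(A_2t_2^2+D_2t_2)}\,\overline{e^{-j(A_2t_2^2+D_2t_2)}}\,\overline{g(\boldsymbol t)}\,\overline{e^{-i(A_1t_1^2+D_1t_1)}}$, and the identical two-step manoeuvre (the inner $j$-chirps cancel, the outer $i$-chirps disappear under $Sc$ via \eqref{P5eqn3}) gives $Sc[\tilde f(\boldsymbol t)\overline{\tilde g(\boldsymbol t)}]=Sc[f(\boldsymbol t)\overline{g(\boldsymbol t)}]$ for every $\boldsymbol t$, hence $\langle\tilde f,\tilde g\rangle=\langle f,g\rangle$. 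Chaining the three equalities produces \eqref{P5eqn19}, and \eqref{P5eqn20} is the special case $g=f$. The only genuine subtlety is keeping the non-commutative algebra honest: one must resist merging the $i$- and $j$-chirps, and instead use that conjugation reverses products so the two \emph{like}-unit chirps that become adjacent in the middle annihilate, while passing to the scalar part together with \eqref{P5eqn3} lets the outer chirps annihilate as well — this is exactly the mechanism that makes the QFT Plancherel theorem work, so nothing essentially new arises beyond the bookkeeping. An alternative route would substitute the inversion formula \eqref{P5eqn14} into $\langle f,g\rangle$ and apply Fubini, but the reduction to the QFT is cleaner.
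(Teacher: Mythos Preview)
Your argument is correct and follows essentially the same route as the paper: both proofs use the factorization \eqref{P5eqn16} to reduce to the QFT Plancherel identity, cancel the adjacent $j$-chirps by unimodularity, and remove the outer $i$-chirps under $Sc$ via the cyclic property \eqref{P5eqn3}. Your write-up is in fact slightly more complete, since you explicitly verify $\langle\tilde f,\tilde g\rangle=\langle f,g\rangle$ by the same chirp-cancellation mechanism, a step the paper leaves implicit.
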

\begin{proof}
By the Parseval's formula for the QFT of the function $\tilde{f}$ and $\tilde{g},$ we have
\begin{align*}
\langle \tilde{f},\tilde{g}\rangle&=\langle \mathcal{F}_{\mathbb{H}}\tilde{f},\mathcal{F}_{\mathbb{H}}\tilde{g}\rangle\\
&=Sc \int_{\mathbb{R}^2}|B_1B_2|\left(\mathcal{F}_{\mathbb{H}}\tilde{f}\right)(B_1\xi_1,B_2\xi_2)\overline{\left(\mathcal{F}_{\mathbb{H}}\tilde{g}\right)(B_1\xi_1,B_2\xi_2)}d\boldsymbol \xi.
\end{align*}
Using the relation between the QQPFT and the QFT, we get
\begin{align*}
\langle \tilde{f},\tilde{g}\rangle&=|B_1B_2|\int_{\mathbb{R}^2}Sc\left[e^{i\left(C_1\xi_1^2+E_1\xi_2\right)}\left(\mathcal{Q}^{\wedge_1,\wedge_2}_{\mathbb{H}} g \right)(\boldsymbol\xi)\overline{\left(\mathcal{Q}^{\wedge_1,\wedge_2}_{\mathbb{H}} g\right)(\boldsymbol \xi)}e^{-i\left(C_1\xi_1^2+E_1\xi_2\right)}\right]d\boldsymbol\xi\\
&=|B_1B_2|\int_{\mathbb{R}^2}Sc\left[\left(\mathcal{Q}^{\wedge_1,\wedge_2}_{\mathbb{H}} g \right)(\boldsymbol\xi)\overline{\left(\mathcal{Q}^{\wedge_1,\wedge_2}_{\mathbb{H}} g\right)(\boldsymbol \xi)}\right]d\boldsymbol\xi\\
&=|B_1B_2|\langle \mathcal{Q}^{\wedge_1,\wedge_2}_{\mathbb{H}} f, \mathcal{Q}^{\wedge_1,\wedge_2}_{\mathbb{H}} g \rangle.
\end{align*}
This proof equation \eqref{P5eqn19}. In particular, if we take $f=g,$ in equation \eqref{P5eqn19}, we get equation \eqref{P5eqn20}.
 
This completes the proof.
\end{proof}

\subsection{R\`enyi and Shannon entropy uncertainty principle}
In this subsection we obtain the R\`enyi and Shannon entropy UPs for the proposed QQPFT. Analogous results for the FrFT of complex valued function can be found in \cite{guanlei2009generalized}. Recently, Shannon entropy UP for the QPFT and the two sided QLCT are studied in \cite{shah2021uncertainty} and \cite{kundu2022uncertainty} respectively. Below we prove, R\`enyi UP for the QQPFT and obtain the Shannon UP in limiting case. We start with the following definition.
\begin{definition}\label{P5Defn3.2}
\cite{dembo1991information, guanlei2009generalized}
The R\`enyi entropy of a probability density function $P$ on $\mathbb{R}^n$ is defined by 
\begin{align}\label{P5eqn21}
H_\alpha(P)=\frac{1}{1-\alpha}\log \left(\int_{\mathbb{R}^n}[P(\boldsymbol t)]^\alpha d\boldsymbol t\right),~\alpha>0, \alpha\neq 1.
\end{align}
If $\alpha\rightarrow 1,$ then \eqref{P5eqn21} leads to the following Shannon entropy
\begin{align}\label{P5eqn22}
E(P)=-\int_{\mathbb{R}^n}P(\boldsymbol t)\log [P(\boldsymbol t)]d\boldsymbol t
\end{align}
\end{definition}
\begin{theorem}\label{P5Theo3.3}
If $f\in L^2_\mathbb{H}(\mathbb{R}^2),$ $\frac{1}{2}<\alpha<1$ and $\frac{1}{\alpha}+\frac{1}{\beta}=2,$ then 
\begin{align*}
H_\alpha(|f|^2)+H_\beta\left(\left|\sqrt{|B_1B_2|}\left(\mathcal{Q}^{\wedge_1,\wedge_2}_\mathbb{H}f\right)(\boldsymbol \xi)\right|^{2}\right)\geq-\log(|B_1B_2|)-2\log(2\pi)-\left(\frac{1}{1-\alpha}\log(2\alpha)+\frac{1}{1-\beta}\log(2\beta)\right).
\end{align*}
\end{theorem}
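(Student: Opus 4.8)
The plan is to run the Beckner-type argument that turns the sharp Hausdorff--Young inequality (Theorem~\ref{P5Theo3.1}) into an additive entropy inequality. Write $P(\boldsymbol t)=|f(\boldsymbol t)|^{2}$ and $Q(\boldsymbol\xi)=|B_1B_2|\,\bigl|(\mathcal{Q}^{\wedge_1,\wedge_2}_{\mathbb{H}}f)(\boldsymbol\xi)\bigr|^{2}=\bigl|\sqrt{|B_1B_2|}\,(\mathcal{Q}^{\wedge_1,\wedge_2}_{\mathbb{H}}f)(\boldsymbol\xi)\bigr|^{2}$. By the Parseval relation \eqref{P5eqn20}, $\int_{\mathbb{R}^2}P=\int_{\mathbb{R}^2}Q=\|f\|^{2}_{L^2_\mathbb{H}(\mathbb{R}^2)}$, so after the harmless normalization $\|f\|_{L^2_\mathbb{H}(\mathbb{R}^2)}=1$ --- under which neither side of the asserted inequality moves, since $\tfrac{\alpha}{1-\alpha}+\tfrac{\beta}{1-\beta}=0$ whenever $\tfrac1\alpha+\tfrac1\beta=2$ --- both $P$ and $Q$ are probability densities and $H_\alpha(P)$, $H_\beta(Q)$ are precisely the quantities in the statement.

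The decisive step is to apply Theorem~\ref{P5Theo3.1} with the exponents $p=2\alpha$ and $q=2\beta$. Since $\tfrac12<\alpha<1$ we have $1<p<2$, which is exactly the admissible range, and the hypothesis $\tfrac1\alpha+\tfrac1\beta=2$ is nothing but $\tfrac1p+\tfrac1q=1$, so $(p,q)$ is a conjugate pair with $q=2\beta>2$. Inserting this choice in \eqref{P5eqn18}, raising both sides to the power $q=2\beta$, and then multiplying through by $|B_1B_2|^{\beta}$ recasts the inequality as
\begin{align*}
\int_{\mathbb{R}^2}Q(\boldsymbol\xi)^{\beta}\,d\boldsymbol\xi\;\le\;|B_1B_2|^{\beta}\,K^{2\beta}\left(\int_{\mathbb{R}^2}P(\boldsymbol t)^{\alpha}\,d\boldsymbol t\right)^{\beta/\alpha},\qquad K=\frac{(2\pi)^{\frac1q-\frac1p}A_p^{2}}{|B_1B_2|^{1/q}}.
\end{align*}

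Next I would take logarithms, substitute $\log\int_{\mathbb{R}^2}P^{\alpha}=(1-\alpha)H_\alpha(P)$ and $\log\int_{\mathbb{R}^2}Q^{\beta}=(1-\beta)H_\beta(Q)$, and divide through by $1-\beta$. Because $\beta>1$, this divisor is negative, so the inequality reverses and yields a lower bound for $H_\beta(Q)$ in terms of $H_\alpha(P)$ and $\log K$; the coefficient of $H_\alpha(P)$ that comes out is $\tfrac{\beta(1-\alpha)}{\alpha(1-\beta)}$, which equals $-1$ by the constraint $\tfrac1\alpha+\tfrac1\beta=2$. Transposing that term gives
\begin{align*}
H_\alpha(P)+H_\beta(Q)\;\ge\;\frac{\beta}{1-\beta}\log|B_1B_2|+\frac{2\beta}{1-\beta}\log K .
\end{align*}
It then remains only to expand $\log K=(\tfrac1q-\tfrac1p)\log(2\pi)+\tfrac1p\log p-\tfrac1q\log q-\tfrac1q\log|B_1B_2|$, set $p=2\alpha$, $q=2\beta$, and simplify the resulting rational expressions in $\alpha,\beta$ with the help of $\tfrac1\alpha+\tfrac1\beta=2$ (for instance $\tfrac{\beta}{1-\beta}=\tfrac{\alpha}{\alpha-1}$, $\tfrac{\alpha}{\beta}=2\alpha-1$, $\tfrac1{1-\beta}=\tfrac{2\alpha-1}{\alpha-1}$); collecting the $\log|B_1B_2|$, $\log(2\pi)$, $\log(2\alpha)$ and $\log(2\beta)$ contributions separately produces the bound in the theorem.

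There is no hard analytic step here: all the content is already sealed inside the sharp Hausdorff--Young inequality of Theorem~\ref{P5Theo3.1}. The only thing demanding care is the bookkeeping --- keeping the sign right when dividing by $1-\beta<0$ and then simplifying $\tfrac{2\beta}{1-\beta}\log K$ correctly, which is exactly where the identities forced by $\tfrac1\alpha+\tfrac1\beta=2$ do their work. (For the two entropies to be finite and the inequality non-vacuous it is enough that $f\in L^{2\alpha}_\mathbb{H}(\mathbb{R}^2)\cap L^2_\mathbb{H}(\mathbb{R}^2)$ with $f\not\equiv 0$ --- then $Q\in L^{\beta}$ by \eqref{P5eqn18} --- and the general $L^2$ statement follows by density or by reading $\pm\infty$ with the usual conventions.)
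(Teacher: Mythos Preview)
Your proposal is correct and follows essentially the same route as the paper: apply the sharp Hausdorff--Young inequality \eqref{P5eqn18} with the conjugate pair $p=2\alpha$, $q=2\beta$, then pass to logarithms and use the identity forced by $\tfrac1\alpha+\tfrac1\beta=2$ (equivalently $\tfrac{\alpha}{1-\alpha}+\tfrac{\beta}{1-\beta}=0$) to combine the two entropy terms. The only cosmetic difference is that the paper first raises \eqref{P5eqn23} to the power $\tfrac{\alpha}{1-\alpha}$ and then takes logs, whereas you take logs and divide by $1-\beta<0$; these are the same manipulation, and your explicit normalization step (justified precisely by that identity) is a clean way to frame it.
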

\begin{proof}
By Hausdorff-Young inequality \eqref{P5eqn18}, we have 
\begin{align}\label{P5eqn23}
\left(\int_{\mathbb{R}^2}\left|\left(\mathcal{Q}^{\wedge_1,\wedge_2}_{\mathbb{H}}f\right)(\boldsymbol\xi)\right|^qd\boldsymbol\xi\right)^{\frac{1}{q}}\leq \frac{(2\pi)^{\frac{1}{q}-\frac{1}{p}}A_p^2}{|B_1B_2|^{\frac{1}{q}}}\left(\int_{\mathbb{R}^2}|f(\boldsymbol t)|^pd\boldsymbol t\right)^{\frac{1}{p}}.
\end{align} 
Putting $p=2\alpha$ and $q=2\beta,$ in equation \eqref{P5eqn23}, we have
\begin{align*}
\frac{1}{\sqrt{|B_1B_2|}}\left(\int_{\mathbb{R}^2}\left|\sqrt{|B_1B_2|}\left(\mathcal{Q}^{\wedge_1,\wedge_2}_{\mathbb{H}}f\right)(\boldsymbol\xi)\right|^{2\beta}d\boldsymbol\xi\right)^{\frac{1}{2\beta}}\leq \frac{(2\pi)^{\frac{1}{2\beta}-\frac{1}{2\alpha}}A_{2\alpha}^2}{|B_1B_2|^{\frac{1}{2\beta}}}\left(\int_{\mathbb{R}^2}|f(\boldsymbol t)|^{2\alpha}d\boldsymbol t\right)^{\frac{1}{2\alpha}}.
\end{align*}
This implies
\begin{align}\label{P5eqn24}
\frac{|B_1B_2|^{\frac{1}{\beta}-1}}{(2\pi)^{\frac{1}{\alpha}-\frac{1}{\beta}}A^4_{2\alpha}}\leq \left(\int_{\mathbb{R}^2}|f(\boldsymbol t)|^{2\alpha}d\boldsymbol t\right)^{\frac{1}{\alpha}}\left(\int_{\mathbb{R}^2}\left|\sqrt{|B_1B_2|}\left(\mathcal{Q}^{\wedge_1,\wedge_2}_{\mathbb{H}}f\right)(\boldsymbol\xi)\right|^{2\beta}d\boldsymbol\xi\right)^{-\frac{1}{\beta}}.
\end{align}
Since $\frac{1}{\alpha}+\frac{1}{\beta}=2,$ we have
\begin{align}\label{P5eqn25}
\frac{\alpha}{1-\alpha}=\frac{\beta}{1-\beta}.
\end{align}
Raising to the power $\frac{\alpha}{1-\alpha}$ in \eqref{P5eqn24} and using \eqref{P5eqn25}, we get
\begin{align*}
\frac{|B_1B_2|^{-1}}{(2\pi)^{(\frac{1}{\alpha}-\frac{1}{\beta})(\frac{\alpha}{1-\alpha})}A^{\frac{4\alpha}{1-\alpha}}_{2\alpha}}\leq \left(\int_{\mathbb{R}^2}|f(\boldsymbol t)|^{2\alpha}d\boldsymbol t\right)^{\frac{1}{1-\alpha}}\left(\int_{\mathbb{R}^2}\left|\sqrt{|B_1B_2|}\left(\mathcal{Q}^{\wedge_1,\wedge_2}_{\mathbb{H}}f\right)(\boldsymbol\xi)\right|^{2\beta}d\boldsymbol\xi\right)^{\frac{1}{1-\beta}}.
\end{align*}
Taking $\log$ on both sides, we get
\begin{align}\label{P5eqn26}
-\log(|B_1B_2|)-\log&\left((2\pi)^{(\frac{1}{\alpha}-\frac{1}{\beta})(\frac{\alpha}{1-\alpha})}A^{\frac{4\alpha}{1-\alpha}}_{2\alpha}\right)\notag\\
&\leq \frac{1}{1-\alpha}\log \left(\int_{\mathbb{R}^2}|f(\boldsymbol t)|^{2\alpha}d\boldsymbol t\right)+\frac{1}{1-\beta}\log \left(\int_{\mathbb{R}^2}\left|\sqrt{|B_1B_2|}\left(\mathcal{Q}^{\wedge_1,\wedge_2}_\mathbb{H}\right)(\boldsymbol \xi)\right|^{2\alpha}d\boldsymbol\xi\right).
\end{align}
Thus, it follows that
\begin{align}\label{P5eqn27}
H_\alpha(|f|^2)+H_\beta\left(\left|\sqrt{|B_1B_2|}\left(\mathcal{Q}^{\wedge_1,\wedge_2}_\mathbb{H}f\right)(\boldsymbol \xi)\right|^{2}\right)\geq-\log(|B_1B_2|)-2\log(2\pi)-\left(\frac{1}{1-\alpha}\log(2\alpha)+\frac{1}{1-\beta}\log(2\beta)\right).
\end{align}
This is the R\`enyi entropy UP for QQPFT. 
\end{proof}
\begin{remark}
If $\alpha\rightarrow 1,$ then $\beta\rightarrow 1$ and in this case equation \eqref{P5eqn27} can be written as
\begin{align}\label{P5eqn28}
&E(|f|^2)+E\left(\left|\sqrt{|B_1B_2|}\left(\mathcal{Q}^{\wedge_1,\wedge_2}_\mathbb{H}f\right)(\boldsymbol \xi)\right|^{2}\right)\geq-\log(|B_1B_2|)-2\log(2\pi)+2-\log4\notag,\\
&\hspace{-0.6cm}\mbox{i.e.,}~ E(|f|^2)+E\left(\left|\sqrt{|B_1B_2|}\left(\mathcal{Q}^{\wedge_1,\wedge_2}_\mathbb{H}f\right)(\boldsymbol \xi)\right|^{2}\right)\geq\log \left(\frac{e^2}{16\pi^2|B_1B_2|}\right).
\end{align}
This is the Shannon entropy UP for QQPFT.
\end{remark}

\section{Short time quaternion quadratic phase Fourier transform}
In this section we give the definition of the STQQPFT and study its properties. We obtain its relation with that of the quaternion AF and the quaternion WVD associated with the QQPFT. 
\begin{definition}\label{P5Defn4.1}
Let $\wedge_l=(A_l,B_l,C_l,D_l,E_l),A_l,B_l,C_l,D_l,E_l\in\mathbb{R} ~\mbox{and}~B_l\neq 0~\mbox{for}~l=1,2$. The short time quaternion quadratic phase Fourier transform (STQQPFT) of a function $f\in L^2_\mathbb{H}(\mathbb{R}^2)$ with respect to a quaternion window function (QWF) $g\in L^2_\mathbb{H}(\mathbb{R}^2)\cap L^\infty_\mathbb{H}(\mathbb{R}^2)$ is defined by 
\begin{align*}
\left(\mathcal{S}^{\wedge_1,\wedge_2}_{\mathbb{H},g}f\right)(\boldsymbol x,\boldsymbol\xi)=\int_{\mathbb{R}^2}\mathcal{K}^i_{\wedge_1}(t_1,\xi_1)f(\boldsymbol t)\overline{g(\boldsymbol t-\boldsymbol x)}\mathcal{K}^j_{\wedge_2}(t_2,\xi_2)d\boldsymbol t,~(\boldsymbol x,\boldsymbol\xi)\in\mathbb{R}^2\times\mathbb{R}^2,
\end{align*}
where $\mathcal{K}^i_{\wedge_1}(t_1,\xi_1)$ and $\mathcal{K}^j_{\wedge_2}(t_2,\xi_2)$ are given by equations \eqref{P5eqn12} and \eqref{P5eqn13}, respectively.
\end{definition}
We now derive some of the basic properties of the STQQPFT. But before that we state the following lemma:
\begin{lemma}\label{P5Lemma4.1}
Let $\boldsymbol t=(t_1,t_2),\boldsymbol\xi=(\xi_1,\xi_2),\boldsymbol k=(k_1,k_2)\in\mathbb{R}^2,r\in\mathbb{R}.$ Then the kernel $\mathcal{K}^1_{\wedge_1}(t_1,\xi_1)$ and $\mathcal{K}^j_{\wedge_2}(t_2,\xi_2)$ satisfy the following
\begin{align}\label{P5eqn29}
\mathcal{K}^i_{\wedge_1}(t_1+rk_1,\xi_1)=\mathcal{K}^i_{\wedge_1}\left(t_1,\xi_1+\frac{2rk_1A_1}{B_1}\right)\phi^i_{\wedge_1,r}(k_1,\xi_1),
\end{align}
where
\begin{align}\label{P5eqn30}
\phi^i_{\wedge_1,r}(k_1,\xi_1)=e^{-i\left(A_1r^2k_1^2+D_1rk_1+B_1rk_1\xi_1-\frac{4r^2A_1^2C_1k_1^2}{B_1^2}-\frac{4rA_1C_1k_1\xi_1}{B_1}-\frac{2rA_1k_1}{B_1}\right)}
\end{align}
and 
\begin{align}\label{P5eqn31}
\mathcal{K}^j_{\wedge_2}(t_2+rk_2,\xi_2)=\mathcal{K}^j_{\wedge_2}\left(t_2,\xi_2+\frac{2rk_2A_2}{B_2}\right)\phi^j_{\wedge_2,r}(k_2,\xi_2),
\end{align}
where
\begin{align}\label{P5eqn32}
\phi^j_{\wedge_2,r}(k_2,\xi_2)=e^{-j\left(A_2r^2k_2^2+D_2rk_2+B_2rk_2\xi_2-\frac{4r^2A_2^2C_2k_2^2}{B_2^2}-\frac{4rA_2C_2k_2\xi_2}{B_2}-\frac{2rA_2k_2}{B_2}\right)}.
\end{align}
\end{lemma}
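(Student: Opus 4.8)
The plan is to prove both identities by a direct substitution-and-expansion argument; they are purely scalar exponent identities, so no feature of the quaternion structure is used beyond the fact that within each single kernel only one imaginary unit appears.

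First I would write out the exponent of $\mathcal{K}^i_{\wedge_1}(t_1+rk_1,\xi_1)$, namely
\[
A_1(t_1+rk_1)^2+B_1(t_1+rk_1)\xi_1+C_1\xi_1^2+D_1(t_1+rk_1)+E_1\xi_1,
\]
and expand it. Comparing with the exponent of $\mathcal{K}^i_{\wedge_1}(t_1,\xi_1)$, the expansion introduces exactly one term, $2A_1rk_1\,t_1$, that still depends on $t_1$; everything else is either already present or is $t_1$-free. The key observation is that this residual coupling can be absorbed only into the bilinear term $B_1t_1\xi_1$: shifting $\xi_1\mapsto\xi_1+\delta$ there changes it by $B_1\delta\,t_1$, so the choice $\delta=\tfrac{2rk_1A_1}{B_1}$ (which is where the hypothesis $B_1\neq0$ enters) exactly cancels $2A_1rk_1\,t_1$. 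This is what singles out the frequency shift $\xi_1\mapsto\xi_1+\tfrac{2rk_1A_1}{B_1}$ in \eqref{P5eqn29}.

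Next I would form the difference between the exponent of $\mathcal{K}^i_{\wedge_1}(t_1+rk_1,\xi_1)$ and that of $\mathcal{K}^i_{\wedge_1}\bigl(t_1,\xi_1+\tfrac{2rk_1A_1}{B_1}\bigr)$; by the previous step this difference is independent of $t_1$. Expanding $C_1\bigl(\xi_1+\tfrac{2rk_1A_1}{B_1}\bigr)^2$ and $E_1\bigl(\xi_1+\tfrac{2rk_1A_1}{B_1}\bigr)$ and collecting the surviving terms yields precisely the exponent appearing in $\phi^i_{\wedge_1,r}(k_1,\xi_1)$ of \eqref{P5eqn30}; multiplying the exponentials back together and restoring the constant $\tfrac{1}{\sqrt{2\pi}}$ gives \eqref{P5eqn29}. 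The identity \eqref{P5eqn31}, with $\phi^j_{\wedge_2,r}$ as in \eqref{P5eqn32}, follows by the identical computation with the index $1$ replaced by $2$ and $i$ replaced by $j$; since $\mathcal{K}^j_{\wedge_2}$ involves only the single unit $j$, all the exponent manipulations go through verbatim and the non-commutativity of $\mathbb{H}$ plays no role.

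I do not expect a genuine obstacle: the argument is bookkeeping. The one place that needs a moment's thought, rather than being mechanical, is recognising in advance that the combination $\tfrac{2rk_1A_1}{B_1}$ is forced --- it is the unique shift of $\xi_1$ that removes the cross term $2A_1rk_1t_1$ --- after which the remaining terms have nowhere to go except into the phase factor $\phi^i_{\wedge_1,r}$, and the computation is completely determined.
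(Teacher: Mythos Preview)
Your proposal is correct and follows essentially the same route as the paper: a direct expansion of the exponent of $\mathcal{K}^i_{\wedge_1}(t_1+rk_1,\xi_1)$, absorption of the cross term $2A_1rk_1t_1$ via the frequency shift $\xi_1\mapsto\xi_1+\tfrac{2rk_1A_1}{B_1}$, and collection of the remaining $t_1$-free terms into $\phi^i_{\wedge_1,r}$, with the $j$-kernel handled identically. The paper's proof is exactly this computation written out line by line.
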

\begin{proof}
From the definition of $\mathcal{K}^i_{\wedge_1},$ we have
\begin{align*}
\mathcal{K}^i_{\wedge_1}(t_1+rk_1,\xi_1)
&=\frac{1}{\sqrt{2\pi}}e^{-i\left\{A_1(t_1+rk_1)^2+B_1(t_1+rk_1)\xi_1+C_1\xi_1^2+D_1(t_1+rk_1)+E_1\xi_1\right\}}\\
&=\frac{1}{\sqrt{2\pi}}e^{-i\left\{A_1t_1^2+B_1t_1\left(\xi_1+\frac{2rA_1k_1}{B_1}\right)+D_1t_1+C_1\xi_1^2+E_1\xi_1+B_1rk_1\xi_1\right\}}e^{-i(A_1r^2k_1^2+D_1rk_1)}\\
&=\frac{1}{\sqrt{2\pi}}e^{-i\left\{A_1t_1^2+B_1t_1\left(\xi_1+\frac{2rA_1k_1}{B_1}\right)+D_1t_1+C_1\left(\xi_1+\frac{2rA_1k_1}{B_1}\right)^2+E_1\left(\xi_1+\frac{2rA_1k_1}{B_1}\right)\right\}}\phi^i_{\wedge_1,r}(k_1,\xi_1),\\
\mbox{i.e.,}~\mathcal{K}^i_{\wedge_1}(t_1+rk_1,\xi_1)&=\mathcal{K}^i_{\wedge_1}\left(t_1,\xi_1+\frac{2rk_1A_1}{B_1}\right)\phi^i_{\wedge_1,r}(k_1,\xi_1).
\end{align*}
This proves equation \eqref{P5eqn29}. Similarly, equation \eqref{P5eqn31} can be proved.
\end{proof}
The theorem below gives the basic properties of the proposed STQQPFT.
\begin{theorem}\label{P5Theo4.2}
Let $g,g_1,g_2\in L^2_\mathbb{H}(\mathbb{R}^2)\cap L^\infty_\mathbb{H}(\mathbb{R}^2)$ be QWFs and  $f,f_1,f_2 \in L^2_\mathbb{H}(\mathbb{R}^2).$ Also let $\lambda\neq 0,~\boldsymbol k=(k_1,k_2)\in\mathbb{R}^2,$ $p,q\in\{x+iy:x,y\in\mathbb{R}\},~r,s\in\{x+jy:x,y\in\mathbb{R}\},$ then
\begin{enumerate}[label=(\roman*)]
\item{Boundedness}: $\left\|\mathcal{S}^{\wedge_1,\wedge_2}_{\mathbb{H},g}f\right\|_{L^\infty_\mathbb{H}(\mathbb{R}^2)}\leq\frac{1}{2\pi}\|g\|_{L^2_\mathbb{H}(\mathbb{R}^2)}\|f\|_{L^2_\mathbb{H}(\mathbb{R}^2)}.$
\item {Linearity}:
$\mathcal{S}^{\wedge_1,\wedge_2}_{\mathbb{H},g}(pf_1+qf_2)=p\left[\mathcal{S}^{\wedge_1,\wedge_2}_{\mathbb{H},g}f_1\right]+q\left[\mathcal{S}^{\wedge_1,\wedge_2}_{\mathbb{H},g}f_2\right]$
\item {Anti-linearity}:
$\mathcal{S}^{\wedge_1,\wedge_2}_{\mathbb{H},rg_1+sg_2}f=\left[\mathcal{S}^{\wedge_1,\wedge_2}_{\mathbb{H},g_1}f\right]\bar{r}+\left[\mathcal{S}^{\wedge_1,\wedge_2}_{\mathbb{H},g_2}f\right]\bar{s}.$
\item{Translation}: $\left(\mathcal{S}^{\wedge_1,\wedge_2}_{\mathbb{H},g}(\tau_{\boldsymbol k}f)\right)(\boldsymbol x,\boldsymbol \xi)=\phi^1_{\wedge_1}(k_1,\xi_1)\left(\mathcal{S}^{\wedge_1,\wedge_2}_{\mathbb{H},g}f\right)(\boldsymbol x-\boldsymbol k,\boldsymbol \xi'_{\boldsymbol x})\phi^j_{\wedge_2}(k_2,\xi_2),$ where 
$(\tau_{\boldsymbol k}f)(\boldsymbol t)=f(\boldsymbol t-\boldsymbol k),$ $\xi'_{\boldsymbol x}=\left(\xi_1+\frac{2A_1x_1}{B_1},\xi_2+\frac{2A_2x_2}{B_2}\right),$  
$\phi^i_{\wedge_1,1}(k_1,\xi_1,)$ and $\phi^j_{\wedge_2,1}(k_2,\xi_2),$ are obtained from \eqref{P5eqn30} and \eqref{P5eqn32} by replacing $r=1.$
\item {Scaling}: $\left(\mathcal{S}^{\wedge_1,\wedge_2}_{\mathbb{H},g_\lambda}f_{\lambda}\right)(\boldsymbol x,\boldsymbol\xi)=\left(\mathcal{S}^{\wedge_1',\wedge_2'}_{\mathbb{H},g}f\right)\left(\frac{1}{\lambda}\boldsymbol x,\boldsymbol\xi\right),$ where $(f_{\lambda})(\boldsymbol t)=\frac{1}{\lambda} f\left(\frac{1}{\lambda}\boldsymbol t\right),$ $\wedge_l'=\left(\lambda^2A_l,\lambda B_l,C_l,\lambda D_l,E_l\right),~l=1,2.$
\end{enumerate}
\end{theorem}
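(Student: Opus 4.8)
The plan is to verify all five items straight from the integral in Definition~\ref{P5Defn4.1}, where the only recurring subtlety is keeping each quaternion factor on the side of $f(\boldsymbol t)$ where it is created. The single structural fact I will use over and over is that $\mathcal{K}^i_{\wedge_1}(t_1,\xi_1)=\tfrac{1}{\sqrt{2\pi}}e^{-i(\cdots)}$, the scalars $p,q$, and the factors $\phi^i_{\wedge_1,1}$ all lie in the commutative subfield $\mathbb{R}\oplus i\mathbb{R}$, whereas $\mathcal{K}^j_{\wedge_2}(t_2,\xi_2)=\tfrac{1}{\sqrt{2\pi}}e^{-j(\cdots)}$, the scalars $\bar r,\bar s$, and the factors $\phi^j_{\wedge_2,1}$ all lie in $\mathbb{R}\oplus j\mathbb{R}$. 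Hence an $i$-plane factor may be transported freely past $\mathcal{K}^i_{\wedge_1}$ (but in general not past $f$ or past $\mathcal{K}^j_{\wedge_2}$), and a $j$-plane factor freely past $\mathcal{K}^j_{\wedge_2}$.

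For \textbf{(i)} I would estimate the integrand pointwise: by \eqref{P5eqn5} the modulus is multiplicative and $|\mathcal{K}^i_{\wedge_1}|=|\mathcal{K}^j_{\wedge_2}|=\tfrac{1}{\sqrt{2\pi}}$, so $\big|(\mathcal{S}^{\wedge_1,\wedge_2}_{\mathbb{H},g}f)(\boldsymbol x,\boldsymbol\xi)\big|\le\tfrac{1}{2\pi}\int_{\mathbb{R}^2}|f(\boldsymbol t)|\,|g(\boldsymbol t-\boldsymbol x)|\,d\boldsymbol t$, and the Cauchy--Schwarz inequality together with translation invariance of the $L^2_\mathbb{H}$-norm, $\|g(\cdot-\boldsymbol x)\|_{L^2_\mathbb{H}(\mathbb{R}^2)}=\|g\|_{L^2_\mathbb{H}(\mathbb{R}^2)}$, gives the claimed bound. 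For \textbf{(ii)}, since $p,q$ commute with $\mathcal{K}^i_{\wedge_1}(t_1,\xi_1)$ we may write $\mathcal{K}^i_{\wedge_1}(t_1,\xi_1)(pf_1+qf_2)(\boldsymbol t)=p\,\mathcal{K}^i_{\wedge_1}(t_1,\xi_1)f_1(\boldsymbol t)+q\,\mathcal{K}^i_{\wedge_1}(t_1,\xi_1)f_2(\boldsymbol t)$, and linearity of the integral (pulling $p,q$ out on the left) finishes it. For \textbf{(iii)}, by the anti-automorphism law in \eqref{P5eqn4} one has $\overline{(rg_1+sg_2)(\boldsymbol t-\boldsymbol x)}=\overline{g_1(\boldsymbol t-\boldsymbol x)}\,\bar r+\overline{g_2(\boldsymbol t-\boldsymbol x)}\,\bar s$; substituting this into the integrand splits it into two terms, in each of which $\bar r$ (resp. $\bar s$), being in the $j$-plane, commutes past $\mathcal{K}^j_{\wedge_2}(t_2,\xi_2)$ and can then be pulled out of the integral on the right, yielding $[\mathcal{S}^{\wedge_1,\wedge_2}_{\mathbb{H},g_1}f]\bar r+[\mathcal{S}^{\wedge_1,\wedge_2}_{\mathbb{H},g_2}f]\bar s$.

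For \textbf{(iv)} I would change variables $\boldsymbol u=\boldsymbol t-\boldsymbol k$ in the defining integral of $\mathcal{S}^{\wedge_1,\wedge_2}_{\mathbb{H},g}(\tau_{\boldsymbol k}f)$, so that $f(\boldsymbol t-\boldsymbol k)=f(\boldsymbol u)$, $g(\boldsymbol t-\boldsymbol x)=g(\boldsymbol u-(\boldsymbol x-\boldsymbol k))$, and the kernels become $\mathcal{K}^i_{\wedge_1}(u_1+k_1,\xi_1)$ and $\mathcal{K}^j_{\wedge_2}(u_2+k_2,\xi_2)$. Applying Lemma~\ref{P5Lemma4.1} with $r=1$ rewrites these as $\phi^i_{\wedge_1,1}(k_1,\xi_1)\,\mathcal{K}^i_{\wedge_1}\!\big(u_1,\xi_1+\tfrac{2A_1k_1}{B_1}\big)$ (after commuting the $i$-plane factor $\phi^i_{\wedge_1,1}$ to the left of $\mathcal{K}^i_{\wedge_1}$, as permitted by \eqref{P5eqn30}) and $\mathcal{K}^j_{\wedge_2}\!\big(u_2,\xi_2+\tfrac{2A_2k_2}{B_2}\big)\,\phi^j_{\wedge_2,1}(k_2,\xi_2)$; since $\phi^i_{\wedge_1,1}(k_1,\xi_1)$ now sits at the far left and $\phi^j_{\wedge_2,1}(k_2,\xi_2)$ at the far right of the integrand and neither depends on $\boldsymbol u$, both factor out of the integral, and what remains is precisely $(\mathcal{S}^{\wedge_1,\wedge_2}_{\mathbb{H},g}f)$ at window position $\boldsymbol x-\boldsymbol k$ and shifted frequency $\boldsymbol\xi'$. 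For \textbf{(v)} I would substitute $\boldsymbol u=\lambda^{-1}\boldsymbol t$ (Jacobian $\lambda^2$) in the integral for $\mathcal{S}^{\wedge_1,\wedge_2}_{\mathbb{H},g_\lambda}f_\lambda$: the two factors $\tfrac1\lambda$ coming from $f_\lambda$ and $g_\lambda$ cancel against the Jacobian, the window argument becomes $\boldsymbol u-\lambda^{-1}\boldsymbol x$, and the kernels become $\tfrac{1}{\sqrt{2\pi}}e^{-i(\lambda^2A_1u_1^2+\lambda B_1u_1\xi_1+C_1\xi_1^2+\lambda D_1u_1+E_1\xi_1)}=\mathcal{K}^i_{\wedge_1'}(u_1,\xi_1)$ and analogously $\mathcal{K}^j_{\wedge_2'}(u_2,\xi_2)$ with $\wedge_l'=(\lambda^2A_l,\lambda B_l,C_l,\lambda D_l,E_l)$; recognizing the resulting integral as $(\mathcal{S}^{\wedge_1',\wedge_2'}_{\mathbb{H},g}f)(\lambda^{-1}\boldsymbol x,\boldsymbol\xi)$ completes the proof.

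The only genuine obstacle is the non-commutative bookkeeping in parts (iii) and (iv): every quaternion factor must stay on the side of $f$ on which it arises, and it may be moved only past kernel factors lying in the same commutative plane ($i$-plane for left factors, $j$-plane for right factors); one must also remember that conjugation in \eqref{P5eqn4} reverses products, which is exactly why $rg_1+sg_2$ produces $\bar r$ and $\bar s$ on the \emph{right}. Once this discipline is respected, all five statements reduce to the elementary changes of variables above together with the modulus and conjugation identities \eqref{P5eqn4}--\eqref{P5eqn5} from Section~2 and Lemma~\ref{P5Lemma4.1} for part~(iv).
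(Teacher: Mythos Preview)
Your proposal is correct and follows essentially the same approach as the paper. The paper omits the proofs of (i)--(iii) as straightforward and, for (iv) and (v), performs exactly the change of variables and application of Lemma~\ref{P5Lemma4.1} that you describe; your write-up simply makes explicit the commutativity bookkeeping (the $i$-plane vs.\ $j$-plane placement of the scalar and phase factors) that the paper leaves implicit.
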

\begin{proof}
The proof of $(i)$ and $(ii)$ are straight forward so we omit their proof.

$(iii)$ We have from the definition \ref{P5Defn4.1}
\begin{align*}
\left(\mathcal{S}^{\wedge_1,\wedge_2}_{\mathbb{H},g}(\tau_{\boldsymbol k}f)\right)(\boldsymbol x,\boldsymbol\xi)=\int_{\mathbb{R}^2}\mathcal{K}^i_{\wedge_1}(t_1+k_1,\xi_1)f(\boldsymbol t)\overline{g(\boldsymbol t-(\boldsymbol x-\boldsymbol k))}\mathcal{K}^j_{\wedge_2}(t_2+k_1,\xi_2)d\boldsymbol t.
\end{align*}
Using lemma \eqref{P5Lemma4.1}, we get
\begin{align*}
\left(\mathcal{S}^{\wedge_1,\wedge_2}_{\mathbb{H},g}(\tau_{\boldsymbol k}f)\right)(\boldsymbol x,\boldsymbol\xi)
&=\int_{\mathbb{R}^2}\mathcal{K}^i_{\wedge_1}\left(t_1,\xi_1+\frac{2A_1k_1}{B_1}\right)\phi^i_{\wedge_1,1}(k_1,\xi_1)f(\boldsymbol t)\overline{g(\boldsymbol t-(\boldsymbol x-\boldsymbol k))}\mathcal{K}^j_{\wedge_2}\left(t_2,\frac{2A_2k_2}{B_2}\right)\phi^j_{\wedge_2,1}(k_2,\xi_2)d\boldsymbol t\\
&=\phi^i_{\wedge_1,1}(k_1,\xi_1)\left\{\int_{\mathbb{R}^2}\mathcal{K}^i_{\wedge_1}\left(t_1,\xi_1+\frac{2A_1k_1}{B_1}\right)f(\boldsymbol t)\overline{g(\boldsymbol t-(\boldsymbol x-\boldsymbol k))}\mathcal{K}^j_{\wedge_2}\left(t_2,\frac{2A_2k_2}{B_2}\right)d\boldsymbol t\right\}\phi^j_{\wedge_2,1}(k_2,\xi_2).
\end{align*}
Thus, we have
\begin{align*}
\left(\mathcal{S}^{\wedge_1,\wedge_2}_{\mathbb{H},g}(\tau_{\boldsymbol k}f)\right)(\boldsymbol x,\boldsymbol\xi)=\phi^i_{\wedge_1,1}(k_1,\xi_1)\left(\mathcal{S}^{\wedge_1,\wedge_2}_{\mathbb{H},g}f\right)(\boldsymbol x-\boldsymbol k,\boldsymbol\xi_{\boldsymbol x})\phi^j_{\wedge_2,1}(k_2,\xi_2).
\end{align*}
This proves $(iii).$

$(iv)$
We have
\begin{align}\label{P5eqn33}
\left(\mathcal{S}^{\wedge_1,\wedge_2}_{\mathbb{H},g_{\lambda}}f_{\lambda}\right)(\boldsymbol x,\boldsymbol\xi)
&=\int_{\mathbb{R}^2}\mathcal{K}^i_{\wedge_1}(\lambda t_1,\xi_1)f(\boldsymbol t)\overline{g\left(\boldsymbol t-\frac{1}{\lambda}\boldsymbol x\right)}\mathcal{K}^j_{\wedge_1}(\lambda t_2,\xi_2)d\boldsymbol t.
\end{align}
Now, 
\begin{align}\label{P5eqn34}
\mathcal{K}^i_{\wedge_1}(\lambda t_1,\xi_1)
&=\frac{1}{\sqrt{2\pi}}e^{-i\left((\lambda^2A_1)t_1^2+(\lambda B_1)t_1\xi_1+C_1\xi_1^2+D_1t_1+E_1\xi_1\right)}\notag\\
&=\mathcal{K}^i_{\wedge_1'}(t_1,\xi_1).
\end{align}
Similarly,
\begin{align}\label{P5eqn35}
\mathcal{K}^j_{\wedge_2}(\lambda t_2,\xi_2)=\mathcal{K}^j_{\wedge_2'}(t_2,\xi_2).
\end{align}
Using equations \eqref{P5eqn34} and \eqref{P5eqn35} in equation \eqref{P5eqn33}, we get
\begin{align*}
\left(\mathcal{S}^{\wedge_1,\wedge_2}_{\mathbb{H},g_{\lambda}}f_{\lambda}\right)(\boldsymbol x,\boldsymbol\xi)&=\int_{\mathbb{R}^2}\mathcal{K}^i_{\wedge_1'}(t_1,\xi_1)f(\boldsymbol t)\overline{g\left(\boldsymbol t-\frac{1}{\lambda}\boldsymbol x\right)}\mathcal{K}^j_{\wedge_2'}(t_2,\xi_2)d\boldsymbol t,\\
i.e., \left(\mathcal{S}^{\wedge_1,\wedge_2}_{\mathbb{H},g_{\lambda}}f_{\lambda}\right)(\boldsymbol x,\boldsymbol\xi)&=\left(\mathcal{S}^{\wedge_1',\wedge_2'}_{\mathbb{H},g}f\right)\left(\frac{1}{\lambda}\boldsymbol x,\boldsymbol\xi\right).
\end{align*}
This completes the proof.
\end{proof}
\begin{theorem}\label{P5Theo4.3}
(Inner product relation): If $g_1,g_2$ be two QWFs and $f_1,f_2\in L^2_\mathbb{H}(\mathbb{R}^2),$ then $\mathcal{S}^{\wedge_1,\wedge_2}_{\mathbb{H},g_1}f_1,~\mathcal{S}^{\wedge_1,\wedge_2}_{\mathbb{H},g_2}f_2\in L^2_\mathbb{H}(\mathbb{R}^2\times\mathbb{R}^2)$ and 
\begin{align}\label{P5eqn36}
\left\langle\mathcal{S}^{\wedge_1,\wedge_2}_{\mathbb{H},g_1}f_1,\mathcal{S}^{\wedge_1,\wedge_2}_{\mathbb{H},g_2}f_2\right\rangle=\frac{1}{|B_1B_2|}\langle f_1(\overline{g_1},\overline{g_2}),f_2\rangle.
\end{align}
\end{theorem}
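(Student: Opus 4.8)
The plan is to exploit the fact that, for each fixed window position $\boldsymbol x\in\mathbb{R}^2$, the STQQPFT is nothing but an ordinary QQPFT applied to a shifted windowed copy of $f$, and then to invoke the Parseval formula of Theorem~\ref{P5Theo3.2} fibre-wise in $\boldsymbol x$. Concretely, I would set $F^1_{\boldsymbol x}(\boldsymbol t)=f_1(\boldsymbol t)\overline{g_1(\boldsymbol t-\boldsymbol x)}$ and $F^2_{\boldsymbol x}(\boldsymbol t)=f_2(\boldsymbol t)\overline{g_2(\boldsymbol t-\boldsymbol x)}$. Since $f_i\in L^2_\mathbb{H}(\mathbb{R}^2)$ and $g_i\in L^\infty_\mathbb{H}(\mathbb{R}^2)$, the pointwise bound $|F^i_{\boldsymbol x}(\boldsymbol t)|\le\|g_i\|_{L^\infty_\mathbb{H}(\mathbb{R}^2)}\,|f_i(\boldsymbol t)|$ shows $F^i_{\boldsymbol x}\in L^2_\mathbb{H}(\mathbb{R}^2)$, and comparing Definition~\ref{P5Defn4.1} with Definition~\ref{P5Defn3.1} gives $\left(\mathcal{S}^{\wedge_1,\wedge_2}_{\mathbb{H},g_i}f_i\right)(\boldsymbol x,\cdot)=\mathcal{Q}^{\wedge_1,\wedge_2}_{\mathbb{H}}F^i_{\boldsymbol x}$, so Theorem~\ref{P5Theo3.2} applies on each fibre.

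Next I would expand the left-hand side using \eqref{P5eqn9} as $\int_{\mathbb{R}^2}Sc\Big[\int_{\mathbb{R}^2}\big(\mathcal{Q}^{\wedge_1,\wedge_2}_{\mathbb{H}}F^1_{\boldsymbol x}\big)(\boldsymbol\xi)\,\overline{\big(\mathcal{Q}^{\wedge_1,\wedge_2}_{\mathbb{H}}F^2_{\boldsymbol x}\big)(\boldsymbol\xi)}\,d\boldsymbol\xi\Big]\,d\boldsymbol x$ and apply \eqref{P5eqn19} to the inner $\boldsymbol\xi$-integral (with $f=F^1_{\boldsymbol x}$, $g=F^2_{\boldsymbol x}$), obtaining $\frac{1}{|B_1B_2|}\int_{\mathbb{R}^2}Sc\big[\int_{\mathbb{R}^2}F^1_{\boldsymbol x}(\boldsymbol t)\overline{F^2_{\boldsymbol x}(\boldsymbol t)}\,d\boldsymbol t\big]\,d\boldsymbol x$. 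Using $\overline{F^2_{\boldsymbol x}(\boldsymbol t)}=g_2(\boldsymbol t-\boldsymbol x)\overline{f_2(\boldsymbol t)}$ from \eqref{P5eqn4}, the integrand is $f_1(\boldsymbol t)\,\overline{g_1(\boldsymbol t-\boldsymbol x)}\,g_2(\boldsymbol t-\boldsymbol x)\,\overline{f_2(\boldsymbol t)}$. A Fubini argument — legitimate since $\iint|f_1(\boldsymbol t)||g_1(\boldsymbol t-\boldsymbol x)||g_2(\boldsymbol t-\boldsymbol x)||f_2(\boldsymbol t)|\,d\boldsymbol t\,d\boldsymbol x\le\|f_1\|_{L^2_\mathbb{H}(\mathbb{R}^2)}\|f_2\|_{L^2_\mathbb{H}(\mathbb{R}^2)}\|g_1\|_{L^2_\mathbb{H}(\mathbb{R}^2)}\|g_2\|_{L^2_\mathbb{H}(\mathbb{R}^2)}<\infty$ by Cauchy–Schwarz (carrying out the $\boldsymbol x$-integral first) — then lets me integrate in $\boldsymbol x$ before $\boldsymbol t$, and the substitution $\boldsymbol y=\boldsymbol t-\boldsymbol x$ factors the $\boldsymbol x$-integral into the constant quaternion $\int_{\mathbb{R}^2}\overline{g_1(\boldsymbol y)}g_2(\boldsymbol y)\,d\boldsymbol y=(\overline{g_1},\overline{g_2})$, which remains wedged between $f_1(\boldsymbol t)$ and $\overline{f_2(\boldsymbol t)}$. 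This gives $\frac{1}{|B_1B_2|}Sc\int_{\mathbb{R}^2}f_1(\boldsymbol t)(\overline{g_1},\overline{g_2})\overline{f_2(\boldsymbol t)}\,d\boldsymbol t=\frac{1}{|B_1B_2|}\langle f_1(\overline{g_1},\overline{g_2}),f_2\rangle$, which is \eqref{P5eqn36}; the claimed membership $\mathcal{S}^{\wedge_1,\wedge_2}_{\mathbb{H},g_i}f_i\in L^2_\mathbb{H}(\mathbb{R}^2\times\mathbb{R}^2)$ then falls out of the very same computation taken with $f_1=f_2$, $g_1=g_2$, since the right-hand side is finite.

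The $L^2$-membership and the Fubini/change-of-variables steps are routine; the point that genuinely needs care is the \emph{non-commutativity} of $\mathbb{H}$. Because $f_1(\boldsymbol t)$ and $\overline{f_2(\boldsymbol t)}$ sit on opposite sides of the $\boldsymbol x$-dependent factor $\overline{g_1(\boldsymbol t-\boldsymbol x)}g_2(\boldsymbol t-\boldsymbol x)$ and cannot be commuted past it, the constant $(\overline{g_1},\overline{g_2})$ must be kept exactly in the middle — this is precisely why the statement reads $\langle f_1(\overline{g_1},\overline{g_2}),f_2\rangle$ rather than $(\overline{g_1},\overline{g_2})\langle f_1,f_2\rangle$. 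I would also flag that only the scalar-part Parseval identity \eqref{P5eqn19} is available for the two-sided transform (the full quaternion inner product is not preserved), so every manipulation above must stay inside $Sc[\cdot]$; this is harmless because $Sc$ is $\mathbb{R}$-linear and commutes with the absolutely convergent integrals involved.
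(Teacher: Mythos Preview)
Your argument is correct and follows exactly the route the paper takes: recognize $\left(\mathcal{S}^{\wedge_1,\wedge_2}_{\mathbb{H},g_i}f_i\right)(\boldsymbol x,\cdot)=\mathcal{Q}^{\wedge_1,\wedge_2}_{\mathbb{H}}\big(f_i(\cdot)\overline{g_i(\cdot-\boldsymbol x)}\big)$, apply the Parseval identity \eqref{P5eqn19} in the $\boldsymbol\xi$-variable, then integrate in $\boldsymbol x$ and extract the constant $(\overline{g_1},\overline{g_2})$ sandwiched between $f_1(\boldsymbol t)$ and $\overline{f_2(\boldsymbol t)}$. Your explicit Fubini justification and the remarks on keeping manipulations inside $Sc[\cdot]$ because of non-commutativity are welcome additions that the paper leaves implicit.
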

\begin{proof}
We have
\begin{align*}
\int_{\mathbb{R}^2}\int_{\mathbb{R}^2}\left|\left(\mathcal{S}^{\wedge_1,\wedge_2}_{\mathbb{H},g}\right)(\boldsymbol x,\boldsymbol\xi)\right|^2d\boldsymbol x\boldsymbol\xi
&=\int_{\mathbb{R}^2}\left\{\int_{\mathbb{R}^2}\left|\left(\mathcal{Q}^{\wedge_1,\wedge_2}_{\mathbb{H}}\{f_1(\cdot)\overline{g_1(\cdot-\boldsymbol x)}\}\right)(\boldsymbol\xi)\right|^2d\boldsymbol\xi\right\}d\boldsymbol x\\
&=\frac{1}{|B_1B_2|}\int_{\mathbb{R}^2}\left\{\int_{\mathbb{R}^2}|f(\boldsymbol t)\overline{g(\boldsymbol t-\boldsymbol x)}|^2d\boldsymbol t\right\}d\boldsymbol x,~\mbox{using Parseval's Identity}\\
&=\frac{1}{|B_1B_2|}\|f\|^2_{L^2_\mathbb{H}(\mathbb{R}^2)}\|g\|^2_{L^2_\mathbb{H}(\mathbb{R}^2)}.
\end{align*}
Thus, $\mathcal{S}^{\wedge_1,\wedge_2}_{\mathbb{H},g_1}f_1\in L^2_\mathbb{H}(\mathbb{R}^2\times\mathbb{R}^2).$ Similarly, $\mathcal{S}^{\wedge_1,\wedge_2}_{\mathbb{H},g_2}f_2\in L^2_\mathbb{H}(\mathbb{R}^2\times\mathbb{R}^2).$

Now,
\begin{align*}
\left\langle\mathcal{S}^{\wedge_1,\wedge_2}_{\mathbb{H},g_1}f_1,\mathcal{S}^{\wedge_1,\wedge_2}_{\mathbb{H},g_2}f_2\right\rangle
&=Sc\int_{\mathbb{R}^2}\int_{\mathbb{R}^2}\left(\mathcal{Q}^{\wedge_1,\wedge_2}_{\mathbb{H}}\{f_1(\cdot)\overline{g_1(\cdot-\boldsymbol x)}\}\right)(\boldsymbol\xi)\overline{\left(\mathcal{Q}^{\wedge_1,\wedge_2}_{\mathbb{H}}\{f_2(\cdot)\overline{g_2(\cdot-\boldsymbol x)}\}\right)(\boldsymbol\xi)}d\boldsymbol xd\boldsymbol\xi\\
&=\frac{1}{|B_1B_2|}Sc\int_{\mathbb{R}^2}\left\{\int_{\mathbb{R}^2}f_1(\boldsymbol t)\overline{g_1(\boldsymbol t-\boldsymbol x)}~\overline{f_2(\boldsymbol t)\overline{g_{2}(\boldsymbol t-\boldsymbol x)}}d\boldsymbol t\right\}d\boldsymbol x\\
&=\frac{1}{|B_1B_2|}Sc\int_{\mathbb{R}^2}f_1(\boldsymbol t)\left(\overline{g_1},\overline{g_2}\right)\overline{f_2(\boldsymbol t)}d\boldsymbol t.
\end{align*}
Thus, it follows that
\begin{align*}
\left\langle\mathcal{S}^{\wedge_1,\wedge_2}_{\mathbb{H},g_1}f_1,\mathcal{S}^{\wedge_1,\wedge_2}_{\mathbb{H},g_2}f_2\right\rangle=\frac{1}{|B_1B_2|}\langle f_1\left(\overline{g_1},\overline{g_2}\right),f_2\rangle.
\end{align*}
This finishes the proof.
\end{proof}
\begin{remark}
From theorem \ref{P5Theo4.3}, we have the following results:
\begin{enumerate}
\item If $g_1=g_2=g$ in equation \eqref{P5eqn36}, then 
\begin{align*}
\left\langle\mathcal{S}^{\wedge_1,\wedge_2}_{\mathbb{H},g_1}f_1,\mathcal{S}^{\wedge_1,\wedge_2}_{\mathbb{H},g_2}f_2\right\rangle=\frac{1}{|B_1B_2|}\|g\|^2_{L^2_\mathbb{H}(\mathbb{R}^2)}\langle f_1,f_2\rangle.
\end{align*}
\item If $f_1=f_2=f$ in equation \eqref{P5eqn36}, then
\begin{align*}
\left\langle\mathcal{S}^{\wedge_1,\wedge_2}_{\mathbb{H},g_1}f_1,\mathcal{S}^{\wedge_1,\wedge_2}_{\mathbb{H},g_2}f_2\right\rangle=\frac{1}{|B_1B_2|}\|f\|^2_{L^2_\mathbb{H}(\mathbb{R}^2)}\langle g_1,g_2\rangle.
\end{align*}
\item If $f_1=f=f_2$ and $g_1=g=g_2$ in equation \eqref{P5eqn36}, then 
\begin{align}\label{P5eqn37}
\|\mathcal{S}^{\wedge_1,\wedge_2}_{\mathbb{H},g}f\|^2_{L^2_\mathbb{H}(\mathbb{R}^2\times\mathbb{R}^2)}=\frac{1}{|B_1B_2|}\|f\|^2_{L^2_\mathbb{H}(\mathbb{R}^2)}\|g\|^2_{L^2_\mathbb{H}(\mathbb{R}^2)}.
\end{align}
\end{enumerate} 
\end{remark}
The theorem below gives the reconstruction formula for the STQQPFT.
\begin{theorem}\label{P5Theo4.4}
(Inversion formula): Let $g$ be a QWF and $f\in L^2_\mathbb{H}(\mathbb{R}^2),$ then
$$f(\boldsymbol t)=\frac{|B_1B_2|}{\|g\|^2_{L^2_\mathbb{H}(\mathbb{R}^2)}}\int_{\mathbb{R}^2}\int_{\mathbb{R}^2}\overline{\mathcal{K}^i_{\wedge_1}(t_1,\xi_1)}\left(\mathcal{S}^{\wedge_1,\wedge_2}_{\mathbb{H},g}f\right)(\boldsymbol x,\boldsymbol\xi)\overline{\mathcal{K}^j_{\wedge_2}(t_2,\xi_2)}g(\boldsymbol t-\boldsymbol x)d\boldsymbol xd\boldsymbol\xi.$$
\end{theorem}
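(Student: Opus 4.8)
The plan is to exploit the identity, already used implicitly in the proof of Theorem \ref{P5Theo4.3}, that for each fixed window position $\boldsymbol x\in\mathbb{R}^2$ the STQQPFT is just the QQPFT of the windowed signal $h_{\boldsymbol x}(\boldsymbol t):=f(\boldsymbol t)\overline{g(\boldsymbol t-\boldsymbol x)}$; that is,
\begin{align*}
\left(\mathcal{S}^{\wedge_1,\wedge_2}_{\mathbb{H},g}f\right)(\boldsymbol x,\boldsymbol\xi)=\left(\mathcal{Q}^{\wedge_1,\wedge_2}_{\mathbb{H}}h_{\boldsymbol x}\right)(\boldsymbol\xi).
\end{align*}
Since $g\in L^\infty_\mathbb{H}(\mathbb{R}^2)$, we have $\|h_{\boldsymbol x}\|_{L^2_\mathbb{H}(\mathbb{R}^2)}\le\|g\|_{L^\infty_\mathbb{H}(\mathbb{R}^2)}\|f\|_{L^2_\mathbb{H}(\mathbb{R}^2)}<\infty$ for every $\boldsymbol x$, so the QQPFT inversion formula \eqref{P5eqn14} is applicable to $h_{\boldsymbol x}$.

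Next I would apply \eqref{P5eqn14} to $h_{\boldsymbol x}$, obtaining
\begin{align*}
f(\boldsymbol t)\overline{g(\boldsymbol t-\boldsymbol x)}=|B_1B_2|\int_{\mathbb{R}^2}\overline{\mathcal{K}^i_{\wedge_1}(t_1,\xi_1)}\left(\mathcal{S}^{\wedge_1,\wedge_2}_{\mathbb{H},g}f\right)(\boldsymbol x,\boldsymbol\xi)\overline{\mathcal{K}^j_{\wedge_2}(t_2,\xi_2)}d\boldsymbol\xi,
\end{align*}
and then multiply both sides on the right by $g(\boldsymbol t-\boldsymbol x)$, using $\overline{g(\boldsymbol t-\boldsymbol x)}\,g(\boldsymbol t-\boldsymbol x)=|g(\boldsymbol t-\boldsymbol x)|^2\in\mathbb{R}$. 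Here the placement matters because of non-commutativity, but $g(\boldsymbol t-\boldsymbol x)$ sits at the right-most position on both sides (and does not depend on $\boldsymbol\xi$, so it passes freely through the $\boldsymbol\xi$-integral as a right multiplier), so the step is legitimate and yields
\begin{align*}
f(\boldsymbol t)\,|g(\boldsymbol t-\boldsymbol x)|^2=|B_1B_2|\int_{\mathbb{R}^2}\overline{\mathcal{K}^i_{\wedge_1}(t_1,\xi_1)}\left(\mathcal{S}^{\wedge_1,\wedge_2}_{\mathbb{H},g}f\right)(\boldsymbol x,\boldsymbol\xi)\overline{\mathcal{K}^j_{\wedge_2}(t_2,\xi_2)}g(\boldsymbol t-\boldsymbol x)\,d\boldsymbol\xi.
\end{align*}

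Finally I would integrate this identity over $\boldsymbol x\in\mathbb{R}^2$. On the left, $\int_{\mathbb{R}^2}f(\boldsymbol t)|g(\boldsymbol t-\boldsymbol x)|^2d\boldsymbol x=f(\boldsymbol t)\,\|g\|^2_{L^2_\mathbb{H}(\mathbb{R}^2)}$, because the real scalar $\int_{\mathbb{R}^2}|g(\boldsymbol t-\boldsymbol x)|^2d\boldsymbol x$ is translation invariant and equals $\|g\|^2_{L^2_\mathbb{H}(\mathbb{R}^2)}$; on the right one gets exactly the claimed double integral, and dividing through by $\|g\|^2_{L^2_\mathbb{H}(\mathbb{R}^2)}$ gives the formula. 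The step requiring the most care — and the one I expect to be the main obstacle — is justifying that the inner $d\boldsymbol\xi$ inversion integral and the outer $d\boldsymbol x$ average may be carried out in the stated order, i.e. a Fubini/Tonelli argument for quaternion-valued integrands together with the precise sense (pointwise a.e.\ versus $L^2$-limit of truncated integrals) in which \eqref{P5eqn14} holds. This can be dealt with either by first taking $f$ and $g$ in a dense convenient subclass (e.g.\ Schwartz functions), where all integrals are absolutely convergent, and then passing to the limit using the boundedness estimate of Theorem \ref{P5Theo4.2}(i) and the norm identity \eqref{P5eqn37}, or directly by a dominated-convergence estimate exploiting $|\mathcal{K}^i_{\wedge_1}|=|\mathcal{K}^j_{\wedge_2}|=(2\pi)^{-1/2}$ and $g\in L^2_\mathbb{H}\cap L^\infty_\mathbb{H}$.
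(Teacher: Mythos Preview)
Your argument is correct, and it is genuinely different from the paper's. The paper proves the inversion formula by a duality argument: it invokes the inner product relation (Theorem~\ref{P5Theo4.3} with $g_1=g_2=g$) to write $\langle f,h\rangle=\frac{|B_1B_2|}{\|g\|^2}\big\langle \mathcal{S}^{\wedge_1,\wedge_2}_{\mathbb{H},g}f,\,\mathcal{S}^{\wedge_1,\wedge_2}_{\mathbb{H},g}h\big\rangle$ for arbitrary $h\in L^2_\mathbb{H}(\mathbb{R}^2)$, then expands $\mathcal{S}^{\wedge_1,\wedge_2}_{\mathbb{H},g}h$ and uses the cyclic property \eqref{P5eqn3} of $Sc$ to rearrange the integrand into the form $\langle \text{RHS},h\rangle$; since $h$ is arbitrary, the claimed identity follows. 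Your route instead works pointwise: you apply the QQPFT inversion \eqref{P5eqn14} to the windowed signal $f(\cdot)\overline{g(\cdot-\boldsymbol x)}$ for each fixed $\boldsymbol x$, right-multiply by $g(\boldsymbol t-\boldsymbol x)$ (exploiting that $\bar g g=|g|^2$ is real), and then average over $\boldsymbol x$. The paper's weak-form argument sidesteps the Fubini and pointwise-convergence questions you flagged, at the cost of relying on Theorem~\ref{P5Theo4.3}; your argument is more self-contained and constructive, and makes transparent why the factor $\|g\|^2$ appears, but indeed needs the density or dominated-convergence justification you outlined to be fully rigorous.
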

\begin{proof}
We have
\begin{align*}
\langle f,h\rangle
&=\frac{|B_1B_2|}{\|g\|^2_{L^2_\mathbb{H}(\mathbb{R}^2)}}Sc\int_{\mathbb{R}^2}\int_{\mathbb{R}^2}\left(\mathcal{S}^{\wedge_1,\wedge_2}_{\mathbb{H},g}f\right)(\boldsymbol x,\boldsymbol\xi)\overline{\left\{\int_{\mathbb{R}^2}\mathcal{K}^i_{\wedge_1}(t_1,\xi_1)h(\boldsymbol t)\overline{g(\boldsymbol t-\boldsymbol x)}\mathcal{K}^j_{\wedge_2}(t_2,\xi_2)d\boldsymbol t\right\}}d\boldsymbol xd\boldsymbol\xi\\
&=\frac{|B_1B_2|}{\|g\|^2_{L^2_\mathbb{H}(\mathbb{R}^2)}}\int_{\mathbb{R}^2}\int_{\mathbb{R}^2}\int_{\mathbb{R}^2}Sc\left\{\left(\mathcal{S}^{\wedge_1,\wedge_2}_{\mathbb{H},g}f\right)(\boldsymbol x,\boldsymbol\xi)\overline{\mathcal{K}^j_{\wedge_2}(t_2,\xi_2)}g(\boldsymbol t-\boldsymbol x)\overline{h(\boldsymbol t)}\overline{\mathcal{K}^i_{\wedge_1}(t_1,\xi_1)}\right\}d\boldsymbol td\boldsymbol xd\boldsymbol\xi\\
&=\frac{|B_1B_2|}{\|g\|^2_{L^2_\mathbb{H}(\mathbb{R}^2)}}Sc\int_{\mathbb{R}^2}\left\{\int_{\mathbb{R}^2}\int_{\mathbb{R}^2} \overline{\mathcal{K}^i_{\wedge_1}(t_1,\xi_1)}\left(\mathcal{S}^{\wedge_1,\wedge_2}_{\mathbb{H},g}f\right)(\boldsymbol x,\boldsymbol\xi)\overline{\mathcal{K}^j_{\wedge_2}(t_2,\xi_2)}g(\boldsymbol t-\boldsymbol x)d\boldsymbol xd\boldsymbol\xi\right\}\overline{h(\boldsymbol t)}d\boldsymbol t\\
&=\frac{|B_1B_2|}{\|g\|^2_{L^2_\mathbb{H}(\mathbb{R}^2)}}\left\langle\int_{\mathbb{R}^2}\int_{\mathbb{R}^2} \overline{\mathcal{K}^i_{\wedge_1}(t_1,\xi_1)}\left(\mathcal{S}^{\wedge_1,\wedge_2}_{\mathbb{H},g}f\right)(\boldsymbol x,\boldsymbol\xi)\overline{\mathcal{K}^j_{\wedge_2}(t_2,\xi_2)}g(\cdot-\boldsymbol x)d\boldsymbol xd\boldsymbol\xi,h(\cdot)\right\rangle.
\end{align*}
Since $h\in L^2_\mathbb{H}(\mathbb{R}^2)$ is arbitrary, it follows that
$$f(\boldsymbol t)=\frac{|B_1B_2|}{\|g\|^2_{L^2_\mathbb{H}(\mathbb{R}^2)}}\int_{\mathbb{R}^2}\int_{\mathbb{R}^2}\overline{\mathcal{K}^i_{\wedge_1}(t_1,\xi_1)}\left(\mathcal{S}^{\wedge_1,\wedge_2}_{\mathbb{H},g}f\right)(\boldsymbol x,\boldsymbol\xi)\overline{\mathcal{K}^j_{\wedge_2}(t_2,\xi_2)}g(\boldsymbol t-\boldsymbol x)d\boldsymbol xd\boldsymbol\xi.$$
This completes the proof.
\end{proof}

\subsection{Quaternion ambiguity function and Wigner-Ville distribution associated to the QQPFT}
In this subsection, we give the definitions of two sided QQPAF and QQPWVD and obtain their relation with that of the proposed STQQPFT.   
\begin{definition}\label{P5Defn4.2}
The two-sided quaternion quadratic phase ambiguity function (QQPAF) of  $f,g\in L^2_\mathbb{H}(\mathbb{R}^2),$ is defined by 
\begin{align*}
\left(\mathcal{A}^{\wedge_1,\wedge_2}_{\mathbb{H}}(f,g)\right)(\boldsymbol x,\boldsymbol\xi)=\int_{\mathbb{R}^2}\mathcal{K}^i_{\wedge_1}(t_1,\xi_1)f\left(\boldsymbol t+\frac{1}{2}\boldsymbol x\right)\overline{g\left(\boldsymbol t-\frac{1}{2}\boldsymbol x\right)}\mathcal{K}^j_{\wedge_2}\left(t_2,\xi_2\right)d\boldsymbol t, 
\end{align*}
where $\mathcal{K}^i_{\wedge_1}(t_1,\xi_1)$ and $\mathcal{K}^j_{\wedge_2}(t_2,\xi_2)$ are given by equations \eqref{P5eqn12} and \eqref{P5eqn13} respectively.
\end{definition}
The following theorem gives the relation between the QQPAF and the STQQPFT.
\begin{theorem}\label{P5Theo4.5}
If $g$ be a QWF and $f\in L^2_\mathbb{H}(\mathbb{R}^2),$ then 
\begin{align*}
\left(\mathcal{A}^{\wedge_1,\wedge_2}_{\mathbb{H}}(f,g)\right)(\boldsymbol x,\boldsymbol\xi)=\phi^i_{\wedge_1,-\frac{1}{2}}(x_1,\xi_1)\left(\mathcal{S}^{\wedge_1,\wedge_2}_{\mathbb{H},g}f\right)(\boldsymbol x,\boldsymbol\xi'_{\boldsymbol x})\phi^j_{\wedge_2,-\frac{1}{2}}(x_2,\xi_2),~\boldsymbol\xi'_{\boldsymbol x}=\left(\xi_1-\frac{A_1x_1}{B_1},\xi_2-\frac{A_2x_2}{B_2}\right)
\end{align*}
where $\phi^i_{\wedge_1,-\frac{1}{2}}(x_1,\xi_1)$ and $\phi^j_{\wedge_2,-\frac{1}{2}}(x_2,\xi_2)$ are obtained from equations \eqref{P5eqn30} and \eqref{P5eqn32} by replacing $r=-\frac{1}{2}$.
\end{theorem}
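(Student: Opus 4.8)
The plan is to reduce the defining integral of $\mathcal{A}^{\wedge_1,\wedge_2}_{\mathbb{H}}(f,g)$ to that of $\mathcal{S}^{\wedge_1,\wedge_2}_{\mathbb{H},g}f$ by a translation of the integration variable followed by an application of Lemma \ref{P5Lemma4.1}, in the same spirit as the proof of the translation property in Theorem \ref{P5Theo4.2}.

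First I would perform the change of variables $\boldsymbol u=\boldsymbol t+\frac12\boldsymbol x$ (so $\boldsymbol t=\boldsymbol u-\frac12\boldsymbol x$, with unit Jacobian) in Definition \ref{P5Defn4.2}. This turns $f(\boldsymbol t+\frac12\boldsymbol x)$ into $f(\boldsymbol u)$, $\overline{g(\boldsymbol t-\frac12\boldsymbol x)}$ into $\overline{g(\boldsymbol u-\boldsymbol x)}$, and the two kernel factors into $\mathcal{K}^i_{\wedge_1}(u_1-\frac12 x_1,\xi_1)$ and $\mathcal{K}^j_{\wedge_2}(u_2-\frac12 x_2,\xi_2)$. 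Next I would apply Lemma \ref{P5Lemma4.1} with $r=-\frac12$ and $\boldsymbol k=\boldsymbol x$ to rewrite $\mathcal{K}^i_{\wedge_1}(u_1-\frac12 x_1,\xi_1)=\mathcal{K}^i_{\wedge_1}\bigl(u_1,\xi_1-\frac{A_1x_1}{B_1}\bigr)\phi^i_{\wedge_1,-\frac12}(x_1,\xi_1)$ and, symmetrically, $\mathcal{K}^j_{\wedge_2}(u_2-\frac12 x_2,\xi_2)=\mathcal{K}^j_{\wedge_2}\bigl(u_2,\xi_2-\frac{A_2x_2}{B_2}\bigr)\phi^j_{\wedge_2,-\frac12}(x_2,\xi_2)$; note the shifted second arguments are exactly the components of $\boldsymbol\xi'_{\boldsymbol x}$.

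The heart of the argument is then a non-commutative bookkeeping step. Since $\phi^i_{\wedge_1,-\frac12}(x_1,\xi_1)$ has the form $e^{-i(\cdots)}$ it lies in the commutative subfield $\{x+iy:x,y\in\mathbb{R}\}$ and therefore commutes with $\mathcal{K}^i_{\wedge_1}$; hence it can be slid to the extreme left of the integrand (but not past $f$ or the $j$-factors). Dually, $\phi^j_{\wedge_2,-\frac12}(x_2,\xi_2)\in\{x+jy:x,y\in\mathbb{R}\}$ commutes with $\mathcal{K}^j_{\wedge_2}$ and can be slid to the extreme right. As neither phase factor depends on $\boldsymbol u$, both come out of the integral, on the left and right respectively, leaving
\[
\phi^i_{\wedge_1,-\frac12}(x_1,\xi_1)\left\{\int_{\mathbb{R}^2}\mathcal{K}^i_{\wedge_1}\Bigl(u_1,\xi_1-\tfrac{A_1x_1}{B_1}\Bigr)f(\boldsymbol u)\overline{g(\boldsymbol u-\boldsymbol x)}\mathcal{K}^j_{\wedge_2}\Bigl(u_2,\xi_2-\tfrac{A_2x_2}{B_2}\Bigr)d\boldsymbol u\right\}\phi^j_{\wedge_2,-\frac12}(x_2,\xi_2).
\]
The bracketed integral is, by Definition \ref{P5Defn4.1}, precisely $\bigl(\mathcal{S}^{\wedge_1,\wedge_2}_{\mathbb{H},g}f\bigr)(\boldsymbol x,\boldsymbol\xi'_{\boldsymbol x})$, which yields the claimed identity.

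I expect the only genuinely delicate point to be this commutativity bookkeeping: one must commute $\phi^i_{\wedge_1,-\frac12}$ past the $i$-kernel only and $\phi^j_{\wedge_2,-\frac12}$ past the $j$-kernel only, which is legitimate precisely because each phase shares its imaginary unit with the kernel it has to cross. Everything else — the variable change, the substitution from Lemma \ref{P5Lemma4.1}, and recognizing the residual integral as the STQQPFT — is routine.
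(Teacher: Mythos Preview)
Your proposal is correct and follows essentially the same route as the paper's proof: the same substitution $\boldsymbol u=\boldsymbol t+\tfrac12\boldsymbol x$, the same application of Lemma \ref{P5Lemma4.1} with $r=-\tfrac12$, and the same extraction of the $\phi^i$ and $\phi^j$ phases to the outside before recognizing the STQQPFT. If anything, you are more explicit than the paper about why $\phi^i_{\wedge_1,-\frac12}$ may be commuted past $\mathcal{K}^i_{\wedge_1}$ (and dually for $j$), a point the paper leaves implicit.
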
 
\begin{proof}
From the definition of $\mathcal{A}^{\wedge_1,\wedge_2}_{\mathbb{H}}(f,g),$ it follows that
\begin{align*}
\left(\mathcal{A}^{\wedge_1,\wedge_2}_{\mathbb{H}}(f,g)\right)(\boldsymbol x,\boldsymbol\xi)=\int_{\mathbb{R}^2}\mathcal{K}^i_{\wedge_1}\left(t_1-\frac{x_1}{2},\xi_2\right)f(\boldsymbol t)\overline{g(\boldsymbol t-\boldsymbol x)}\mathcal{K}^j_{\wedge_2}\left(t_2-\frac{x_2}{2},\xi_2\right)d\boldsymbol t.
\end{align*}
Using equations \eqref{P5eqn29} and \eqref{P5eqn31} for $r=-\frac{1}{2},$ we get
\begin{align*}
\left(\mathcal{A}^{\wedge_1,\wedge_2}_{\mathbb{H}}(f,g)\right)(\boldsymbol x,\boldsymbol\xi)
&=\int_{\mathbb{R}^2}\mathcal{K}^i_{\wedge_1}\left(t_1,\xi_1-\frac{A_1x_1}{B_1}\right)\phi^i_{\wedge_1,-\frac{1}{2}}(x_1,\xi_1)f(\mathbb{\boldsymbol t})\overline{g(\boldsymbol t-\boldsymbol x)}\mathcal{K}^j_{\wedge_2}\left(t_2,\xi_2-\frac{A_2x_2}{B_2}\right)\phi^j_{\wedge_2,-\frac{1}{2}}(x_2,\xi_2)d\boldsymbol t\\
&=\phi^i_{\wedge_1,-\frac{1}{2}}(x_1,\xi_1)\left\{\int_{\mathbb{R}^2}\mathcal{K}^i_{\wedge_1}\left(t_1,\xi_1-\frac{A_1x_1}{B_1}\right)f(\mathbb{\boldsymbol t})\overline{g(\boldsymbol t-\boldsymbol x)}\mathcal{K}^j_{\wedge_2}\left(t_2,\xi_2-\frac{A_2x_2}{B_2}\right)d\boldsymbol t\right\}\phi^j_{\wedge_2,-\frac{1}{2}}(x_2,\xi_2).
\end{align*}
This gives
\begin{align*}
\left(\mathcal{A}^{\wedge_1,\wedge_2}_{\mathbb{H}}(f,g)\right)(\boldsymbol x,\boldsymbol\xi)=\phi^i_{\wedge_1,-\frac{1}{2}}(x_1,\xi_1)\left(\mathcal{S}^{\wedge_1,\wedge_2}_{\mathbb{H},g}f\right)(\boldsymbol x,\boldsymbol\xi'_{\boldsymbol x})\phi^j_{\wedge_2,-\frac{1}{2}}(x_2,\xi_2).
\end{align*}
This completes the proof.

\end{proof}
\begin{definition}\label{P5Defn4.3}
The two-sided quaternion quadratic phase Wigner-Ville distribution (QQPWVD) of  $f,g\in L^2_\mathbb{H}(\mathbb{R}^2),$ is defined by
\begin{align*}
\left(\mathcal{W}^{\wedge_1,\wedge_2}_{\mathbb{H}}(f,g)\right)(\boldsymbol x,\boldsymbol\xi)=\int_{\mathbb{R}^2}\mathcal{K}^i_{\wedge_1}(t_1,\xi_1)f\left(\boldsymbol x+\frac{1}{2}\boldsymbol t\right)\overline{g\left(\boldsymbol x-\frac{1}{2}\boldsymbol t\right)}\mathcal{K}^j_{\wedge_2}\left(t_2,\xi_2\right)d\boldsymbol t, 
\end{align*}
where $\mathcal{K}^i_{\wedge_1}(t_1,\xi_1)$ and $\mathcal{K}^j_{\wedge_2}(t_2,\xi_2)$ are given by equations \eqref{P5eqn12} and \eqref{P5eqn13} respectively.
\end{definition}
The following theorem gives the relation between the QQPWVD and the STQQPFT.
\begin{theorem}\label{P5Theo4.6}
If $g$ be a QWF and $f\in L^2_\mathbb{H}(\mathbb{R}^2),$ then 
\begin{align*}
\left(\mathcal{W}^{\wedge_1,\wedge_2}_{\mathbb{H}}(f,g)\right)(\boldsymbol x,\boldsymbol\xi)=4\psi^i_{\wedge_1}(x_1,\xi_1)\left(\mathcal{S}^{\wedge_1',\wedge_2'}_{\mathbb{H},\tilde{g}}f\right)(2\boldsymbol x,\boldsymbol\xi'_{\boldsymbol x})\psi^j_{\wedge_2}(x_2,\xi_2),~\boldsymbol\xi'_{\boldsymbol x}=\left(\xi_1-\frac{4A_1x_1}{B_1},\xi_2-\frac{4A_2x_2}{B_2}\right)
\end{align*}
where $\wedge_l'=(4A_l,2B_l,C_l,2D_l,E_l),~l=1,2,$ $\tilde{g}(\boldsymbol t)=g(-\boldsymbol t),$ 
$$\psi^i_{\wedge_1}(x_1,\xi_1)=e^{-i\left(4A_1x_1^2-2B_1x_1\xi_1-2D_1x_1-\frac{16A_1^2C_1x_1^2}{B_1^2}+\frac{8A_1C_1x_1\xi_1}{B_1}+\frac{4A_1E_1x_1}{B_1}\right)}$$ 
and $$\psi^j_{\wedge_2}(x_2,\xi_2)=e^{-j\left(4A_2x_2^2-2B_2x_2\xi_2-2D_2x_2-\frac{16A_2^2C_2x_2^2}{B_2^2}+\frac{8A_2C_2x_2\xi_2}{B_2}+\frac{4A_2E_2x_2}{B_2}\right)}.$$
\end{theorem}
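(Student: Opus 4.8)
The plan is to mimic the proof of Theorem \ref{P5Theo4.5} (the QQPAF--STQQPFT relation), replacing the substitution that turns the ambiguity structure into the STQQPFT structure by the one appropriate to the Wigner--Ville structure. Starting from Definition \ref{P5Defn4.3}, I would first perform the change of variable $\boldsymbol s = \boldsymbol x + \tfrac12\boldsymbol t$, equivalently $\boldsymbol t = 2(\boldsymbol s-\boldsymbol x)$, so that the integrand becomes $f(\boldsymbol s)\,\overline{g(2\boldsymbol x-\boldsymbol s)} = f(\boldsymbol s)\,\overline{\tilde g(\boldsymbol s - 2\boldsymbol x)}$, with $d\boldsymbol t = 4\, d\boldsymbol s$ producing the leading factor $4$. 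The kernels $\mathcal{K}^i_{\wedge_1}(t_1,\xi_1)$ and $\mathcal{K}^j_{\wedge_2}(t_2,\xi_2)$ then become $\mathcal{K}^i_{\wedge_1}(2(s_1-x_1),\xi_1)$ and $\mathcal{K}^j_{\wedge_2}(2(s_2-x_2),\xi_2)$.

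The next step is to dismantle these rescaled-and-shifted kernels. For the scaling, I would use exactly equations \eqref{P5eqn34}--\eqref{P5eqn35}: $\mathcal{K}^i_{\wedge_1}(2u_1,\xi_1) = \mathcal{K}^i_{\wedge_1'}(u_1,\xi_1)$ with $\wedge_1' = (4A_1, 2B_1, C_1, 2D_1, E_1)$, matching the stated parameter vector. For the shift $u_1 = s_1 - x_1$, i.e. a translation of the first argument by $-x_1$, I would apply Lemma \ref{P5Lemma4.1} to the \emph{primed} parameters with $r = -1$, $k_1 = x_1$: this writes $\mathcal{K}^i_{\wedge_1'}(s_1 - x_1,\xi_1)$ as $\mathcal{K}^i_{\wedge_1'}\!\left(s_1, \xi_1 - \tfrac{2A_1'x_1}{B_1'}\right)\phi^i_{\wedge_1',-1}(x_1,\xi_1)$, and since $\tfrac{2A_1'}{B_1'} = \tfrac{2\cdot 4A_1}{2B_1} = \tfrac{4A_1}{B_1}$, the frequency shift is exactly $\xi_1 - \tfrac{4A_1x_1}{B_1}$ as in the claimed $\boldsymbol\xi'_{\boldsymbol x}$. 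The remaining task is to verify that $\phi^i_{\wedge_1',-1}(x_1,\xi_1)$, read off from \eqref{P5eqn30} with $(A_1,B_1,C_1,D_1,E_1)\mapsto(4A_1,2B_1,C_1,2D_1,E_1)$ and $r=-1$, $k_1=x_1$, equals the stated $\psi^i_{\wedge_1}(x_1,\xi_1)$; the same computation with $j$ in place of $i$ handles the second factor. Then, since the quaternion-valued factors $\phi^i_{\wedge_1',-1}(x_1,\xi_1)$ (a unit quaternion in the $i$-plane) and $\phi^j_{\wedge_2',-1}(x_2,\xi_2)$ (in the $j$-plane) sit on the outside left and right of the $\boldsymbol s$-integral, they pull out of the integral, and what remains is precisely $\left(\mathcal{S}^{\wedge_1',\wedge_2'}_{\mathbb{H},\tilde g}f\right)(2\boldsymbol x, \boldsymbol\xi'_{\boldsymbol x})$ by Definition \ref{P5Defn4.1}, because the window is evaluated at $\boldsymbol s - 2\boldsymbol x$.

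The main obstacle is purely bookkeeping: the algebraic identification of $\phi^i_{\wedge_1',-1}(x_1,\xi_1)$ with $\psi^i_{\wedge_1}(x_1,\xi_1)$. One must substitute $r=-1$, $k_1=x_1$ and the primed parameters into the exponent
$$A_1 r^2 k_1^2 + D_1 r k_1 + B_1 r k_1 \xi_1 - \tfrac{4r^2A_1^2C_1k_1^2}{B_1^2} - \tfrac{4rA_1C_1k_1\xi_1}{B_1} - \tfrac{2rA_1k_1}{B_1},$$
track each of the six terms through the parameter rescaling, and check the signs — the $r=-1$ flips the sign of the three terms linear in $r$, which is what produces the $-2B_1x_1\xi_1$, $-2D_1x_1$, $+\tfrac{8A_1C_1x_1\xi_1}{B_1}$ and $+\tfrac{4A_1E_1x_1}{B_1}$ patterns in $\psi^i_{\wedge_1}$. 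I would carry this out once for the $i$-component and remark that the $j$-component is identical by symmetry. A minor point to state carefully is that the non-commutativity causes no trouble here because each extracted phase factor lives entirely in a single imaginary plane and is positioned on the correct side (left for the $i$-factor, right for the $j$-factor), exactly as in Lemma \ref{P5Lemma4.1} and Theorem \ref{P5Theo4.5}.
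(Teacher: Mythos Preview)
Your proposal is correct and follows essentially the same route as the paper: the same substitution $\boldsymbol s=\boldsymbol x+\tfrac12\boldsymbol t$ yielding the factor $4$ and the integrand $f(\boldsymbol s)\overline{\tilde g(\boldsymbol s-2\boldsymbol x)}$, followed by the same kernel identity $\mathcal{K}^i_{\wedge_1}(2(s_1-x_1),\xi_1)=\mathcal{K}^i_{\wedge_1'}\!\big(s_1,\xi_1-\tfrac{4A_1x_1}{B_1}\big)\psi^i_{\wedge_1}(x_1,\xi_1)$. The only cosmetic difference is that the paper derives this identity by direct expansion of the kernel, whereas you obtain it as the composition of the scaling relation \eqref{P5eqn34} and Lemma~\ref{P5Lemma4.1} with the primed parameters and $r=-1$; both computations produce the same $\psi^i_{\wedge_1}$ (note there are four, not three, terms linear in $r$ in \eqref{P5eqn30}).
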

\begin{proof}
From the definition of $\mathcal{W}^{\wedge_1,\wedge_2}_{\mathbb{H}}(f,g),$ we have
\begin{align}\label{P5eqn38}
\left(\mathcal{W}^{\wedge_1,\wedge_2}_{\mathbb{H}}(f,g)\right)(\boldsymbol x,\boldsymbol\xi)
&=4\int_{\mathbb{R}^2}\mathcal{K}^i_{\wedge_1}(2(t_1-x_1),\xi_1)f(\boldsymbol t)\overline{g(2\boldsymbol x-\boldsymbol t)}\mathcal{K}^j_{\wedge_2}(2(t_2-x_2),\xi_2)d\boldsymbol t\notag\\
&=4\int_{\mathbb{R}^2}\mathcal{K}^i_{\wedge_1}(2(t_1-x_1),\xi_1)f(\boldsymbol t)\overline{\tilde{g}(\boldsymbol t-2\boldsymbol x)}\mathcal{K}^j_{\wedge_2}(2(t_2-x_2),\xi_2)d\boldsymbol t.
\end{align}
Now from the definition of $\mathcal{K}^i_{\wedge_1},$ in equation \eqref{P5eqn12}, we have
\begin{align*}
\mathcal{K}^i_{\wedge_1}(2(t_1-x_1),\xi_1)
&=\frac{1}{\sqrt{2\pi}}e^{-i(4A_1t_1^2-8A_1x_1t_1+2B_1t_1\xi_1+2D_1t_1+E_1\xi_1+C_1\xi_2)}e^{-i(4A_1x_1^2-2B_1x_1\xi_1-2D_1x_1)}\\
&=\frac{1}{\sqrt{2\pi}}e^{-i\left\{(4A_1)t_1^2+2B_1\left(\xi_1-\frac{4A_1x_1}{B_1}\right)+C_1\left(\xi_1-\frac{4A_1x_1}{B_1}\right)^2+(2D_1)t_1+E_1\left(\xi_1-\frac{4A_1x_1}{B_1}\right)\left(\xi_1-\frac{4A_1x_1}{B_1}\right)\right\}}\psi^i_{\wedge_1}(x_1,\xi_1)
\end{align*}
i.e., 
\begin{align}\label{P5eqn39}
\mathcal{K}^i_{\wedge_1}(2(t_1-x_1),\xi_1)=\mathcal{K}^i_{\wedge_1'}\left(t_1,\xi_1-\frac{4A_1x_1}{B_1}\right)\psi^i_{\wedge_1}(x_1,\xi_1).
\end{align}
Similarly, we have
\begin{align}\label{P5eqn40}
\mathcal{K}^j_{\wedge_2}(2(t_2-x_2),\xi_2)=\mathcal{K}^j_{\wedge_2'}\left(t_2,\xi_2-\frac{4A_2x_2}{B_2}\right)\psi^j_{\wedge_2}(x_2,\xi_2).
\end{align}
Using equations \eqref{P5eqn39} and \eqref{P5eqn40} in \eqref{P5eqn38}, we have
\begin{align*}
\left(\mathcal{W}^{\wedge_1,\wedge_2}_{\mathbb{H}}(f,g)\right)(\boldsymbol x,\boldsymbol\xi)
&=4\int_{\mathbb{R}^2}\mathcal{K}^i_{\wedge_1'}\left(t_1,\xi_1-\frac{4A_1x_1}{B_1}\right)\psi^i_{\wedge_1}(x_1,\xi_1)f(\boldsymbol t)\overline{\tilde{g}(\boldsymbol t-2\boldsymbol x)}\mathcal{K}^j_{\wedge_2'}\left(t_2,\xi_2-\frac{4A_2x_2}{B_2}\right)\psi^j_{\wedge_2}(x_2,\xi_2)d\boldsymbol t\\
&=4\psi^i_{\wedge_1}(x_1,\xi_1)\left\{\int_{\mathbb{R}^2}\mathcal{K}^i_{\wedge_1'}\left(t_1,\xi_1-\frac{4A_1x_1}{B_1}\right)f(\boldsymbol t)\overline{\tilde{g}(\boldsymbol t-2\boldsymbol x)}\mathcal{K}^j_{\wedge_2'}\left(t_2,\xi_2-\frac{4A_2x_2}{B_2}\right)d\boldsymbol t\right\}\psi^j_{\wedge_2}(x_2,\xi_2).
\end{align*}
This gives
\begin{align*}
\left(\mathcal{W}^{\wedge_1,\wedge_2}_{\mathbb{H}}(f,g)\right)(\boldsymbol x,\boldsymbol\xi)=4\psi^i_{\wedge_1}(x_1,\xi_1)\left(\mathcal{S}^{\wedge_1',\wedge_2'}_{\mathbb{H},\tilde{g}}f\right)(2\boldsymbol x,\boldsymbol\xi'_{\boldsymbol x})\psi^j_{\wedge_2}(x_2,\xi_2).
\end{align*}
This completes the proof.
\end{proof}

\subsection{Uncertainty principle for STQQPFT}
The Heisenberg's UP gives the information about a function and its FT, it says that the function cannot be highly localized in both time and frequency domain. Wilczok (\cite{wilczok2000new}) introduced a new class of UP that compares the localization of a functions with the localization of its wavelet transform, analogous to the Heisenberg UP governing the localization of the complex valued function and the corresponding FT. Gupta et al. \cite{gupta2022linear} obtained the Lieb's and Donoho-Stark's UP for the linear canonical wavelet transform and obtained the lower bound of the measure of its essential support.

Here, we prove the Lieb's UP for the STQQPFT, QQPWVD and QQPAF. Analogous result for the classical STFT and the windowed linear canonical transform can be found in \cite{grochenig2001foundations} and  \cite{kou2012paley} respectively. Before we move forward, let us first prove the following lemma. 

\begin{lemma}\label{P5Lemma4.7}
(Lieb's inequality)
Let $g$ be a QWF, $f\in L^2_\mathbb{H}(\mathbb{R}^2)$ and $2\leq q<\infty.$ Then 
\begin{align}\label{P5eqn41}
\left\|\mathcal{S}^{\wedge_1,\wedge_2}_{\mathbb{H},g}f \right\|_{L^q_\mathbb{H}(\mathbb{R}^2\times\mathbb{R}^2)}\leq\frac{(2\pi)^{\frac{1}{q}-\frac{1}{p}}}{|B_1B_2|^{\frac{1}{q}}}\left(\frac{2}{q}\right)^\frac{2}{q}\|g\|_{L^2_\mathbb{H}(\mathbb{R}^2)}\|f\|_{L^2_\mathbb{H}(\mathbb{R}^2)}.
\end{align}
\end{lemma}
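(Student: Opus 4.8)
The plan is to prove Lieb's inequality for the STQQPFT by interpolating between two endpoints: the $L^2$ bound coming from the inner product relation (Theorem \ref{P5Theo4.3}, specialized in equation \eqref{P5eqn37}) and the $L^\infty$ bound coming from boundedness (Theorem \ref{P5Theo4.2}(i)). First I would fix $f \in L^2_\mathbb{H}(\mathbb{R}^2)$ and a QWF $g$, and observe that for every $q$ with $2 \le q < \infty$ we can write $q = 2 + s$ for some $s \ge 0$, so that pointwise
\begin{align*}
\left|\left(\mathcal{S}^{\wedge_1,\wedge_2}_{\mathbb{H},g}f\right)(\boldsymbol x,\boldsymbol\xi)\right|^q
\le \left\|\mathcal{S}^{\wedge_1,\wedge_2}_{\mathbb{H},g}f\right\|_{L^\infty_\mathbb{H}(\mathbb{R}^2\times\mathbb{R}^2)}^{q-2}\left|\left(\mathcal{S}^{\wedge_1,\wedge_2}_{\mathbb{H},g}f\right)(\boldsymbol x,\boldsymbol\xi)\right|^2.
\end{align*}
Integrating over $\mathbb{R}^2\times\mathbb{R}^2$ gives
\begin{align*}
\left\|\mathcal{S}^{\wedge_1,\wedge_2}_{\mathbb{H},g}f\right\|_{L^q_\mathbb{H}(\mathbb{R}^2\times\mathbb{R}^2)}^q
\le \left\|\mathcal{S}^{\wedge_1,\wedge_2}_{\mathbb{H},g}f\right\|_{L^\infty_\mathbb{H}(\mathbb{R}^2\times\mathbb{R}^2)}^{q-2}\left\|\mathcal{S}^{\wedge_1,\wedge_2}_{\mathbb{H},g}f\right\|_{L^2_\mathbb{H}(\mathbb{R}^2\times\mathbb{R}^2)}^2,
\end{align*}
and then I would substitute the two known bounds.

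The naive substitution of Theorem \ref{P5Theo4.2}(i) (the $\frac{1}{2\pi}$ bound) and equation \eqref{P5eqn37} yields a constant of the shape $\left(\frac{1}{2\pi}\right)^{1-2/q}\left(\frac{1}{|B_1B_2|}\right)^{1/q}$, which is missing the crucial sharp factor $(2/q)^{2/q}$ appearing in \eqref{P5eqn41} and also has the wrong power of $2\pi$. So the real work is to get a \emph{sharp} $L^\infty$-type estimate, and this is where I expect the main obstacle to lie. The correct route is not to use the crude boundedness estimate but instead to invoke the sharp Hausdorff-Young inequality (Theorem \ref{P5Theo3.1}) applied to the QQPFT of the windowed function $f(\cdot)\overline{g(\cdot - \boldsymbol x)}$ for fixed $\boldsymbol x$, exactly as in the proof of Theorem \ref{P5Theo4.3}. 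That is, I would write
\begin{align*}
\left(\mathcal{S}^{\wedge_1,\wedge_2}_{\mathbb{H},g}f\right)(\boldsymbol x,\boldsymbol\xi)
= \left(\mathcal{Q}^{\wedge_1,\wedge_2}_{\mathbb{H}}\{f(\cdot)\overline{g(\cdot-\boldsymbol x)}\}\right)(\boldsymbol\xi),
\end{align*}
and apply \eqref{P5eqn18} with the conjugate pair $(p,q)$, $\frac1p+\frac1q=1$ (note $1 \le p \le 2$ since $q \ge 2$), to obtain for each fixed $\boldsymbol x$
\begin{align*}
\left(\int_{\mathbb{R}^2}\left|\left(\mathcal{S}^{\wedge_1,\wedge_2}_{\mathbb{H},g}f\right)(\boldsymbol x,\boldsymbol\xi)\right|^q d\boldsymbol\xi\right)^{1/q}
\le \frac{(2\pi)^{\frac1q-\frac1p}A_p^2}{|B_1B_2|^{1/q}}\left(\int_{\mathbb{R}^2}\left|f(\boldsymbol t)\overline{g(\boldsymbol t-\boldsymbol x)}\right|^p d\boldsymbol t\right)^{1/p}.
\end{align*}

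From here the plan is: raise to the $q$-th power, integrate in $\boldsymbol x$ over $\mathbb{R}^2$, and swap the order of integration (Tonelli, everything nonnegative). The $\boldsymbol x$-integral of $\left(\int |f(\boldsymbol t)|^p|g(\boldsymbol t-\boldsymbol x)|^p d\boldsymbol t\right)^{q/p}$ is the $L^{q/p}$-norm (to a power) of the convolution $|f|^p \ast |\widetilde{g}|^p$ where $\widetilde g(\boldsymbol t) = g(-\boldsymbol t)$; applying Young's convolution inequality with exponents chosen so that $|f|^p \in L^{2/p}$, $|g|^p \in L^{2/p}$ and the output sits in $L^{q/p}$ — the balance $\frac{p}{2}+\frac{p}{2} = 1 + \frac{p}{q}$ holds precisely because $\frac1p + \frac1q = 1$ — gives $\||f|^p\ast|\widetilde g|^p\|_{q/p} \le \||f|^p\|_{2/p}\||g|^p\|_{2/p} = \|f\|_2^p\|g\|_2^p$. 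Assembling the constants, $(2\pi)^{\frac1q-\frac1p}$ stays as is, and $A_p^2 = \frac{p^{1/p}}{q^{1/q}}$; using $\frac1p = 1-\frac1q$ one checks $p^{1/p} = (1-\frac1q)^{1-1/q}$ and after combining with the leftover Young constant the algebra collapses to $(2/q)^{2/q}$. The one genuinely delicate point — the main obstacle — is tracking these exponents carefully enough to see the sharp constant $(2/q)^{2/q}$ emerge rather than some weaker bound; I would double-check the special cases $q=2$ (where it must reduce to \eqref{P5eqn37}) and $q\to\infty$ (where it must reduce to boundedness, Theorem \ref{P5Theo4.2}(i), since $(2/q)^{2/q}\to 1$ and $(2\pi)^{1/q-1/p}\to (2\pi)^{-1}$) as consistency checks before writing the final line.
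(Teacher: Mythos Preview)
Your proposal is essentially the paper's own argument: rewrite $\mathcal{S}^{\wedge_1,\wedge_2}_{\mathbb{H},g}f(\boldsymbol x,\cdot)$ as the QQPFT of $f(\cdot)\overline{g(\cdot-\boldsymbol x)}$, apply the sharp Hausdorff--Young inequality \eqref{P5eqn18} in the $\boldsymbol\xi$-variable for fixed $\boldsymbol x$, then integrate in $\boldsymbol x$, recognize the convolution $|f|^p\ast|\tilde g|^p$, and finish with Young's inequality. One clarification: to make the constants ``collapse to $(2/q)^{2/q}$'' you must use the \emph{sharp} (Beckner--Brascamp--Lieb) form of Young's inequality, carrying the extra factor $A_k^{4/p}A_{l'}^{2/p}$ with $k=2/p$, $l=q/p$; the crude Young bound you wrote (without constants) together with $A_p^2$ alone gives only $p^{1/p}/q^{1/q}$, which is strictly larger than $(2/q)^{2/q}$ for $2<q<\infty$, so the ``leftover Young constant'' you allude to is not optional but essential.
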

\begin{proof}
\begin{align}\label{P5eqn42}
\left(\int_{\mathbb{R}^2}\left|\left(\mathcal{S}^{\wedge_1,\wedge_2}_{\mathbb{H},g}f \right)(\boldsymbol x,\boldsymbol\xi)\right|^qd\boldsymbol\xi\right)^\frac{1}{q}=\left(\int_{\mathbb{R}^2}\left|\left(\mathcal{Q}^{\wedge_1,\wedge_2}_{\mathbb{H}}\{f(\cdot)\overline{g(\cdot-\boldsymbol x)}\}\right)(\boldsymbol\xi)\right|^qd\boldsymbol\xi\right)^\frac{1}{q}.
\end{align}
Using Hausdorff-Young inequality, we get
\begin{align*}
\left(\int_{\mathbb{R}^2}\left|\left(\mathcal{S}^{\wedge_1,\wedge_2}_{\mathbb{H},g}f \right)(\boldsymbol x,\boldsymbol\xi)\right|^qd\boldsymbol\xi\right)^\frac{1}{q}
&\leq\frac{A^2_p(2\pi)^{\frac{1}{q}-\frac{1}{p}}}{|B_1B_2|^{\frac{1}{q}}}\left(\int_{\mathbb{R}^2}\left|f(\boldsymbol t)\overline{g(\boldsymbol t-\boldsymbol x)}\right|^pd\boldsymbol t\right)^{\frac{1}{p}}\\
&=\frac{A^2_p(2\pi)^{\frac{1}{q}-\frac{1}{p}}}{|B_1B_2|^{\frac{1}{q}}}\left(\int_{\mathbb{R}^2}|f(\boldsymbol t)|^p|\tilde{g}(\boldsymbol x-\boldsymbol t)|^pd\boldsymbol t\right)^{\frac{1}{p}},~\tilde{g}(\boldsymbol t)=g(-\boldsymbol t)\\
&=\frac{A^2_p(2\pi)^{\frac{1}{q}-\frac{1}{p}}}{|B_1B_2|^{\frac{1}{q}}}\left\{\left(|f|^p\star|\tilde{g}|^p\right)(\boldsymbol x)\right\}^{\frac{1}{p}}.
\end{align*}
This implies that
\begin{align*}
\int_{\mathbb{R}^2}\int_{\mathbb{R}^2}\left|\left(\mathcal{S}^{\wedge_1,\wedge_2}_{\mathbb{H},g}f \right)(\boldsymbol x,\boldsymbol\xi)\right|^qd\boldsymbol xd\boldsymbol\xi\leq \frac{A^{2q}_p(2\pi)^{q(\frac{1}{q}-\frac{1}{p})}}{|B_1B_2|}\int_{\mathbb{R}^2}\left\{\left(|f|^p\star|\tilde{g}|^p\right)(\boldsymbol x)\right\}^{\frac{q}{p}}d\boldsymbol x.
\end{align*}
This gives
\begin{align}\label{P5eqn43}
\left\{\int_{\mathbb{R}^2}\int_{\mathbb{R}^2}\left|\left(\mathcal{S}^{\wedge_1,\wedge_2}_{\mathbb{H},g}f \right)(\boldsymbol x,\boldsymbol\xi)\right|^qd\boldsymbol xd\boldsymbol\xi\right\}^\frac{1}{q}
&\leq \frac{A^2_p(2\pi)^{\frac{1}{q}-\frac{1}{p}}}{|B_1B_2|^{\frac{1}{q}}}\left[\int_{\mathbb{R}^2}\left\{\left(|f|^p\star|\tilde{g}|^p\right)(\boldsymbol x)\right\}^{\frac{q}{p}}d\boldsymbol x\right]^{\frac{q}{p}\cdot\frac{1}{q}}\notag\\
&=\frac{A^2_p(2\pi)^{\frac{1}{q}-\frac{1}{p}}}{|B_1B_2|^{\frac{1}{q}}}\left\||f|^p\star|\tilde{g}|^p\right\|^{\frac{1}{p}}_{L^{\frac{q}{p}}_\mathbb{H}(\mathbb{R}^2)}.
\end{align}
Now we see that, if $k=\frac{2}{p},~ l=\frac{q}{p},$ then $k\geq 1$ and $\frac{1}{k}+\frac{1}{k}=1+\frac{1}{l}.$
Since $|f|^p,~|\tilde{g}|^p\in L^k_\mathbb{H}(\mathbb{R}^2),$ we get, by Young's inequality
\begin{align}\label{P5eqn44}
\left\||f|^p\star|\tilde{g}|^p\right\|^{\frac{1}{p}}_{L^{\frac{q}{p}}_\mathbb{H}(\mathbb{R}^2)}\leq A_k^4A_{l'}^2\|f\|^p_{L^2_\mathbb{H}(\mathbb{R}^2)}\|\tilde{g}\|^p_{L^2_\mathbb{H}(\mathbb{R}^2)}.
\end{align}
Therefore, from equations \eqref{P5eqn43} and \eqref{P5eqn44}, it follows that
\begin{align}\label{P5eqn45}
\left\{\int_{\mathbb{R}^2}\int_{\mathbb{R}^2}\left|\left(\mathcal{S}^{\wedge_1,\wedge_2}_{\mathbb{H},g}f \right)(\boldsymbol x,\boldsymbol\xi)\right|^qd\boldsymbol xd\boldsymbol\xi\right\}^\frac{1}{q}\leq\frac{(2\pi)^{\frac{1}{q}-\frac{1}{p}}}{|B_1B_2|^{\frac{1}{q}}} A_p^2A_k^{\frac{4}{p}}A_{l'}^{\frac{2}{p}}\|g\|_{L^2_\mathbb{H}(\mathbb{R}^2)}\|f\|_{L^2_\mathbb{H}(\mathbb{R}^2)},
\end{align}
where $A_r=\left(\frac{r^{\frac{1}{r}}}{r'^{\frac{1}{r'}}}\right)^{\frac{1}{2}},~\frac{1}{r}+\frac{1}{r'}=1.$ 
Now, we have
\begin{align}\label{P5eqn46}
A_p^2A_k^{\frac{4}{p}}A_{l'}^{\frac{2}{p}}
&=\frac{p^{\frac{1}{p}}}{q\frac{1}{q}}\cdot\frac{k}{{k'}^\frac{2}{k'p}}\cdot\frac{{l'}^\frac{1}{pl'}}{\left(\frac{q}{p}\right)^{\frac{1}{q}}},~\mbox{since}~k=\frac{2}{q},~l=\frac{q}{p}\notag\\
&=\frac{p}{q^{\frac{2}{q}}}\cdot\frac{{l'}^{\frac{1}{pl'}}}{{k'}^{\frac{2}{k'p}}}\notag\\
&\leq \frac{2}{q^{\frac{2}{q}}}\cdot\left(\frac{1}{2}\right)^{\frac{q-p}{pq}},~\mbox{since}~k'=2l'\notag\\
&=\left(\frac{2}{q}\right)^{\frac{2}{q}}.
\end{align}
Thus using equation \eqref{P5eqn46} in \eqref{P5eqn45}, we get
\begin{align*}
\left\{\int_{\mathbb{R}^2}\int_{\mathbb{R}^2}\left|\left(\mathcal{S}^{\wedge_1,\wedge_2}_{\mathbb{H},g}f \right)(\boldsymbol x,\boldsymbol\xi)\right|^qd\boldsymbol xd\boldsymbol\xi\right\}^\frac{1}{q}\leq\frac{(2\pi)^{\frac{1}{q}-\frac{1}{p}}}{|B_1B_2|^{\frac{1}{q}}} \left(\frac{2}{q}\right)^{\frac{2}{q}}\|g\|_{L^2_\mathbb{H}(\mathbb{R}^2)}\|f\|_{L^2_\mathbb{H}(\mathbb{R}^2)}.
\end{align*}
This finishes the proof.
\end{proof}

\subsection{Lieb's uncertainty principle}
\begin{definition}\label{P5Defn4.4}
Let $\epsilon\geq 0$ and $\Omega\subset\mathbb{R}^n$ be measurable. A function $F\in L^2_\mathbb{H}(\mathbb{R}^n)$ is said to be $\epsilon-$concentrated on $\Omega$ if 
$$\|\chi_{\Omega^c}F\|_{L^2_\mathbb{H}(\mathbb{R}^n)}\leq \epsilon\|F\|_{L^2_\mathbb{H}(\mathbb{R}^n)},$$
where $\chi_{\Omega}$ denotes the indicator function on $\Omega.$
 
If $0\leq \epsilon\leq\frac{1}{2},$ then majority of the energy is concentrated on $\Omega$ and $\Omega$ is said to be the essential support of $F.$ Support of $F$ is contained in $\Omega,$ if $\epsilon=0.$ 
\end{definition}
\begin{theorem}\label{P5Theo4.8}
Let $g$ be a QWF and $f\in L^2_\mathbb{H}(\mathbb{R}^2),$ such that $f\neq 0.$ Let $\epsilon\geq 0$ and $\Omega\subset\mathbb{R}^2\times\mathbb{R}^2$ is a measurable set. If $\mathcal{S}^{\wedge_1,\wedge_2}_{\mathbb{H},g}f,$ on $\Omega,$ is $\epsilon-$concentrated, then for every $q>2$
\begin{align*}
|\Omega|\geq \frac{(2\pi)^2}{|B_1B_2|}(1-\epsilon^2)^{\frac{q}{q-2}}\left(\frac{q}{2}\right)^{\frac{4}{q-2}}.
\end{align*}
\end{theorem}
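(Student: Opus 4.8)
The plan is to combine the Lieb inequality from Lemma \ref{P5Lemma4.7} with the $L^2$ inner product relation (equation \eqref{P5eqn37}) via a Hölder estimate on the concentration set $\Omega$. First I would record the two facts I will use: from the remark following Theorem \ref{P5Theo4.3} with $g_1=g_2=g$ and $f_1=f_2=f$,
\begin{align*}
\left\|\mathcal{S}^{\wedge_1,\wedge_2}_{\mathbb{H},g}f\right\|^2_{L^2_\mathbb{H}(\mathbb{R}^2\times\mathbb{R}^2)}=\frac{1}{|B_1B_2|}\|f\|^2_{L^2_\mathbb{H}(\mathbb{R}^2)}\|g\|^2_{L^2_\mathbb{H}(\mathbb{R}^2)},
\end{align*}
and from Lemma \ref{P5Lemma4.7}, for every $q>2$ (with conjugate exponent $p$, so $1<p<2$),
\begin{align*}
\left\|\mathcal{S}^{\wedge_1,\wedge_2}_{\mathbb{H},g}f\right\|_{L^q_\mathbb{H}(\mathbb{R}^2\times\mathbb{R}^2)}\leq\frac{(2\pi)^{\frac{1}{q}-\frac{1}{p}}}{|B_1B_2|^{\frac{1}{q}}}\left(\frac{2}{q}\right)^{\frac{2}{q}}\|g\|_{L^2_\mathbb{H}(\mathbb{R}^2)}\|f\|_{L^2_\mathbb{H}(\mathbb{R}^2)}.
\end{align*}

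Next I would exploit the $\epsilon$-concentration. By Definition \ref{P5Defn4.4} applied to $F=\mathcal{S}^{\wedge_1,\wedge_2}_{\mathbb{H},g}f$ on $\Omega\subset\mathbb{R}^2\times\mathbb{R}^2$, we have $\|\chi_{\Omega^c}F\|_{L^2}\le\epsilon\|F\|_{L^2}$, hence
\begin{align*}
\int_\Omega |F(\boldsymbol x,\boldsymbol\xi)|^2\,d\boldsymbol x\,d\boldsymbol\xi=\|F\|^2_{L^2}-\|\chi_{\Omega^c}F\|^2_{L^2}\ge(1-\epsilon^2)\|F\|^2_{L^2}.
\end{align*}
Now apply Hölder's inequality on the finite-measure set $\Omega$ with exponents $q/2$ and its conjugate $q/(q-2)$:
\begin{align*}
\int_\Omega |F|^2\,d\boldsymbol x\,d\boldsymbol\xi\le\left(\int_\Omega |F|^q\,d\boldsymbol x\,d\boldsymbol\xi\right)^{2/q}|\Omega|^{1-2/q}\le\|F\|^2_{L^q(\mathbb{R}^2\times\mathbb{R}^2)}\,|\Omega|^{(q-2)/q}.
\end{align*}
Chaining the last two displays gives $(1-\epsilon^2)\|F\|^2_{L^2}\le\|F\|^2_{L^q}\,|\Omega|^{(q-2)/q}$, so
\begin{align*}
|\Omega|^{(q-2)/q}\ge(1-\epsilon^2)\frac{\|F\|^2_{L^2}}{\|F\|^2_{L^q}}.
\end{align*}

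Finally I substitute the two norm estimates. The $L^2$ identity gives $\|F\|^2_{L^2}=\frac{1}{|B_1B_2|}\|f\|^2_{L^2}\|g\|^2_{L^2}$, and squaring the Lieb bound gives $\|F\|^2_{L^q}\le\frac{(2\pi)^{2/q-2/p}}{|B_1B_2|^{2/q}}(2/q)^{4/q}\|f\|^2_{L^2}\|g\|^2_{L^2}$. The factors $\|f\|^2_{L^2}\|g\|^2_{L^2}$ cancel (here $f\neq0$ and $g$ is a genuine QWF are needed so these are nonzero and finite), leaving
\begin{align*}
|\Omega|^{(q-2)/q}\ge(1-\epsilon^2)\,|B_1B_2|^{2/q-1}\,(2\pi)^{2/p-2/q}\,\left(\frac{q}{2}\right)^{4/q}.
\end{align*}
Raising both sides to the power $q/(q-2)$ and using $\frac{1}{p}+\frac{1}{q}=1$ to simplify the exponent of $2\pi$ — namely $\frac{2}{p}-\frac{2}{q}=2-\frac{4}{q}$, so $\left(\frac{2}{p}-\frac{2}{q}\right)\cdot\frac{q}{q-2}=2$, and similarly $\left(\frac{2}{q}-1\right)\cdot\frac{q}{q-2}=-1$ and $\frac{4}{q}\cdot\frac{q}{q-2}=\frac{4}{q-2}$ — yields exactly
\begin{align*}
|\Omega|\ge\frac{(2\pi)^2}{|B_1B_2|}(1-\epsilon^2)^{\frac{q}{q-2}}\left(\frac{q}{2}\right)^{\frac{4}{q-2}}.
\end{align*}
The only real care-point is the bookkeeping of exponents in this last step; the substance of the argument — Lieb's inequality plus the Plancherel-type identity plus Hölder on $\Omega$ — is routine, and I anticipate no genuine obstacle beyond verifying that $p$ is indeed the conjugate exponent of $q$ throughout so that Lemma \ref{P5Lemma4.7} applies verbatim.
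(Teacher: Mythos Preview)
Your proposal is correct and follows essentially the same route as the paper: use the $\epsilon$-concentration together with the $L^2$ identity \eqref{P5eqn37} to bound $\int_\Omega|F|^2$ from below, apply H\"older with exponents $q/2$ and $q/(q-2)$ to bound it from above by $|\Omega|^{(q-2)/q}\|F\|_{L^q}^2$, and then invoke Lieb's inequality (Lemma~\ref{P5Lemma4.7}) before simplifying the exponents. Your exponent bookkeeping in the final step is in fact cleaner and more explicit than the paper's.
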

\begin{proof}
Since $\mathcal{S}^{\wedge_1,\wedge_2}_{\mathbb{H},g}f$ is $\epsilon-$concentrated on $\Omega,$ we have
\begin{align*}
\left\|\chi_{\Omega^c}\mathcal{S}^{\wedge_1,\wedge_2}_{\mathbb{H},g}f\right\|_{L^2_\mathbb{H}(\mathbb{R}^2\times\mathbb{R}^2)}\leq \frac{\epsilon^2}{|B_1B_2|}\|f\|^2_{L^2_\mathbb{H}(\mathbb{R}^2)}\|g\|^2_{L^2_\mathbb{H}(\mathbb{R}^2)}.
\end{align*}
This implies
\begin{align}\label{P5eqn47}
\left\|\chi_{\Omega}\mathcal{S}^{\wedge_1,\wedge_2}_{\mathbb{H},g}f\right\|_{L^2_\mathbb{H}(\mathbb{R}^2\times\mathbb{R}^2)}\geq \frac{1}{|B_1B_2|}(1-\epsilon^2)\|f\|^2_{L^2_\mathbb{H}(\mathbb{R}^2)}\|g\|^2_{L^2_\mathbb{H}(\mathbb{R}^2)}.
\end{align}
Now, using Holder's inequality, we have
\begin{align*}
\left\|\chi_{\Omega}\mathcal{S}^{\wedge_1,\wedge_2}_{\mathbb{H},g}f\right\|_{L^2_\mathbb{H}(\mathbb{R}^2\times\mathbb{R}^2)}\leq &\left\{\int_{\mathbb{R}^2}\int_{\mathbb{R}^2}\left(\chi_{\Omega}(\boldsymbol x,\boldsymbol\xi)\right)^{\frac{q}{q-2}}d\boldsymbol xd\boldsymbol\xi\right\}^{\frac{q}{2}}\left\{\int_{\mathbb{R}^2}\int_{\mathbb{R}^2}\left(\left|\left(\mathcal{S}^{\wedge_1,\wedge_2}_{\mathbb{H},g}f\right)(\boldsymbol x,\boldsymbol\xi)\right|^2\right)^{\frac{q}{q-2}}d\boldsymbol xd\boldsymbol\xi\right\}^{\frac{2}{q}}\\
&=|\Omega|^{\frac{q-2}{q}}\left\|\mathcal{S}^{\wedge_1,\wedge_2}_{\mathbb{H},g}f\right\|^2_{L^2_\mathbb{H}(\mathbb{R}^2)}.
\end{align*}
Using, the Lieb's inequality \eqref{P5eqn41}, we get
\begin{align}\label{P5eqn48}
\left\|\chi_{\Omega}\mathcal{S}^{\wedge_1,\wedge_2}_{\mathbb{H},g}f\right\|_{L^2_\mathbb{H}(\mathbb{R}^2\times\mathbb{R}^2)}\leq |\Omega|^{\frac{q-2}{q}}\frac{(2\pi)^{\frac{2}{q}-\frac{2}{p}}}{|B_1B_2|^{\frac{2}{q}}}\left(\frac{2}{q}\right)^{\frac{2}{q}}\|f\|^2_{L^2_\mathbb{H}(\mathbb{R}^2)}\|g\|^2_{L^2_\mathbb{H}(\mathbb{R}^2)}.
\end{align}
From equation \eqref{P5eqn47} and equation \eqref{P5eqn48}, we get
\begin{align*}
|\Omega|^{\frac{q-2}{q}}\frac{(2\pi)^{\frac{2}{q}-\frac{2}{p}}}{|B_1B_2|^{\frac{2}{q}}}\left(\frac{2}{q}\right)^{\frac{2}{q}}\geq \frac{1}{|B_1B_2|}(1-\epsilon^2).
\end{align*}
This gives
\begin{align*}
&|\Omega|\geq \frac{1}{|B_1B_2|}(2\pi)^{2(1-\frac{2}{q})\frac{q}{q-2}}(1-\epsilon^2)^{\frac{q}{q-2}}\left(\frac{q}{2}\right)^{\frac{4}{q-2}},~\mbox{since}~\frac{1}{p}+\frac{1}{q}=1\\
&\mbox{i.e.,}~|\Omega|\geq \frac{1}{|B_1B_2|}(2\pi)^{2}(1-\epsilon^2)^{\frac{q}{q-2}}\left(\frac{q}{2}\right)^{\frac{4}{q-2}}.
\end{align*}
This completes the proof.
\end{proof}
\begin{remark}
Taking $\epsilon=0,$ in the above theorem, we get the following lower bound for the support of $\mathcal{S}^{\wedge_1,\wedge_2}_{\mathbb{H},g}f$
\begin{align}
&\left|\rm{supp}\left(\mathcal{S}^{\wedge_1,\wedge_2}_{\mathbb{H},g}f\right)\right|\geq \frac{(2\pi)^2}{|B_1B_2|}\lim_{q\rightarrow 2+}\left(\frac{q}{2}\right)^{\frac{4}{q-2}}\notag\\
&\mbox{i.e.,}~\left|\rm{supp}\left(\mathcal{S}^{\wedge_1,\wedge_2}_{\mathbb{H},g}f\right)\right|\geq \frac{(2\pi e)^2}{|B_1B_2|}.
\end{align}
\end{remark}
i.e., measure of the support of $\mathcal{S}^{\wedge_1,\wedge_2}_{\mathbb{H},g}f\geq \frac{(2\pi e)^2}{|B_1B_2|}.$
\begin{corollary}
Let $g$ be a QWF and $f\in L^2_\mathbb{H}(\mathbb{R}^2),$ such that $f\neq 0.$ Let $\epsilon\geq 0$ and $\Omega\subset\mathbb{R}^2\times\mathbb{R}^2$ is measurable. If $\mathcal{A}^{\wedge_1,\wedge_2}_{\mathbb{H}}(f,g),$ on $\Omega,$ is $\epsilon-$concentrated, then for every $q>2$
\begin{align}\label{P5eqn50}
|\Omega|\geq \frac{(2\pi)^2}{|B_1B_2|}(1-\epsilon^2)^{\frac{q}{q-2}}\left(\frac{q}{2}\right)^{\frac{4}{q-2}}.
\end{align}
In particular, if $\epsilon=0,$ then 
\begin{align}\label{P5eqn51}
\left|\rm{supp}\left(\mathcal{A}^{\wedge_1,\wedge_2}_{\mathbb{H}}(f,g)\right)\right|\geq \frac{(2\pi e)^2}{|B_1B_2|}.
\end{align}
\end{corollary}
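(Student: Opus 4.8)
The plan is to read off this corollary from Theorem~\ref{P5Theo4.8}, the Lieb-type uncertainty principle for the STQQPFT, by transporting the hypothesis through the identity of Theorem~\ref{P5Theo4.5}. The guiding observation is that the QQPAF is, pointwise, a \emph{unimodular} multiple of the STQQPFT precomposed with a \emph{shear} of the frequency variables, and neither operation changes $L^2$-masses or Lebesgue measures.

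First I would recall from Theorem~\ref{P5Theo4.5} that
\begin{align*}
\left(\mathcal{A}^{\wedge_1,\wedge_2}_{\mathbb{H}}(f,g)\right)(\boldsymbol x,\boldsymbol\xi)=\phi^i_{\wedge_1,-\frac{1}{2}}(x_1,\xi_1)\left(\mathcal{S}^{\wedge_1,\wedge_2}_{\mathbb{H},g}f\right)(\boldsymbol x,\boldsymbol\xi'_{\boldsymbol x})\phi^j_{\wedge_2,-\frac{1}{2}}(x_2,\xi_2),\quad \boldsymbol\xi'_{\boldsymbol x}=\left(\xi_1-\tfrac{A_1x_1}{B_1},\xi_2-\tfrac{A_2x_2}{B_2}\right).
\end{align*}
Since $\phi^i_{\wedge_1,-\frac12}$ and $\phi^j_{\wedge_2,-\frac12}$ are unimodular and the quaternion modulus is multiplicative, this yields the pointwise equality $\left|\left(\mathcal{A}^{\wedge_1,\wedge_2}_{\mathbb{H}}(f,g)\right)(\boldsymbol x,\boldsymbol\xi)\right|=\left|\left(\mathcal{S}^{\wedge_1,\wedge_2}_{\mathbb{H},g}f\right)(\boldsymbol x,\boldsymbol\xi'_{\boldsymbol x})\right|$. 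Next I would introduce $\Phi:\mathbb{R}^2\times\mathbb{R}^2\to\mathbb{R}^2\times\mathbb{R}^2$, $\Phi(\boldsymbol x,\boldsymbol\xi)=(\boldsymbol x,\boldsymbol\xi'_{\boldsymbol x})$. For each fixed $\boldsymbol x$ this is merely a translation in $\boldsymbol\xi$, so $\Phi$ is a bijection with Jacobian determinant $1$; hence $|\Phi(E)|=|E|$ for every measurable $E$, $\Phi(\Omega^c)=\Phi(\Omega)^c$, and for every measurable $E$
\begin{align*}
\int_{E}\left|\left(\mathcal{A}^{\wedge_1,\wedge_2}_{\mathbb{H}}(f,g)\right)(\boldsymbol x,\boldsymbol\xi)\right|^2d\boldsymbol x\,d\boldsymbol\xi=\int_{\Phi(E)}\left|\left(\mathcal{S}^{\wedge_1,\wedge_2}_{\mathbb{H},g}f\right)(\boldsymbol x,\boldsymbol\xi)\right|^2d\boldsymbol x\,d\boldsymbol\xi.
\end{align*}
Taking $E=\mathbb{R}^2\times\mathbb{R}^2$ and $E=\Omega^c$ shows that if $\mathcal{A}^{\wedge_1,\wedge_2}_{\mathbb{H}}(f,g)$ is $\epsilon$-concentrated on $\Omega$, then $\mathcal{S}^{\wedge_1,\wedge_2}_{\mathbb{H},g}f$ is $\epsilon$-concentrated on the measurable set $\Phi(\Omega)$.

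Applying Theorem~\ref{P5Theo4.8} to $\mathcal{S}^{\wedge_1,\wedge_2}_{\mathbb{H},g}f$ and the set $\Phi(\Omega)$, and using $|\Omega|=|\Phi(\Omega)|$, gives \eqref{P5eqn50} for every $q>2$; the case $\epsilon=0$ then follows, exactly as in the remark after Theorem~\ref{P5Theo4.8}, by taking $\Omega$ to be the (essential) support of $\mathcal{A}^{\wedge_1,\wedge_2}_{\mathbb{H}}(f,g)$, letting $q\to 2^+$, and using $\lim_{q\to 2^+}(q/2)^{4/(q-2)}=e^2$, which produces \eqref{P5eqn51}. No step here presents a genuine obstacle; the only points needing a moment's care are that the shear $\Phi$ preserves Lebesgue measure (immediate, being a $\boldsymbol\xi$-translation for each frozen $\boldsymbol x$) and that the unimodular prefactors drop out of the modulus, just as in the proof of Theorem~\ref{P5Theo4.5}. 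A parallel argument based on Theorem~\ref{P5Theo4.6} yields the corresponding statement for the QQPWVD, with the constants adjusted to account for the dilation by $2$, the overall factor $4$, and the parameter change $\wedge_l\mapsto\wedge_l'$ occurring there.
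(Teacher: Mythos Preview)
Your argument is correct and follows essentially the same route as the paper: both use Theorem~\ref{P5Theo4.5} to identify $|\mathcal{A}^{\wedge_1,\wedge_2}_{\mathbb{H}}(f,g)|$ with $|\mathcal{S}^{\wedge_1,\wedge_2}_{\mathbb{H},g}f|$ composed with a measure-preserving shear, transfer the $\epsilon$-concentration to the STQQPFT on the sheared set, and then invoke Theorem~\ref{P5Theo4.8}. The only cosmetic difference is that the paper encodes your map $\Phi$ as multiplication by an explicit $4\times4$ matrix $P^{-1}$ with $\det P^{-1}=1$, whereas you describe it as a $\boldsymbol\xi$-translation for each fixed $\boldsymbol x$; the content is identical.
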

\begin{proof}
From theorem \ref{P5Theo4.5}, it follows that
$$\left|\left(\mathcal{A}^{\wedge_1,\wedge_2}_{\mathbb{H}}(f,g)\right)(\boldsymbol x,\boldsymbol\xi)\right|=\left|\left(\mathcal{S}^{\wedge_1,\wedge_2}_{\mathbb{H},g}f \right)(\boldsymbol x,\boldsymbol\xi'_{\boldsymbol x})\right|,~\boldsymbol\xi'_{\boldsymbol x}=\left(\xi_1-\frac{A_1x_1}{B_1},\xi_2-\frac{A_2x_2}{B_2}\right).$$
Since $\mathcal{A}^{\wedge_1,\wedge_2}_{\mathbb{H}}(f,g)$ is $\epsilon-$concentrated on $\Omega,$ it can be shown that $\mathcal{S}^{\wedge_1,\wedge_2}_{\mathbb{H},g}f$ is $\epsilon-$concentrated on $P^{-1}\Omega,$ where $P$ is the non-singular matrix given by
$
\begin{bmatrix}
1 & 0 & 0 & 0\\
0 & 1 & 0 & 0\\
\frac{A_1}{B_1} & 0 & 1 & 0\\
0 & \frac{A_2}{B_2} & 0 & 1
\end{bmatrix}
$
and $P^{-1}\Omega=\{P^{-1}\boldsymbol x:\boldsymbol x\in\Omega\}.$ So, by theorem \ref{P5Theo4.8}, we have
\begin{align*}
|P^{-1}\Omega|\geq \frac{(2\pi)^2}{|B_1B_2|}(1-\epsilon^2)^{\frac{q}{q-2}}\left(\frac{q}{2}\right)^{\left(\frac{4}{q-2}\right)}.
\end{align*}
This gives
\begin{align*}
|\Omega|\geq \frac{(2\pi)^2}{|B_1B_2|}(1-\epsilon^2)^{\frac{q}{q-2}}\left(\frac{q}{2}\right)^{\left(\frac{4}{q-2}\right)},~\mbox{since}~det(P^{-1})=1.
\end{align*}
This proves equation \eqref{P5eqn50}.
\end{proof}

\begin{corollary}
Let $g$ be a QWF and $f\in L^2_\mathbb{H}(\mathbb{R}^2),$ such that $f\neq 0.$ Let  $\epsilon\geq 0$ and $\Omega\subset\mathbb{R}^2\times\mathbb{R}^2$ is measurable. If $\mathcal{W}^{\wedge_1,\wedge_2}_{\mathbb{H}}(f,g),$ on $\Omega,$ is $\epsilon-$concentrated, then for every $q>2$
\begin{align}\label{P5eqn52}
|\Omega|\geq \frac{(2\pi)^2}{16|B_1B_2|}(1-\epsilon^2)^{\frac{q}{q-2}}\left(\frac{q}{2}\right)^{\frac{4}{q-2}}.
\end{align}
In particular, if $\epsilon=0,$ then 
\begin{align}\label{P5eqn53}
\left|\rm{supp}\left(\mathcal{W}^{\wedge_1,\wedge_2}_{\mathbb{H}}(f,g)\right)\right|\geq \frac{(\pi e)^2}{4|B_1B_2|}.
\end{align}
\end{corollary}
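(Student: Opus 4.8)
The plan is to deduce this corollary from Theorem~\ref{P5Theo4.8} through Theorem~\ref{P5Theo4.6}, in exact parallel with the way the preceding corollary was obtained from Theorem~\ref{P5Theo4.8} through Theorem~\ref{P5Theo4.5}. The first point to record is that the phase factors $\psi^i_{\wedge_1}(x_1,\xi_1)$ and $\psi^j_{\wedge_2}(x_2,\xi_2)$ appearing in Theorem~\ref{P5Theo4.6} are unimodular, so that taking moduli in that identity gives
\begin{align*}
\left|\left(\mathcal{W}^{\wedge_1,\wedge_2}_{\mathbb{H}}(f,g)\right)(\boldsymbol x,\boldsymbol\xi)\right|=4\left|\left(\mathcal{S}^{\wedge_1',\wedge_2'}_{\mathbb{H},\tilde g}f\right)(2\boldsymbol x,\boldsymbol\xi'_{\boldsymbol x})\right|,\qquad\boldsymbol\xi'_{\boldsymbol x}=\left(\xi_1-\frac{4A_1x_1}{B_1},\xi_2-\frac{4A_2x_2}{B_2}\right),
\end{align*}
where $\wedge_l'=(4A_l,2B_l,C_l,2D_l,E_l)$ and $\tilde g(\boldsymbol t)=g(-\boldsymbol t)$. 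Since $\tilde g$ shares the $L^2_\mathbb{H}(\mathbb{R}^2)$- and $L^\infty_\mathbb{H}(\mathbb{R}^2)$-norms of $g$, it is again a QWF, so Theorem~\ref{P5Theo4.8} applies to $\mathcal{S}^{\wedge_1',\wedge_2'}_{\mathbb{H},\tilde g}f$, whose $B$-parameters are $2B_1$ and $2B_2$.

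Next I would introduce the linear bijection $T\colon\mathbb{R}^2\times\mathbb{R}^2\to\mathbb{R}^2\times\mathbb{R}^2$ defined by $T(\boldsymbol x,\boldsymbol\xi)=(2\boldsymbol x,\boldsymbol\xi'_{\boldsymbol x})$. In the ordered coordinates $(x_1,x_2,\xi_1,\xi_2)$ its matrix is block lower-triangular with diagonal blocks $2I_2$ and $I_2$, so $|\det T|=4$, and hence $|T\Omega|=4|\Omega|$ for every measurable $\Omega\subset\mathbb{R}^2\times\mathbb{R}^2$. Combining the amplitude identity above with the substitution $\boldsymbol u=T(\boldsymbol x,\boldsymbol\xi)$ (whose Jacobian is $4$), and using that $T$ is a bijection so that $\chi_{\Omega^c}=\chi_{(T\Omega)^c}\circ T$, one obtains
\begin{align*}
\left\|\chi_{\Omega^c}\,\mathcal{W}^{\wedge_1,\wedge_2}_{\mathbb{H}}(f,g)\right\|^2_{L^2_\mathbb{H}(\mathbb{R}^2\times\mathbb{R}^2)}=4\left\|\chi_{(T\Omega)^c}\,\mathcal{S}^{\wedge_1',\wedge_2'}_{\mathbb{H},\tilde g}f\right\|^2_{L^2_\mathbb{H}(\mathbb{R}^2\times\mathbb{R}^2)},
\end{align*}
where the constant $4$ is the product of the $16$ from the squared amplitude factor and the $\frac14$ from the Jacobian; the case $\Omega=\emptyset$ gives the corresponding identity for the full norms. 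The factor $4$ therefore cancels in the concentration ratio, and $\epsilon$-concentration of $\mathcal{W}^{\wedge_1,\wedge_2}_{\mathbb{H}}(f,g)$ on $\Omega$ is equivalent to $\epsilon$-concentration of $\mathcal{S}^{\wedge_1',\wedge_2'}_{\mathbb{H},\tilde g}f$ on $T\Omega$.

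Finally I would apply Theorem~\ref{P5Theo4.8} to $\mathcal{S}^{\wedge_1',\wedge_2'}_{\mathbb{H},\tilde g}f$: since the product of its $B$-parameters equals $|2B_1\cdot 2B_2|=4|B_1B_2|$, the theorem yields, for every $q>2$,
\begin{align*}
4|\Omega|=|T\Omega|\geq\frac{(2\pi)^2}{4|B_1B_2|}(1-\epsilon^2)^{\frac{q}{q-2}}\left(\frac{q}{2}\right)^{\frac{4}{q-2}},
\end{align*}
and dividing by $4$ gives \eqref{P5eqn52}. Setting $\epsilon=0$, letting $q\to 2^{+}$, and using $\lim_{q\to 2^{+}}(q/2)^{4/(q-2)}=e^2$, one arrives at $\left|\mathrm{supp}\,\mathcal{W}^{\wedge_1,\wedge_2}_{\mathbb{H}}(f,g)\right|\geq\frac{(2\pi)^2 e^2}{16|B_1B_2|}=\frac{(\pi e)^2}{4|B_1B_2|}$, which is \eqref{P5eqn53}. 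The whole argument is routine once Theorems~\ref{P5Theo4.6} and~\ref{P5Theo4.8} are available; the only place that requires genuine care — and the reason the constant is $\frac{1}{16}$ rather than the $\frac14$ one might at first expect — is the simultaneous bookkeeping of the two independent factors of $4$, one arising from $|\det T|$ and the other from $|B_1'B_2'|=4|B_1B_2|$.
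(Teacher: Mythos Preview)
Your proof is correct and follows essentially the same route as the paper: both use Theorem~\ref{P5Theo4.6} to reduce to an $\epsilon$-concentration statement for $\mathcal{S}^{\wedge_1',\wedge_2'}_{\mathbb{H},\tilde g}f$ on the image of $\Omega$ under the linear bijection $(\boldsymbol x,\boldsymbol\xi)\mapsto(2\boldsymbol x,\boldsymbol\xi'_{\boldsymbol x})$ (the paper writes this as $P^{-1}\Omega$), then apply Theorem~\ref{P5Theo4.8} with $|B_1'B_2'|=4|B_1B_2|$ and divide through by the Jacobian factor $4$. Your write-up is in fact a bit more careful than the paper's in verifying that $\tilde g$ is again a QWF and in spelling out why the amplitude factor $16$ and the Jacobian $\tfrac14$ cancel in the concentration ratio.
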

\begin{proof}
From theorem \ref{P5Theo4.6}, it follows that
$$\left|\left(\mathcal{W}^{\wedge_1,\wedge_2}_{\mathbb{H}}(f,g)\right)(\boldsymbol x,\boldsymbol\xi)\right|=4\left|\left(\mathcal{S}^{\wedge_1',\wedge_2'}_{\mathbb{H},\tilde{g}}f \right)(2\boldsymbol x,\boldsymbol\xi'_{\boldsymbol x})\right|,~\boldsymbol\xi'_{\boldsymbol x}=\left(\xi_1-\frac{4A_1x_1}{B_1},\xi_2-\frac{4A_2x_2}{B_2}\right).$$ 
Since $\mathcal{W}^{\wedge_1,\wedge_2}_{\mathbb{H}}(f,g)$ is $\epsilon-$concentrated on $\Omega,$ it can be shown that $\mathcal{S}^{\wedge_1',\wedge_2'}_{\mathbb{H},g}f$ is $\epsilon-$concentrated on $P^{-1}\Omega,$ where $P$ is the non-singular matrix given by
$
\begin{bmatrix}
\frac{1}{2} & 0 & 0 & 0\\
0 & \frac{1}{2} & 0 & 0\\
\frac{4A_1}{B_1} & 0 & 1 & 0\\
0 & \frac{4A_2}{B_2} & 0 & 1
\end{bmatrix}
$. So, by theorem \ref{P5Theo4.8}, we have
\begin{align*}
|P^{-1}\Omega|\geq \frac{(2\pi)^2}{4|B_1B_2|}(1-\epsilon^2)^{\frac{q}{q-2}}\left(\frac{q}{2}\right)^{\left(\frac{4}{q-2}\right)}.
\end{align*}
This gives
\begin{align*}
|\Omega|\geq \frac{(2\pi)^2}{16|B_1B_2|}(1-\epsilon^2)^{\frac{q}{q-2}}\left(\frac{q}{2}\right)^{\left(\frac{4}{q-2}\right)},~\mbox{since}~det(P^{-1})=4.
\end{align*}
This proves equation \eqref{P5eqn52}.
\end{proof}
\subsection{Entropy uncertainty principle}
\begin{theorem}\label{P5Theo4.9}
Let $f\in L^2_\mathbb{H}(\mathbb{R}^2)$ and $g$ be a QWF such that $\|g\|_{L^2_\mathbb{H}(\mathbb{R}^2)}\|f\|_{L^2_\mathbb{H}(\mathbb{R}^2)}=1,$ then
\begin{align}\label{P5eqn54}
\mathcal{E}_{S}(f,g,\wedge_1,\wedge_2)\geq\frac{2}{|B_1B_2|},
\end{align}
where
$\displaystyle\mathcal{E}_{S}(f,g,\wedge_1,\wedge_2)=-\int_{\mathbb{R}^2}\int_{\mathbb{R}^2}\left|\left(\mathcal{S}_{\mathbb{H},g}^{\wedge_1,\wedge_2}f\right)(\boldsymbol x,\boldsymbol\xi)\right|^2\log\left(\left|\left(\mathcal{S}_{\mathbb{H},g}^{\wedge_1,\wedge_2}f\right)(\boldsymbol x,\boldsymbol\xi)\right|^2\right)d\boldsymbol x\boldsymbol d\xi.$
\end{theorem}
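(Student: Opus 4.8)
The plan is to read the Lieb inequality of Lemma~\ref{P5Lemma4.7} as an estimate $G(q)\le\widetilde C(q)$ for an explicit envelope $\widetilde C$ that is \emph{tight} at the exponent $q=2$, and then differentiate at that endpoint — the classical route from a sharp $L^{p}$–$L^{q}$ bound to a logarithmic (entropy) inequality. Write $F=\mathcal S^{\wedge_1,\wedge_2}_{\mathbb H,g}f$ and, for $q\ge 2$, set
\[
G(q)=\int_{\mathbb R^2}\!\int_{\mathbb R^2}\bigl|F(\boldsymbol x,\boldsymbol\xi)\bigr|^{q}\,d\boldsymbol x\,d\boldsymbol\xi .
\]
Since $g\in L^2_{\mathbb H}(\mathbb R^2)\cap L^\infty_{\mathbb H}(\mathbb R^2)$ and $f\in L^2_{\mathbb H}(\mathbb R^2)$, boundedness (Theorem~\ref{P5Theo4.2}$(i)$) together with \eqref{P5eqn37} gives $F\in L^2_{\mathbb H}(\mathbb R^2\times\mathbb R^2)\cap L^\infty_{\mathbb H}(\mathbb R^2\times\mathbb R^2)$, so $F\in L^q_{\mathbb H}(\mathbb R^2\times\mathbb R^2)$ and $G(q)<\infty$ for every $q\in[2,\infty)$.

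\textbf{Step 1 (the envelope).} Using $\tfrac1p+\tfrac1q=1$ and the normalization $\|g\|_{L^2_{\mathbb H}(\mathbb R^2)}\|f\|_{L^2_{\mathbb H}(\mathbb R^2)}=1$, raising \eqref{P5eqn41} to the $q$-th power yields, for $q\ge 2$,
\[
G(q)\ \le\ \frac{(2\pi)^{2-q}}{|B_1B_2|}\Bigl(\tfrac2q\Bigr)^{2}\ \le\ \frac{4}{|B_1B_2|\,q^{2}}\ =:\ \widetilde C(q),
\]
where the last step uses $(2\pi)^{2-q}\le 1$ for $q\ge 2$. On the other hand, \eqref{P5eqn37} with $\|f\|\,\|g\|=1$ gives $G(2)=\|F\|^2_{L^2_{\mathbb H}(\mathbb R^2\times\mathbb R^2)}=\tfrac1{|B_1B_2|}=\widetilde C(2)$. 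Hence $\Phi:=\widetilde C-G\ge 0$ on $[2,\infty)$ with $\Phi(2)=0$, so $\Phi$ attains a minimum at $q=2$ over $[2,\infty)$ and therefore $\Phi'_{+}(2)\ge 0$, i.e. $G'_{+}(2)\le \widetilde C'(2)=-\,1/|B_1B_2|$ (the elementary computation $\widetilde C'(2)=-\tfrac{8}{|B_1B_2|\,2^{3}}$).

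\textbf{Step 2 (identify $G'_{+}(2)$ with the entropy).} For $t>0$ the function $s\mapsto\frac{t^{s}-1}{s}$ decreases to $\log t$ as $s\downarrow 0$ (convexity of the exponential), so the integrands $|F|^{2}\,\frac{|F|^{q-2}-1}{q-2}$ decrease, as $q\downarrow 2$, to $|F|^{2}\log|F|$. If $\int\!\int|F|^{2}\log|F|=-\infty$ then $\mathcal E_{S}(f,g,\wedge_1,\wedge_2)=+\infty$ and \eqref{P5eqn54} is trivial; otherwise $|F|^{2}\log|F|\in L^{1}$ (on $\{|F|\ge1\}$ it is bounded by $\|F\|_{L^\infty}^{2}\log\|F\|_{L^\infty}$ on a set of finite measure since $F\in L^2$, the only genuine issue being $\int_{\{|F|<1\}}|F|^{2}\log(1/|F|)<\infty$), and monotone convergence gives $G'_{+}(2)=\int\!\int|F|^{2}\log|F|=\tfrac12\int\!\int|F|^{2}\log(|F|^{2})$. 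Combining with Step 1, $\tfrac12\int\!\int|F|^{2}\log(|F|^{2})\le -1/|B_1B_2|$, and multiplying by $-2$ yields exactly $\mathcal E_{S}(f,g,\wedge_1,\wedge_2)=-\int\!\int|F|^{2}\log(|F|^{2})\ge 2/|B_1B_2|$, which is \eqref{P5eqn54}.

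\textbf{Main obstacle.} Everything apart from Step 2 is bookkeeping; the delicate point is the endpoint differentiation of $q\mapsto G(q)$ at $q=2$ and the sign of $\log|F|$. The monotonicity of $s\mapsto (t^{s}-1)/s$ and the dichotomy on $\int_{\{|F|<1\}}|F|^{2}\log(1/|F|)$ handle this cleanly — either the entropy integral diverges (and the bound is vacuous) or it is finite and monotone/dominated convergence legitimizes differentiating under the integral sign. If one prefers to avoid asserting differentiability, the same conclusion follows from Fatou's lemma applied to the nonnegative functions $|F|^{2}\frac{|F|^{q_n-2}-1}{q_n-2}-|F|^{2}\log|F|$ along any sequence $q_n\downarrow 2$, using $G(q_n)\le\widetilde C(q_n)$ and $G(2)=\widetilde C(2)$ to bound $\liminf_n\frac{G(q_n)-G(2)}{q_n-2}$ from above by $\widetilde C'(2)$.
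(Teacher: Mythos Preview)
Your proof is correct and follows essentially the same route as the paper's: both exploit that the Lieb inequality \eqref{P5eqn41} is an envelope for $G(q)=\int\!\int|F|^{q}$ which is \emph{tight} at $q=2$ (by \eqref{P5eqn37} under the normalization), and then differentiate at that endpoint, using $1+\lambda\log a\le a^{\lambda}$ (equivalently, the monotonicity of $s\mapsto(t^{s}-1)/s$) together with dominated/monotone convergence to identify the one-sided derivative with $-\tfrac12\,\mathcal E_{S}$. The only differences are cosmetic: the paper parametrizes by $\lambda=(q-2)/2$, writes the difference quotient $R(\lambda)=\frac{G(2)-G(2+2\lambda)}{\lambda}$, bounds it below by $\frac{2+\lambda}{|B_1B_2|(1+\lambda)^2}$ and passes to the limit, whereas you name the envelope $\widetilde C(q)=4/(|B_1B_2|q^{2})$ and compare $G'_{+}(2)$ to $\widetilde C'(2)=-1/|B_1B_2|$ directly.
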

\begin{proof}
Define 
\begin{align}\label{P5eqn55}
I(f,g,\wedge_1,\wedge_2,q)=\int_{\mathbb{R}^2}\int_{\mathbb{R}^2}\left|\left(\mathcal{S}_{\mathbb{H},g}^{\wedge_1,\wedge_2}f\right)(\boldsymbol x,\boldsymbol\xi)\right|^qd\boldsymbol xd\boldsymbol\xi.
\end{align}
Then using  \eqref{P5eqn55} in \eqref{P5eqn37}, we get
\begin{align}\label{P5eqn56}
I(f,g,\wedge_1,\wedge_2,2)=\frac{1}{|B_1B_2|}.
\end{align}
Also, from \eqref{P5eqn41} and \eqref{P5eqn56}, it can be shown that
\begin{align}\label{P5eqn57}
I(f,g,\wedge_1,\wedge_2,q)\leq \frac{(2\pi)^{2-q}}{|B_1B_2|}\left(\frac{2}{q}\right)^2.
\end{align}
Define, for $\lambda>0,$
\begin{align*}
R(\lambda)=\frac{I(f,g,\wedge_1,\wedge_2,2)-I(f,g,\wedge_1,\wedge_2,2+2\lambda).}{\lambda}
\end{align*}
Then
\begin{align*}
R(\lambda)
&\geq\frac{1}{\lambda}\left\{\frac{1}{|B_1B_2|}-\frac{(2\pi)^{-2\lambda}}{|B_1B_2|}\left(\frac{1}{1+\lambda}\right)^2\right\}\\
&>\frac{1}{\lambda|B_1B_2|}\left\{1-\frac{1}{(1+\lambda^2)}\right\}
\end{align*}
i.e., 
\begin{align}\label{P5eqn58}
R(\lambda)>\frac{2+\lambda}{|B_1B_2|(1+\lambda)^2}.
\end{align}
Assume that $\mathcal{E}_{S}(f,g,\wedge_1,\wedge_2)<\infty,$ otherwise \eqref{P5eqn54} is obvious.\\
Now using the inequality $1+\lambda \log a\leq a^\lambda,~\lambda>0,$ we have
\begin{align}\label{P5eqn59}
0\leq\frac{1}{\lambda}\left|\left(\mathcal{S}_{\mathbb{H},g}^{\wedge_1,\wedge_2}f\right)(\boldsymbol x,\boldsymbol\xi)\right|^2\left(1-\left|\left(\mathcal{S}_{\mathbb{H},g}^{\wedge_1,\wedge_2}f\right)(\boldsymbol x,\boldsymbol\xi)\right|^{2\lambda}\right)\leq-\left|\left(\mathcal{S}_{\mathbb{H},g}^{\wedge_1,\wedge_2}f\right)(\boldsymbol x,\boldsymbol\xi)\right|^2\log\left(\left|\left(\mathcal{S}_{\mathbb{H},g}^{\wedge_1,\wedge_2}f\right)(\boldsymbol x,\boldsymbol\xi)\right|^2\right).
\end{align}
Since, $-\left|\left(\mathcal{S}_{\mathbb{H},g}^{\wedge_1,\wedge_2}f\right)(\boldsymbol x,\boldsymbol\xi)\right|^2\log\left(\left|\left(\mathcal{S}_{\mathbb{H},g}^{\wedge_1,\wedge_2}f\right)(\boldsymbol x,\boldsymbol\xi)\right|^2\right)$ is integrable, in view of equation \eqref{P5eqn59}, using Lebesgue dominated convergence theorem, we have
\begin{align}\label{P5eqn60}
\lim_{\lambda\rightarrow 0+}R(\lambda)
&=\int_{\mathbb{R}^2}\int_{\mathbb{R}^2}\lim_{\lambda\rightarrow 0+}\left\{\frac{1}{\lambda}\left|\left(\mathcal{S}_{\mathbb{H},g}^{\wedge_1,\wedge_2}f\right)(\boldsymbol x,\boldsymbol\xi)\right|^2\left(1-\left|\left(\mathcal{S}_{\mathbb{H},g}^{\wedge_1,\wedge_2}f\right)(\boldsymbol x,\boldsymbol\xi)\right|^{2\lambda}\right)\right\}d\boldsymbol xd\boldsymbol\xi\notag\\
&=\mathcal{E}_{S}(f,g,\wedge_1,\wedge_2).
\end{align} 
Again from \eqref{P5eqn58}, we get
\begin{align}\label{P5eqn61}
\lim_{\lambda\rightarrow 0+}R(\lambda)\geq\frac{2}{|B_1B_2|}.
\end{align}
Thus from \eqref{P5eqn60} and \eqref{P5eqn61}, we have equation \eqref{P5eqn54}. This completes the proof.
\end{proof}
\begin{corollary}
Let $f\in L^2_\mathbb{H}(\mathbb{R}^2)$ and $g$ be a QWF such that $\|g\|_{L^2_\mathbb{H}(\mathbb{R}^2)}\|f\|_{L^2_\mathbb{H}(\mathbb{R}^2)}=1,$ then
\begin{align}\label{P5eqn62}
\mathcal{E}_{A}(f,g,\wedge_1,\wedge_2)\geq\frac{2}{|B_1B_2|},
\end{align}
where
$\displaystyle\mathcal{E}_{A}(f,g,\wedge_1,\wedge_2)=-\int_{\mathbb{R}^2}\int_{\mathbb{R}^2}\left|\left(\mathcal{A}_\mathbb{H}^{\wedge_1,\wedge_2}(f,g)\right)(\boldsymbol x,\boldsymbol\xi)\right|^2\log\left(\left|\left(\mathcal{A}_\mathbb{H}^{\wedge_1,\wedge_2}(f,g)\right)(\boldsymbol x,\boldsymbol\xi)\right|^2\right)d\boldsymbol x\boldsymbol d\xi.$
\end{corollary}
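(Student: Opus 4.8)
The plan is to transport Theorem \ref{P5Theo4.9} along the pointwise identity between the QQPAF and the STQQPFT provided by Theorem \ref{P5Theo4.5}, exactly as was done for the support estimates in the corollary following Theorem \ref{P5Theo4.8}. The first observation is that the factors $\phi^i_{\wedge_1,-\frac{1}{2}}(x_1,\xi_1)$ and $\phi^j_{\wedge_2,-\frac{1}{2}}(x_2,\xi_2)$ occurring in Theorem \ref{P5Theo4.5} are of the form $e^{-i(\cdot)}$ and $e^{-j(\cdot)}$ with real exponents, hence are of unit modulus. Taking absolute values in Theorem \ref{P5Theo4.5} therefore gives
\begin{align*}
\left|\left(\mathcal{A}^{\wedge_1,\wedge_2}_{\mathbb{H}}(f,g)\right)(\boldsymbol x,\boldsymbol\xi)\right|=\left|\left(\mathcal{S}^{\wedge_1,\wedge_2}_{\mathbb{H},g}f\right)(\boldsymbol x,\boldsymbol\xi'_{\boldsymbol x})\right|,\qquad \boldsymbol\xi'_{\boldsymbol x}=\left(\xi_1-\tfrac{A_1x_1}{B_1},\xi_2-\tfrac{A_2x_2}{B_2}\right).
\end{align*}

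Next I would perform, for each fixed $\boldsymbol x$, the affine change of variables $\boldsymbol\eta=\boldsymbol\xi'_{\boldsymbol x}$ inside the integral defining $\mathcal{E}_A$. On $\mathbb{R}^2\times\mathbb{R}^2$ this is the linear bijection whose matrix is the non-singular $P$ used in the corollary after Theorem \ref{P5Theo4.8}, and it has $|\det P|=1$, so it is measure preserving; since the integrand $t\mapsto -t\log t$ is applied to $|\mathcal{A}^{\wedge_1,\wedge_2}_{\mathbb{H}}(f,g)|^2=|\mathcal{S}^{\wedge_1,\wedge_2}_{\mathbb{H},g}f(\cdot,\boldsymbol\xi'_{\boldsymbol\cdot})|^2$, the substitution yields
\begin{align*}
\mathcal{E}_{A}(f,g,\wedge_1,\wedge_2)=\mathcal{E}_{S}(f,g,\wedge_1,\wedge_2).
\end{align*}
The same change of variables (or directly \eqref{P5eqn37}) also gives $\|\mathcal{A}^{\wedge_1,\wedge_2}_{\mathbb{H}}(f,g)\|^2_{L^2_\mathbb{H}(\mathbb{R}^2\times\mathbb{R}^2)}=\frac{1}{|B_1B_2|}\|f\|^2_{L^2_\mathbb{H}(\mathbb{R}^2)}\|g\|^2_{L^2_\mathbb{H}(\mathbb{R}^2)}=\frac{1}{|B_1B_2|}$ under the normalization $\|f\|_{L^2_\mathbb{H}(\mathbb{R}^2)}\|g\|_{L^2_\mathbb{H}(\mathbb{R}^2)}=1$, so the two entropies are being compared under consistent hypotheses.

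Finally, Theorem \ref{P5Theo4.9}, whose hypothesis $\|g\|_{L^2_\mathbb{H}(\mathbb{R}^2)}\|f\|_{L^2_\mathbb{H}(\mathbb{R}^2)}=1$ is exactly the one assumed here, gives $\mathcal{E}_{S}(f,g,\wedge_1,\wedge_2)\geq\frac{2}{|B_1B_2|}$, and combining this with the identity above proves \eqref{P5eqn62}. The only point needing care is the justification of the substitution, namely that $\boldsymbol\xi\mapsto\boldsymbol\xi'_{\boldsymbol x}$ is for each $\boldsymbol x$ an affine bijection of $\mathbb{R}^2$ with unit Jacobian, so that the full map on $\mathbb{R}^2\times\mathbb{R}^2$ has Jacobian one; this is the identical computation already carried out in the corollary after Theorem \ref{P5Theo4.8}, so I do not expect any genuine obstacle — the corollary is essentially a transport of the entropy bound along a measure-preserving map.
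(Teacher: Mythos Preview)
Your proposal is correct and follows essentially the same route as the paper: take absolute values in Theorem \ref{P5Theo4.5} to kill the unimodular phase factors, make the unit-Jacobian substitution $\boldsymbol\xi\mapsto\boldsymbol\xi'_{\boldsymbol x}$ to get $\mathcal{E}_A=\mathcal{E}_S$, and then invoke Theorem \ref{P5Theo4.9}. You are in fact slightly more explicit than the paper about the change of variables and its Jacobian.
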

\begin{proof}
From theorem \ref{P5Theo4.5}, it follows that
$$\left|\left(\mathcal{A}^{\wedge_1,\wedge_2}_{\mathbb{H}}(f,g)\right)(\boldsymbol x,\boldsymbol\xi)\right|=\left|\left(\mathcal{S}^{\wedge_1,\wedge_2}_{\mathbb{H},g}f \right)(\boldsymbol x,\boldsymbol\xi'_{\boldsymbol x})\right|,~\boldsymbol\xi'_{\boldsymbol x}=\left(\xi_1-\frac{A_1x_1}{B_1},\xi_2-\frac{A_2x_2}{B_2}\right).$$
So, we have 
\begin{align*}
\mathcal{E}_{A}(f,g,\wedge_1,\wedge_2)
&=-\int_{\mathbb{R}^2}\int_{\mathbb{R}^2}\left|\left(\mathcal{S}^{\wedge_1,\wedge_2}_{\mathbb{H},g}f \right)(\boldsymbol x,\boldsymbol\xi'_{\boldsymbol x})\right|^2\log\left(\left|\left(\mathcal{S}^{\wedge_1,\wedge_2}_{\mathbb{H},g}f \right)(\boldsymbol x,\boldsymbol\xi'_{\boldsymbol x})\right|^2\right)d\boldsymbol x\boldsymbol d\xi\\
&=\mathcal{E}_{S}(f,g,\wedge_1,\wedge_2).
\end{align*}
Thus using theorem \ref{P5Theo4.9}, we have equation \eqref{P5eqn62}.
\end{proof}
\begin{corollary}
Let $f\in L^2_\mathbb{H}(\mathbb{R}^2)$ and $g$ be a QWF such that $\|g\|_{L^2_\mathbb{H}(\mathbb{R}^2)}\|f\|_{L^2_\mathbb{H}(\mathbb{R}^2)}=1.$ Then
\begin{align}\label{P5eqn54}
\mathcal{E}_{W}(f,g,\wedge_1,\wedge_2)\geq\frac{2-\log 16}{|B_1B_2|},
\end{align}
where
$\displaystyle\mathcal{E}_{W}(f,g,\wedge_1,\wedge_2)=-\int_{\mathbb{R}^2}\int_{\mathbb{R}^2}\left|\left(\mathcal{W}_\mathbb{H}^{\wedge_1,\wedge_2}(f,g)\right)(\boldsymbol x,\boldsymbol\xi)\right|^2\log\left(\left|\left(\mathcal{W}_\mathbb{H}^{\wedge_1,\wedge_2}(f,g)\right)(\boldsymbol x,\boldsymbol\xi)\right|^2\right)d\boldsymbol x\boldsymbol d\xi.$
\end{corollary}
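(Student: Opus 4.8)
The plan is to mirror the proof of the entropy uncertainty principle for the QQPAF (the previous corollary), using the explicit relation between the QQPWVD and the STQQPFT established in Theorem~\ref{P5Theo4.6}. First I would recall that Theorem~\ref{P5Theo4.6} gives
\[
\left|\left(\mathcal{W}^{\wedge_1,\wedge_2}_{\mathbb{H}}(f,g)\right)(\boldsymbol x,\boldsymbol\xi)\right|
=4\left|\left(\mathcal{S}^{\wedge_1',\wedge_2'}_{\mathbb{H},\tilde g}f\right)(2\boldsymbol x,\boldsymbol\xi'_{\boldsymbol x})\right|,
\qquad
\boldsymbol\xi'_{\boldsymbol x}=\left(\xi_1-\tfrac{4A_1x_1}{B_1},\,\xi_2-\tfrac{4A_2x_2}{B_2}\right),
\]
since the unimodular phase factors $\psi^i_{\wedge_1}$ and $\psi^j_{\wedge_2}$ drop out under the modulus. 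Note that $\tilde g(\boldsymbol t)=g(-\boldsymbol t)$ is still a QWF with $\|\tilde g\|_{L^2_\mathbb{H}(\mathbb{R}^2)}=\|g\|_{L^2_\mathbb{H}(\mathbb{R}^2)}$, so the hypothesis $\|\tilde g\|\,\|f\|=1$ is preserved, and the parameters $\wedge_l'=(4A_l,2B_l,C_l,2D_l,E_l)$ have second entry $2B_l$, hence $|B_1'B_2'|=4|B_1B_2|$.

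Next I would substitute this identity into the definition of $\mathcal{E}_W$ and change variables. Writing $\boldsymbol u = 2\boldsymbol x$ and absorbing the shift $\boldsymbol\xi \mapsto \boldsymbol\xi'_{\boldsymbol x}$ (which for fixed $\boldsymbol x$ is a translation in $\boldsymbol\xi$, of Jacobian $1$; combined with $\boldsymbol u=2\boldsymbol x$ the total Jacobian of $(\boldsymbol x,\boldsymbol\xi)\mapsto(\boldsymbol u,\boldsymbol\xi'_{\boldsymbol x})$ is $4$, so $d\boldsymbol x\,d\boldsymbol\xi = \tfrac14\,d\boldsymbol u\,d\boldsymbol\xi'$), I get
\[
\mathcal{E}_W(f,g,\wedge_1,\wedge_2)
=-\frac14\int_{\mathbb{R}^2}\int_{\mathbb{R}^2}
16\left|\left(\mathcal{S}^{\wedge_1',\wedge_2'}_{\mathbb{H},\tilde g}f\right)(\boldsymbol u,\boldsymbol\eta)\right|^2
\log\!\left(16\left|\left(\mathcal{S}^{\wedge_1',\wedge_2'}_{\mathbb{H},\tilde g}f\right)(\boldsymbol u,\boldsymbol\eta)\right|^2\right)d\boldsymbol u\,d\boldsymbol\eta.
\]
Expanding $\log(16\,|\cdot|^2)=\log 16+\log|\cdot|^2$ splits this into two pieces: a multiple of $\mathcal{E}_S(f,\tilde g,\wedge_1',\wedge_2')$ and a multiple of $\int\int |\mathcal{S}^{\wedge_1',\wedge_2'}_{\mathbb{H},\tilde g}f|^2$, which by the Parseval-type identity \eqref{P5eqn37} (with $|B_1'B_2'|=4|B_1B_2|$) equals $\tfrac{1}{4|B_1B_2|}$. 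Carefully collecting the constants $4$, $16$ and the prefactor gives $\mathcal{E}_W(f,g,\wedge_1,\wedge_2)=\tfrac{1}{|B_1B_2|}\mathcal{E}_S(f,\tilde g,\wedge_1',\wedge_2') - \tfrac{\log 16}{|B_1B_2|}$ (the bookkeeping should produce exactly the normalization so that the $\mathcal{E}_S$ term carries coefficient $1/|B_1B_2|$).

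Finally I would apply Theorem~\ref{P5Theo4.9} to the pair $(f,\tilde g)$ with parameters $(\wedge_1',\wedge_2')$: since $\|\tilde g\|\,\|f\|=1$, that theorem gives $\mathcal{E}_S(f,\tilde g,\wedge_1',\wedge_2')\geq \tfrac{2}{|B_1'B_2'|}=\tfrac{2}{4|B_1B_2|}=\tfrac{1}{2|B_1B_2|}$; plugging this in yields $\mathcal{E}_W(f,g,\wedge_1,\wedge_2)\geq \tfrac{2-\log 16}{|B_1B_2|}$ after the constants line up. The main obstacle is purely the bookkeeping of constants: one must track the factor $4$ from $|(\mathcal{W}\cdots)| = 4|(\mathcal{S}\cdots)|$, the factor $4$ in the change-of-variables Jacobian, the replacement $|B_1B_2|\to 4|B_1B_2|$ coming from $\wedge_l'$, and the additive $\log 16$ from the squared amplitude $16|\cdot|^2$ inside the logarithm — it is easy to misplace one of these and get the wrong additive constant, so I would double-check by specializing to the classical ($\mathbb{C}$-valued, standard WVD) case where the analogous inequality is known.
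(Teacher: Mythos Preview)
Your approach is exactly the one the paper uses: pass from $\mathcal{W}^{\wedge_1,\wedge_2}_{\mathbb{H}}(f,g)$ to $\mathcal{S}^{\wedge_1',\wedge_2'}_{\mathbb{H},\tilde g}f$ via Theorem~\ref{P5Theo4.6}, change variables, split the logarithm, use \eqref{P5eqn37} for the mass term, and invoke Theorem~\ref{P5Theo4.9} with parameters $\wedge_l'$.

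There is, however, a genuine bookkeeping slip in the line you flagged as delicate. After the change of variables you correctly arrive at
\[
\mathcal{E}_W(f,g,\wedge_1,\wedge_2)
=-4\int_{\mathbb{R}^2}\int_{\mathbb{R}^2}\bigl|\mathcal{S}^{\wedge_1',\wedge_2'}_{\mathbb{H},\tilde g}f\bigr|^2
\log\!\bigl(16\,\bigl|\mathcal{S}^{\wedge_1',\wedge_2'}_{\mathbb{H},\tilde g}f\bigr|^2\bigr)\,d\boldsymbol u\,d\boldsymbol\eta,
\]
and splitting $\log(16|\cdot|^2)=\log 16+\log|\cdot|^2$ together with \eqref{P5eqn37} (which gives $\int|\mathcal S|^2=1/(4|B_1B_2|)$) yields
\[
\mathcal{E}_W(f,g,\wedge_1,\wedge_2)= -\frac{\log 16}{|B_1B_2|}+4\,\mathcal{E}_S(f,\tilde g,\wedge_1',\wedge_2').
\]
The coefficient in front of $\mathcal{E}_S$ is $4$, not $1/|B_1B_2|$ as you wrote; the factor $1/|B_1B_2|$ only enters through the \emph{lower bound} on $\mathcal{E}_S$ from Theorem~\ref{P5Theo4.9}. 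With your stated formula the final inequality would come out as $\tfrac{1}{2|B_1B_2|^2}-\tfrac{\log 16}{|B_1B_2|}$, which is wrong. Using the correct coefficient $4$ and $\mathcal{E}_S(f,\tilde g,\wedge_1',\wedge_2')\geq 2/|B_1'B_2'|=1/(2|B_1B_2|)$ gives exactly $(2-\log 16)/|B_1B_2|$, as desired.
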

\begin{proof}
From theorem \ref{P5Theo4.6}, it follows that
$$\left|\left(\mathcal{W}^{\wedge_1,\wedge_2}_{\mathbb{H}}(f,g)\right)(\boldsymbol x,\boldsymbol\xi)\right|=4\left|\left(\mathcal{S}^{\wedge_1',\wedge_2'}_{\mathbb{H},\tilde{g}}f \right)(2\boldsymbol x,\boldsymbol\xi'_{\boldsymbol x})\right|,~\boldsymbol\xi'_{\boldsymbol x}=\left(\xi_1-\frac{4A_1x_1}{B_1},\xi_2-\frac{4A_2x_2}{B_2}\right).$$ 
So, we have
\begin{align*}
\mathcal{E}_{W}(f,g,\wedge_1,\wedge_2)
&=-16\int_{\mathbb{R}^2}\int_{\mathbb{R}^2}\left|\left(\mathcal{S}^{\wedge_1',\wedge_2'}_{\mathbb{H},\tilde{g}}f \right)(2\boldsymbol x,\boldsymbol\xi'_{\boldsymbol x})\right|^2\log\left(16\left|\left(\mathcal{S}^{\wedge_1',\wedge_2'}_{\mathbb{H},\tilde{g}}f \right)(2\boldsymbol x,\boldsymbol\xi'_{\boldsymbol x})\right|^2\right)d\boldsymbol x\boldsymbol d\xi\\
&=-4\int_{\mathbb{R}^2}\int_{\mathbb{R}^2}\left|\left(\mathcal{S}^{\wedge_1',\wedge_2'}_{\mathbb{H},\tilde{g}}f \right)(\boldsymbol x,\boldsymbol\xi)\right|^2\log\left(16\left|\left(\mathcal{S}^{\wedge_1',\wedge_2'}_{\mathbb{H},\tilde{g}}f \right)(\boldsymbol x,\boldsymbol\xi)\right|^2\right)d\boldsymbol x\boldsymbol d\xi\\
&=-\frac{4\log 16}{|4B_1B_2|}-4\int_{\mathbb{R}^2}\int_{\mathbb{R}^2}\left|\left(\mathcal{S}^{\wedge_1',\wedge_2'}_{\mathbb{H},\tilde{g}}f \right)(\boldsymbol x,\boldsymbol\xi)\right|^2\log\left(\left|\left(\mathcal{S}^{\wedge_1',\wedge_2'}_{\mathbb{H},\tilde{g}}f \right)(\boldsymbol x,\boldsymbol\xi)\right|^2\right)d\boldsymbol x\boldsymbol d\xi\\
=-\frac{\log 16}{|B_1B_2|}+4\mathcal{E}_{S}(f,\tilde{g},\wedge_1',\wedge_2').
\end{align*}
Therefore, using theorem \ref{P5Theo4.9}, we have
\begin{align*}
\mathcal{E}_{W}(f,g,\wedge_1,\wedge_2)\geq\frac{2-\log 16}{|B_1B_2|}.
\end{align*}
This finishes the proof.
\end{proof}
\section{Conclusions}
In this article, we have studied the Parseval's  identity, sharp Hausdorff-Young inequality for the two sided QQPFT of quaternion valued functions.  Based on the sharp Hausdorff-Young inequality we have obtained the sharper R\`enyi entropy UP for the propose QPFT of quaternion valued functions. We have extended the STQPFT of complex valued functions to the functions of quaternion valued and studied the properties like boundedness, linearity, translation and scaling. We have also obtained the inner product relation and inversion formula for the proposed two sided STQQPFT. We have also obtained the relations of STQQPFT with that of the QQPAF and the QQPWVD of the quaternion valued function associated with the  QQPFT. We have obtained the sharper version of the Lieb's and entropy UPs for all these three transform based on the sharp Hausdorff-Young inequality for the QQPFT.

\section{Acknowledgement}
This work is partially supported by UGC File No. 16-9(June 2017)/2018(NET/CSIR), New Delhi, India.
\bibliography{P5MasterB5_STQQPFT}
\bibliographystyle{plain}
\end{document}